\journal{Nuclear Physics B}
 \newtheorem{theorem}{Theorem}[section]
 \newenvironment{proof}[1][Proof]{\begin{trivlist}
 \item[\hskip \labelsep {\bfseries #1}]}{\end{trivlist}}
 \newcommand{\Or}{\mathrm{O}}
 \renewcommand{\Re}{\mathrm{Re}}
 \renewcommand{\Im}{\mathrm{Im}}
\begin{document}

\begin{frontmatter}

\title{On conjectured local generalizations of anisotropic scale
  invariance and their implications}
\author[UDE,Minsk]{S. Rutkevich},
\author[UDE]{H. W. Diehl},
\author[UDE,Lviv]{M.~A.\ Shpot}

\address[UDE]{Fakult\"at f\"ur Physik, Universit{\"a}t Duisburg-Essen,
D-47048 Duisburg, Germany}
\address[Minsk]{Institute of Solid State and
   Semiconductor Physics,
 220072 Minsk, Belarus\thanksref{ppad}}
\address[Lviv]{Institute for Condensed Matter Physics, 79011 Lviv, Ukraine\thanksref{ppad}}
\thanks[ppad]{Present and permanent address}
\begin{abstract}
The theory of generalized local scale invariance of strongly anisotropic scale invariant systems proposed some time ago by Henkel [Nucl.\ Phys.\ B \textbf{641}, 405 (2002)] is examined.  The case of so-called type-I systems is considered. This was conjectured to be realized by systems at $m$-axial Lifshitz points; in support of this
claim, scaling functions of two-point cumulants at the uniaxial Lifshitz point of the three-dimensional ANNNI model were predicted on the basis of this theory and found to be in excellent agreement with Monte Carlo results [Phys.\ Rev.\ Lett.\ \textbf{87}, 125702 (2001)]. The consequences of the conjectured invariance equations are investigated, with emphasis put on this case. It is shown that fewer solutions than anticipated by Henkel generally exist and contribute to the scaling functions if these equations are assumed to hold for all (positive and negative) values of the $d$-dimensional space (or space time) coordinates $(t,\bm{r})\in \mathbb{R}\times\mathbb{R}^{d-1}$. Specifically, a single rather than two independent physically acceptable solutions exists in the case relevant for the mentioned fit of Monte Carlo data for the ANNNI model. Renormalization-group improved perturbation theory in $4+m/2-\epsilon$ dimensions is used to determine the scaling functions of the order-parameter and energy-density two-point cumulants in momentum space to two-loop order.  The results are mathematically incompatible with Henkel's predictions except in free-field-theory cases. However, the scaling function of the energy-density cumulant we obtain for $m=1$ upon extrapolation of our two-loop RG results to $d=3$ differs numerically little from that of an effective free field theory.
\end{abstract}

\begin{keyword}
local scale invariance\sep anisotropic critical behavior\sep Lifshitz point

\PACS
\end{keyword}
\end{frontmatter}
\tableofcontents
 \bibliographystyle{elsart-num}

\section{Introduction}
  \label{sec:intro}
It is a fact, well established by plenty of experiments and theoretical works, that the large-scale physics of systems at critical points can be described by scale-invariant continuum field theories \cite{DG76,Fis83,ZJ96}. In many cases the associated probability distributions are besides scale invariant also translation and rotation invariant when expressed in appropriate variables.  During the past 25 years it has  become widely appreciated that in those cases where long-range interactions are absent or may be ignored, such invariance under translations, rotations, and scale transformations usually entails the invariance under a larger symmetry group --- that of conformal transformations \cite{CCJ70,Car87,Car90,Gin90,DiFMS97}.

The application of conformal invariance to critical phenomena started long ago with a short note by Polyakov \cite{Pol70}. More detailed investigations \cite{WG74,Sch76} followed soon, but the issue did not receive much attention until the seminal work of Belavin et al \cite{BPZ84a,BPZ84} on two-dimensional conformal field theories. This revealed the enormous potential of conformal invariance, triggering an outburst of research activities, which in turn led to impressive success in many other fields, such as bulk, finite-size \cite{Bar83,ISZ88,Hen99}, and boundary critical behavior \cite{Car87,Car90,Bin83,Die86a,Car84,Die97}, polymer physics \cite{Eis93}, quantum impurity problems \cite{FLS95}, and string theories \cite{Gin90,DiFMS97,Kak91}.

Conformal transformations locally correspond to combinations of translations, rotations, and scale transformations with a position-dependent scale factor $\ell(\bm{x})$ that involve no shear. Thus, conformal invariance may be viewed as the generalization of a global symmetry to a local one.

There exists a wealth of systems in nature that exhibit global scale invariance of a distinct and more general kind, called \emph{anisotropic scale invariance} (ASI). Its characteristic feature is that an anisotropic rescaling of the space separations (or spacetime separations in time-dependent phenomena) along various axes by at least two (or more) distinct powers of a scale factor $\ell$ is required to make such systems statistically self-similar. Familiar examples are uniaxial dipolar ferro- and antiferromagnets at their critical points (see, e.g., Refs.~\cite{Aha76} and  \cite[chapter~27.5]{ZJ96}), systems at Lifshitz points \cite{Hor80,Sel92,Die02}, and dynamical critical phenomena near and away from thermal equilibrium \cite{HH77,SZ95}, among them driven diffusive systems \cite{SZ95}, stochastic surface-growth processes \cite{Kru97}, directed percolation, and spreading processes \cite{Hin00,Odo04}.

In the case of static critical behavior at an $m$-axial Lifshitz point (LP) in $d$ space dimensions, the position vector $\bm{x}=(x_1,\dotsc,x_d)$ of Cartesian coordinates $x_\gamma$ can be decomposed as $\bm{x}=(\bm{z},\bm{r})$ into the $m$- and ($d-m$)-dimensional components $\bm{z}=(x_\alpha)$ and $\bm{r}=(x_\beta)$ with $\alpha=1,\dotsc,m$ and $\beta=m+1,\dotsc,d$, respectively. ASI is encoded in the transformation property
\begin{equation}
    \label{eq:ASIO}
  \mathcal{O}_j(\ell^\theta \bm{z},\ell\,\bm{r})
=\ell^{-\Delta_j}\,\mathcal{O}_j(\bm{z},\bm{r})
\end{equation}
of local scaling operators $\mathcal{O}_j(\bm{x})$ with scaling dimensions $\Delta_j$, where $\theta$, the anisotropy exponent, differs from $1$. The obvious analog for time-dependent phenomena involving an isotropic rescaling of distances but distinct rescaling of time reads
\begin{equation}
   \label{eq:ASIOst}
   \mathcal{O}_j(\ell^{\mathfrak{z}} t,\ell\,
   \bm{r})=\ell^{-\Delta_j}\,\mathcal{O}_j(t,\bm{r})\;,
\end{equation}
where $\mathfrak{z}$ is the so-called dynamic critical exponent. As a consequence of these properties, the multi-point correlation functions of such operators take scaling forms. Consider, for example, the case of Eq.~(\ref{eq:ASIOst}), and assume that the systems are translation invariant in both time and space, as well as rotation invariant. Together with the presumed ASI, these properties imply that the two-point cumulant function of two such operators $\mathcal{O}_j$ and $\mathcal{O}_k$ can be written as
\begin{equation}
   \label{eq:CorSc}
\langle \mathcal{O}_j( t,\bm{ r})\,\mathcal{O}_k(t',\bm{ r}')\rangle^{\text{cum}}=
|\bm{ r}-\bm{ r}'|^{-\Delta_j-\Delta_k}\;
\Omega_{jk}{\left(\frac{t-t'}{|\bm{ r}-\bm{ r}'|^\mathfrak{z}}\right)}\;.
\end{equation}

Given the enormous success the use of conformal invariance has had in the study of isotropic critical behavior, a natural question to ask is whether global ASI, in conjunction with appropriate other global symmetries, such as translation and rotation invariance, would again entail more powerful local symmetries that impose useful constraints on the scaling functions or even determine them completely.

This idea has been pursued for many years, in particular, by Henkel who proposed a phenomenological approach termed ``local scale invariance (LSI)'' in a series of papers \cite{Hen99,Hen97,Hen02} and applied it to a variety of systems exhibiting ASI. Postulating a set of ``axioms of local scale invariance'', he suggested that the two-point scaling functions $\Omega(u)$ of various systems exhibiting ASI should satisfy differential equations.  According to him there should be two classes of local generalizations of ASI: The first, denoted type I, should apply to anisotropic scale-invariant equilibrium systems; the second, type II, to time-dependent scale-invariant phenomena. As a nontrivial example of type II, the relaxational behavior of systems representing the dynamic universality class of the so-called stochastic model A \cite{HH77}, following a quench from an initial disordered state to the critical point, was suggested. Subsequent analytical calculations based on the $\epsilon=4-d$ expansion for model A \cite{CG02a,CG02b} yielded definite, albeit small, violations of Henkel's predictions at two-loop order.

As nontrivial realizations of his type I of generalized ASI, Henkel suggested  equilibrium systems at $m$-axial LP. To check the predictions of his theory, Pleimling and him performed extensive Monte Carlo simulations \cite{PH01} for the two-point correlation function of the $d=3$ dimensional axial next-nearest-neighbor Ising (ANNNI) model \cite{Hor80,Sel92,Die02} at its uniaxial LP.  In their original application of the phenomenological LSI approach to this problem, they assumed that the anisotropy exponent $\theta$ takes its classical value $\theta=1/2$. Their theory then predicted the scaling function to be a linear combination of two linearly independent solutions of a differential equation, involving a single free parameter which they determined from Monte Carlo data for moments. The so-obtained scaling function appeared to be in perfect agreement with their Monte Carlo data. However, the $\epsilon$ expansion about the upper critical dimension $d^*(m)=4+m/2$ yields deviations of $\theta$ from its classical value $1/2$ at order $\epsilon^2$ \cite{DS00a,SD01,SPD05,SDP08}.%
\footnote{The recently developed large-$n$ expansion for the study of  critical behavior at LP, where $n$ is the number of   components of the order parameter, also gives nonclassical values of   $\theta$ for $d<d^*(m)$ \cite{SPD05,SDP08}.}
Pad\'e estimates based on these series expansions to $\mathrm{O}(\epsilon^2)$ gave $\theta\simeq 0.47$ for the uniaxial case $m=1$ at $d=3$. To account for such nonclassical values of $\theta$, Pleimling and Henkel \cite{rem:thetagen} generalized their LSI predictions for the scaling function by expanding in $\theta-1/2$. They found that the resulting predictions remained in agreement with their Monte Carlo data provided a value for $\theta$ sufficiently close to $1/2$ ($0.47\lesssim\theta\lesssim 0.5$) was chosen.

Unlike the case of type II, Henkel's predictions obtained via his phenomenological LSI approach have not yet been checked in a systematic fashion by mathematically well controlled analytical calculations. The only exceptions we are aware of are mean spherical models. Their propagators at the LP are those of massless free field theories. LSI does not lead to new nontrivial consequences for them. Hence they are unsuitable for critical checks of the predictive power and viability of this approach. In view of the apparent excellent agreement of the Monte Carlo data of Ref.~\cite{PH01} with the scaling function obtained by the LSI approach we feel that nontrivial checks of this approach through analytical calculations for nontrivial models of type-I systems are urgently needed.

The aim of this paper is to perform such checks. To this end we shall investigate standard $n$-component $\phi^4$ models for the description of critical behavior at $m$-axial LP, use the $\epsilon=4+m/2-d$ expansion to compute appropriate two-point scaling functions, and compare the results with the predictions of Henkel's LSI approach. For the sake of simplicity, we shall focus our attention in most of our work on the uniaxial case $m=1$. We begin in Section~\ref{sec:revLSI} with a brief review of the main predictions of this theory for the scaling functions of such type-I systems.  In Section~\ref{sec:rv} we first discuss the region of validity of the suggested invariance equations of the two-point correlation function in position space. We then transform these equations to momentum space, and discuss the consequences for the scaling forms of the Fourier transformed two-point functions. In Section \ref{sec:model} we introduce the standard continuum model representing the universality class of critical behavior at $m$-axial LP in $d$ dimensions. We then investigate the $\epsilon$ expansion  for the energy-density correlation function, using dimensional regularization in conjunction with minimal subtraction of ultraviolet (UV) poles. In Section \ref{sec:corrfctdm3} we focus on the analytically tractable cases $d=m+3$ and $m=2$. The former has the simplifying feature that the scaling function of the free propagator  in position space reduces at the LP to a Gaussian. This allows us to obtain explicit expressions for the 
energy-density and order-parameter correlation functions to order $\epsilon^2$ and cast them in scaling form. In Section~\ref{sec:endens} we determine the two-point correlation function of the energy density for the uniaxial case $m=1$ to two-loop order. A detailed comparison of our results for this and the previously mentioned correlation functions at the LP with the predictions of Henkel's phenomenological theory follows in Section \ref{sec:concl}. It shows that these predictions do not hold except in the trivial case of a Gaussian LP. Whenever loop corrections to the correlation functions cannot be neglected in the interacting case, the predicted scaling functions are inconsistent with our findings.  Finally, there are four appendices to which we have relegated various computational details.

\section{Conjectured properties of scaling functions}
\label{sec:revLSI}
Our objective is to check the predictions of the  LSI theory  proposed in Refs.~\cite{Hen99,Hen97,Hen02} for strongly anisotropic critical systems of type I. We begin by  recalling the basic postulates on which this  theory is based and its conjectured properties of scaling functions.

 To this end, we consider the pair correlation functions of two quasiprimary scaling operators $\mathcal{O}_j(\bm{z}_j, \bm{r}_j)$, $j=1,2$,  with zero averages $\langle\mathcal{O}_j\rangle$ and the behavior~\eqref{eq:ASIO} under global scale transformations.  For the sake of notational simplicity, we focus on the uniaxial case $m=1$. In order to facilitate comparisons with Henkel's work, we shall follow him and denote the one-dimensional equivalent of the variables $\bm{z}_j$ by $t_j$. We assume translation invariance in $t$~space as well as translation and rotation invariance in $\bm{r}$~space, define the scaling dimension
 \begin{equation}
 \label{eq:Delta}
\Delta\equiv(\Delta_1+\Delta_2)/2,
\end{equation}
and introduce the notations $t=t_1-t_2$, $\bm{r}=\bm{r}_1-\bm{r}_2$, and $r=|\bm{r}|$. By analogy with Eq.~\eqref{eq:CorSc}, the pair correlation functions can be written as
 \begin{equation}
 \label{eq:Gdef}
G_{jk}(t,r)
\equiv\langle\mathcal{O}_j(t_1,\bm{r}_1)\,\mathcal{O}_k(t_2,\bm{r}_2)\rangle=r^{-2\Delta}\,\Omega_{jk}(t\,r^{-\theta}).
\end{equation}
Their scaling form reflects the invariance under global anisotropic scale transformation generated by
\begin{equation}
 \label{eq:X0def}
 X_0= -t\,\partial_t - \frac{1}{\theta} \,\bm{r}\cdot\partial_{\bm{r}} - \frac{2 \Delta}{\theta}.
\end{equation}
Hence we have
\begin{equation}
 \label{eq:XgenG}
 X_0\,G_{jk}(t,r)=0.
\end{equation}

To proceed it will be helpful to recall some essentials of Henkel's approach \cite{Hen97,Hen02} without going into details. His starting point is the well-known algebra associated with Schr\"odinger invariance. This he generalizes by allowing for values of $\theta\ne 2$ and anomalous dimensions of scalar quasiprimary fields. He then imposed the requirement that the generators yield a finite number of independent conditions when applied to the two-point functions of quasiprimary fields. Exploiting the consequences, he was able to identify two distinct classes of systems, called type I and type II, respectively. For the type-I systems with which we are concerned here, the anisotropy exponent $\theta$ is constrained to the fractional values
\begin{equation}
 \label{eq:thetasdef}
\theta\equiv\theta_N=2/N,\;\;N\in\mathbb{N},
\end{equation}
so that Eq.~\eqref{eq:XgenG} simplifies to
\begin{equation}
 \label{eq:X0G}
  X_0G_{jk}(t,r)=\left( -t\partial_t - \frac{N}{2} \,\bm{r}\cdot\partial_{\bm{r}} - N \Delta\right)G_{jk}(t,r)=0.
\end{equation}

The other assumptions of Henkel are that the $G_{jk}$ are also annihilated by generators denoted as $\bm{Y}_{1-N/2}=(Y_1,\ldots, Y_{d-1})$ and $X_1$ defined via
\begin{equation}
 \label{eq:YG}
 \bm{Y}_{1-N/2} G_{jk}(t,r)\equiv\bigg( - t \partial_{\bm{r}} -
 \frac{2\alpha_1}{N} {\bm r} \partial_t^{N-1}
 \bigg) G_{jk}(t,r) = 0
\end{equation}
and
 \begin{eqnarray}\label{eq:X1G}
 X_1G_{jk}(t,r)&\equiv&( -t^2\partial_t - N t \,\bm{r}\cdot \partial_{\bm{r}} - 2 N \Delta_1 t
 -  \alpha_1 {\bm r}^2 \partial_t^{N-1} ) G_{jk}(t,r)\nonumber\\ &&\strut +2t_2 X_0 G_{jk}(t,r)
  +N\, {\bm r}_2 \cdot\bm{Y}_{1-N/2} G_{jk}(t,r) = 0
  \end{eqnarray}
 where $\alpha_1$ is a nonzero parameter. 
 
Note that when $N$ is taken to be an arbitrary real number so that the condition~\eqref{eq:thetasdef} is not satisfied, the generators $ \bm{Y}_{1-N/2}$ and $X_1$ involve fractional derivatives. Since definitions of fractional derivatives $\partial_t^\iota$ other than via  Fourier transformation $\partial_t^\iota\leftrightarrow (ik)^\iota$ are in use, the precise definition of these fractional derivatives becomes an issue. Background on this matter and Henkel's choice of their definition can be found in reference \cite[Appendix A]{Hen02}. We shall exclusively have to deal with the above equations in those cases where condition~\eqref{eq:thetasdef} is satisfied. All derivatives then reduce to conventional partial derivatives of first and higher orders. Clearly, any acceptable definition of fractional derivatives $\partial_t^\iota$ must reduce to such standard derivatives for nonnegative integer values of $\iota$, i.e., when $N$ becomes a natural number.  Henkel's choice indeed fulfills this condition. We therefore do not have to worry about potential differences resulting from distinct definitions of fractional derivatives here and in the following.

The meaning of the first condition, Eq.~\eqref{eq:X0G}, has already been explained. The second condition, Eq.~\eqref{eq:YG}, reduces in the special cases $N=1$ and $N=2$ to familiar ones implied by  invariance  under global projective Galilei transformations and rotations, respectively. The third one, equation (\ref{eq:X1G}), is  reminiscent of the one that follows for systems exhibiting  isotropic scale invariance in $\bm{x}\equiv (t,\bm{r})$ space from the invariance under M{\"o}bius transformations. As discussed in Ref.~\cite[p.~430]{Hen02}, the three conditions~\eqref{eq:X0G}--\eqref{eq:X1G} can be combined to obtain the constraint
\begin{equation}
 \label{eq:Del1eq2}
\Delta_1=\Delta_2,
\end{equation}
unless $G_{jk}\equiv0$. One can therefore put $\Delta_1=\Delta$ and drop the subscript $1$ on both $\Delta_1$ and $\alpha_1$.%
\footnote{Reference~\cite{Hen02} also uses a parameter $\alpha_2$. However, this is related to $\alpha_1$ via $\alpha_2=(-1)^{-N}\alpha_1$ in the case of type-I systems we are concerned with here.}
Both this constraint and Eq.~\eqref{eq:X0G} are satisfied by the scaling ansatz
\begin{equation}
 \label{eq:Gscfs}
G_{jk}(t,r)=\delta_{\Delta_j,\Delta_k}\,G(t,r),\quad G(t,r)=r^{-2\Delta}\,\Omega^{(N)}(tr^{-2/N}).
\end{equation}
Its substitution into Eq.~\eqref{eq:YG} then yields a differential equation for the scaling function, namely
 \begin{equation}
  \label{eq:Hv}
 \left( \alpha\,\frac{d^{N-1}}{dv^{N-1}}-v^2 \frac{d}{dv}-\zeta v\right)\Omega^{(N)}(v)=0\; \mbox{ with }\zeta=N\Delta.
 \end{equation}

 Henkel considers this equation on the interval $[0,\infty)$ subject to the boundary  conditions
 \begin{eqnarray}
  \label{eq:Bc0}
 \lim_{v\to 0}\Omega^{(N)}(v)&=&\Omega^{(N)}(0) \equiv \omega^{(N)}_0, \\
 \Omega^{(N)}(v)& \mathop{\approx}_{v\to\infty}&\omega^{(N)}_{\infty}\, v^{-\zeta},\label{eq:Bcinf}
 \end{eqnarray}
 where $\omega^{(N)}_0\ne0$ and $\omega^{(N)}_\infty$ are constants. Assuming that $ N\ge 2$, he arrives at the general solutions
 \begin{equation}
   \label{eq:hyp}
  \Omega^{(N)}(v)=\sum_{p=0}^{N-2} b^{(N)}_p\, v^p \mathcal{F}_p(v)
  \end{equation}
  with
  \begin{equation}
   \label{eq:Fp}
   \mathcal{F}_p(v)={}_2F_{N-1}\bigg(\frac{\zeta+p}{N},1;
   1+\frac{p}{N},1+\frac{p-1}{N},\ldots,
 \frac{p+2}{N};\frac{v^N}{N^{N-2}\,\alpha}\bigg),
 \end{equation}
 where  ${}_2F_{N-1}$ is the generalized hypergeometric function, while $b^{(N)}_p$ are  free parameters.

Using known theorems \cite{Wri40} about the asymptotic behavior of the functions ${}_2F_{N-1}(x)$ in the limit $x\to\infty$, he finds that the right-hand side of the solutions~\eqref{eq:hyp}, for general values of $b^{(N)}_p$, would diverge asymptotically as
 \begin{equation}
 \label{eq:ex}
  \Omega^{(N)}(v)\mathop{\sim}_{v\to\infty}\big(v\alpha^{-1/N}\big)^{(\zeta+1-N)/(N-2)}
 \exp\Big[\frac{N-2}{N}\,\big(v\alpha^{-1/N}\big)^{N/(N-2)}\Big]
 \end{equation}
 in the large-$v$ limit, where the proportionality constant is a linear combination of the coefficients $b^{(N)}_p$. Since such behavior is inconsistent with the boundary condition~(\ref{eq:Bcinf}), he requires that this constant vanishes. This implies the condition
 \begin{equation}
   \label{eq:linb}
   \sum_{p=0}^{N-2}b^{(N)}_p\,\frac{\Gamma(p+1)}{\Gamma((p+1)/N)\Gamma((p+\zeta)/N)}\Big(\frac{\alpha}{N^2}\Big)^{p/N}=0,
 \end{equation}
which can be used to  eliminate the coefficient $b^{(N)}_{N-2}$. As a consequence, the solutions~(\ref{eq:hyp}) become
\begin{equation}
 \label{eq:Omegaform}
\Omega^{(N)}(v)=\sum_{p=0}^{N-3} b^{(N)}_pv^p\, \Omega^{(N)}_p(v),
\end{equation}
with
 \begin{eqnarray}
   \label{eq:hyp1}
  \Omega^{(N)}_p(v)&=v^p\,\mathcal{F}_p(v)&-
 \frac{p!\,\Gamma[(N-1)/N]\,\Gamma[1+(\zeta-2)/N]}{(N-2)!\,\Gamma[(p+1)/N]\,\Gamma[(p+\zeta)/N] } \nonumber\\
 &&\strut\times
  \big(\alpha/N^2 \big)^{(p+2-N)/N}\,v^{N-2} \mathcal{F}_{N-2}(v).
 \end{eqnarray}

Condition~\eqref{eq:linb} ensures the cancellation of  the leading exponentially diverging terms in the limit $v\to+\infty$ of $\Omega^{(N)}(v)$. In order to comply with the boundary condition~\eqref{eq:Bcinf}, no other diverging or non-decaying terms would have to remain in $\Omega^{(N)}(v)$ in the limit $v\to\infty$. Provided this is the case, the general solution of  equation (\ref{eq:Hv}) subject to the  boundary conditions~\eqref{eq:Bc0} and \eqref{eq:Bcinf} involves $N-2$  free parameters  $b^{(N)}_p$ with $p=0,\ldots,N-3$ and is given by Eqs.~\eqref{eq:Omegaform} and \eqref{eq:hyp1}. That the boundary condition~\eqref{eq:Bcinf} is satisfied was confirmed in Ref.~\cite{Hen02}  by numerical means for $N=4,5,6$. 

In Appendix~\ref{app:Om} we reconsider in detail the problem of solving Eq.~\eqref{eq:Hv} subject to the boundary conditions~\eqref{eq:Bc0} and \eqref{eq:Bcinf}. We prove there the following statements about  the solutions given by Eqs.~\eqref{eq:Omegaform} and \eqref{eq:hyp1} with general values of the coefficients $b^{(N)}_p$:  When $N=3,4,5$, they comply indeed with the boundary condition~\eqref{eq:Bcinf}. For general integer values $N\ge7$, this boundary condition gets violated by the presence of exponentially diverging terms in the large-$v$ limit. Requiring the absence of these (subleading) divergences imposes further restrictions on the coefficients $b^{(N)}_p$, which reduce their number. For example, for $N=7$, the general solution of Eq.~\eqref{eq:Hv} satisfying the boundary conditions~\eqref{eq:Bc0} and \eqref{eq:Bcinf} involves only $3$ rather than $5$ free parameters. The case $N=6$ is special. As we show in Appendix~\ref{app:Om}, the scaling function $\Omega^{(6)}$ suggested by Henkel (for general values of the coefficients $b^{(6)}_p$, $p=0,\ldots,3$) violates again the boundary condition~\eqref{eq:Bcinf} but diverges only algebraically as $v\to\infty$. Requiring the absence of this divergence reduces the number of free parameters to $3$.

The above solutions $\Omega^{(N)}(v)$ for $N=4$ were used in Refs.~\cite{Hen99}, \cite{Hen97}, \cite{Hen02} and \cite{PH01} as predictions for the scaling function of the order-parameter pair correlation function of three-dimensional systems at Lifshitz points. Furthermore, in Ref.~\cite{PH01} extensive Monte Carlo data were presented for the scaling function of the three-dimensional ANNNI model, which appeared to be in perfect agreement with these predictions.  Since this case $N=4$ is of particular interest to us, we give here the explicit form of the  predicted $\Omega^{(4)}(v)$ for further use. It reads
 \begin{equation}
 \label{eq:Omlin}
 \Omega^{(4)}(v)=b^{(4)}_0\Omega_0^{(4)}(v)+b^{(4)}_1\Omega_1^{(4)}(v)
 \end{equation}
 with
 \begin{equation}
 \label{eq:Om0}
 \Omega_0^{(4)}(v)=\frac{\Gamma(3/4)}{\Gamma(\zeta/4)}
 \sum_{l=0}^{\infty} \frac{\Gamma(l/2+\zeta/4)}{l!\, \Gamma(l/2+3/4)} \left( \frac{-v^2}{2\,\alpha^{1/2}}\right)^l
 \end{equation}
 and
\begin{equation}
 \label{eq:Om1}
\Omega_1^{(4)}(v)=\frac{v\sqrt{\pi/2}}{\Gamma[(\zeta+1)/4]} \sum_{l=0}^{\infty} \frac{\Gamma[(l+1+\zeta)/4]\, s(l)}{\Gamma(l/4+1)\,\Gamma[(l+3)/2]}
 \left[\frac{-v}{(4\alpha)^{1/4}}\right]^{l}\,,
 \end{equation}
where $s(l)$ is defined by
 \begin{equation}
 s(l)=2^{-1/2}\,\left[ \cos(l\pi/4) + \sin(l\pi/4)
 \right] \cos(l\pi/4)\,.
 \end{equation}
 
Aside from an overall (nonuniversal) amplitude $b^{(4)}_0$ and the nonuniversal scale $\alpha$, this scaling function involves a single universal parameter $\wp\equiv\alpha^{1/4}b^{(4)}_1/b^{(4)}_0$. To adjust it by means of their Monte Carlo results for the three-dimensional ANNNI model, Pleimling and Henkel \cite{PH01} considered ratios of truncated moment integrals $\tilde{M}_j(v_0) =\int_{v_0}^\infty d{v} \linebreak[0] \,v^j\, \Omega^{(4)}(v\alpha^{1/4})$, where the use of a lower integration limit $v_0>0$ was necessary because they were unable to compute numerically the function $\Omega(v)$ for values $v_0\lesssim 0.22$.
 
 In the next section, we re-examine Henkel's arguments leading to the scaling function~\eqref{eq:Omlin}. We will show that there are important reasons  to question the presence of a contribution proportional to $\Omega^{(4)}_1(v)$ in  $\Omega^{(4)}(v)$. 
 
 \section{Re-examination of the scaling-function solutions of the postulated invariance equations} \label{sec:rv}
 
 Let us return to the postulated invariance equations \eqref{eq:X0G}--\eqref{eq:X1G}. Unfortunately, it is not stated explicitly in Refs.~\cite{Hen99,Hen97,Hen02} in what region of $(t,\bm{r})$-space these are presumed to hold. Clearly, in the case of a bulk equilibrium systems with a LP, the obvious point of view would be to interpret them as being valid in full $d$-dimensional space $\mathbb{R}\times\mathbb{R}^{d-1}$, so that the $t$-variable is not restricted to positive values.%
\footnote{Obviously, this would be different for time-dependent phenomena such as relaxational processes where one must carefully distinguish between future and past time directions.}
Accepting this interpretation, we can solve these equations by Fourier transformation.%
\footnote{Needless to say that the position-space functions $G(t,r)$ can be trusted to belong to the space of tempered distributions (the dual of the Schwartz space $\mathcal{S}(\mathbb{R}^d)$ of rapidly decreasing $C^\infty$ functions), and hence to have well-defined Fourier transforms.} %
Equations~\eqref{eq:YG} and \eqref{eq:X0G} yield
\begin{equation}
   \label{eq:SymP1}
\left[ \bm{p}\, \partial_k +2\,\frac{\alpha}{N}\, k^{N-1} i^{N-2}\,\partial_{\bm{p}} \right] \tilde{G}(k,\bm{p})=0,
\end{equation}
and
\begin{equation}
   \label{eq:SymP2}
\left[ k\, \partial_k+\frac{N}{2}\,\bm{p} \cdot\partial_{\bm{p}} +N\,\tilde{\Delta}\right]\tilde{G}(k,\bm{p})=0,
\end{equation}
respectively, where the Fourier transform $\tilde{G}(k,\bm{p})$ is defined by
\begin{equation}
 \label{eq:FT}
G(t,r)=\int \frac{d k}{2\pi}\int \frac{d^{d-1}\bm{p}}{(2\pi)^{d-1}}\, e^{i (k t+\bm{p}\cdot\bm{r})}\tilde{G}(k,\bm p),
\end{equation}
and $\tilde{\Delta}$ means the scaling exponent
 \begin{equation}
 \label{eq:Deltatilde}
 \tilde{\Delta}=\frac{1}{N}-\Delta+\frac{d-1}{2}\,.
\end{equation}

The unique solution to the first-order partial differential equations (\ref{eq:SymP1}) and (\ref{eq:SymP2}) can be easily found by the method of characteristics. It reads
\begin{equation}
   \label{eq:gensolP}
 \tilde{G}(k,\bm{p})=\big[p^2+4 \alpha N^{-2} (ik)^N\big]^{-\tilde{\Delta}}.
 \end{equation}
Note that in the special case of  $N=4$ and  $\tilde{\Delta}=1$, the result reduces to the usual form $(p^2+\sigma k^4)^{-1}$ of the free momentum-space propagator at a LP (with $\sigma= \alpha/4$)   [see e.g.\ Refs. \cite{DS00a,SD01,Die02} and Eq.~\eqref{eq:Gf} below].

Let us assume that $4\alpha (ik)^N/N^2>0 $. For even $N\geq 4$, this is the case when $\alpha\,(-1)^{N/2} >0$. The Fourier backtransform of the function~\eqref{eq:gensolP} with respect to the variable $\bm{p}$ may be gleaned from Ref.~\cite[p.\ 288]{GS64}. Using this, one finds that the scaling function is given by
 \begin{equation}
   \label{eq:frX}
 \Omega^{(N)}(v)= C_N(\zeta,\alpha) \int_0^\infty d\kappa\, \kappa^{(\zeta-1)/2}
 \cos\bigg[\bigg(\frac{N^2}{4|\alpha|}\bigg)^{1/N} v\, \kappa\bigg] \, K_{(\zeta-1)/N}(\kappa^{N/2}) 
 \end{equation}
 with
 \begin{equation}
 C_N(\zeta,\alpha)=\frac{\pi ^{-(d+1)/2} 2^{2-d+(\zeta-1)/N}}{\Gamma \left[(d-1)/2-(\zeta-1)/N\right]}\,\,\bigg(\frac{N^2}{4|\alpha|}\bigg)^{1/N},
\end{equation}
 where $K_{\nu}(z)$ is the Macdonald function (modified Bessel function of the second kind).
The function $ \Omega^{(N)}(v)$ is a (particular) solution of the ordinary differential equation (\ref{eq:Hv}).  By construction, it is the unique scaling function (up to scales) consistent with the validity of Eqs.~\eqref{eq:X0G}, \eqref{eq:YG}, and \eqref{eq:Gscfs} in full $(t,\bm{r})$-space $\mathbb{R}^d$.

Let us consider the case of our primary concern,  $N=4$ with $\alpha>0$, in more detail. The differential equation~(\ref{eq:Hv}) then is of third order. Using {\sc Mathematica} \cite{Mathematica7}, one easily arrives at the three linearly independent solutions
 \begin{eqnarray}
 \label{eq:fundsyst}
 f_1(v)&=&\, _1F_2{\left(\frac{\zeta }{4};\frac{1}{2},\frac{3}{4};\frac{v^4}{16\alpha}\right)},\nonumber\\
f_2(v)&=& \, _1F_2{\left(\frac{\zeta}{4}+\frac{1}{4};\frac{3}{4},\frac{5}{4};\frac{v^4}{16 \alpha}\right)}\,v,\nonumber\\
f_3(v)&=& \, _1F_2{\left(\frac{\zeta}{4}+\frac{1}{2};\frac{5}{4},\frac{3}{2};\frac{v^4}{16 \alpha}\right)}\,v^2.
\end{eqnarray}
Moreover, the power series of $\Omega_0^{(4)} (v)$ and $\Omega_1^{(4)}(v)$ given by Eqs.~\eqref{eq:Om0}  and \eqref{eq:Om1} with $N=4$ can be summed explicitly and the integral~\eqref{eq:frX} for $\Omega^{(4)}(v)$ be computed to obtain the results
\begin{eqnarray}
 \label{eq:O40}
\Omega^{(4)}_0(v)&=&  f_1(v)-\frac{2\,\Gamma(3/4)\,  \Gamma[(\zeta +2)/4]}{\sqrt{\alpha}\,\Gamma(1/4)\, \Gamma(\zeta /4)}\,f_3(v),\nonumber\\
\Omega^{(4)}_1(v)&=& f_2(v)-\frac{(2 \pi )^{1/2} \, \Gamma[(\zeta
   +2)/4]}{\alpha^{1/4} \,\Gamma(1/4)\,
   \Gamma[(\zeta +1)/4]}\,f_3(v),
    \end{eqnarray}
and
 \begin{equation}
 \label{eq:Om4int}
\Omega^{(4)}(v)=C_4(\zeta,\alpha)\,2^{(\zeta-3)/4}\,\Gamma(5/4)\,\Gamma(\zeta/4)\,\Omega^{(4)}_0(v).
\end{equation}

Note that the integrals~\eqref{eq:frX} are even in $v$. Hence  $\Omega^{(4)}$ cannot have a contribution proportional to  the odd solution $f_2$.  Since $\Omega^{(4)}_1$ has a term $\propto f_2$, it also cannot contribute to $\Omega^{(4)}$. That is, if the invariance equations~\eqref{eq:X0G}--\eqref{eq:X1G} are taken to hold in  full $(t,\bm{r})$-space $\mathbb{R}^d$,  the coefficient $b^{(4)}_1$ in Eq.~\eqref{eq:Omlin} must vanish, as evidenced by our explicit result~\eqref{eq:Om4int}.
 
There is a further reason by which a contribution $\propto \Omega^{(4)}_1(v)$ to $\Omega^{(4)}(v)$ is ruled out: The functions  $\Omega^{(4)}_0(v)$ and  $\Omega^{(4)}_1(v)$ both  satisfy the boundary condition~\eqref{eq:Bcinf} and hence vary as $v^{-\zeta}$ as $v\to+\infty$. Being even in $v$, the former also vanishes $\sim |v|^{-\zeta}$ as $v\to-\infty$. By contrast, $\Omega^{(4)}_1(v)$ \emph{diverges exponentially} in this limit. A convenient way to obtain its asymptotic behavior is to use its integral representation
 \begin{equation}
\label{eq:Om1intrep}
\Omega^{(4)}_1(v)=\frac{-(4/\alpha)^{1/4}\,2^{(7-\zeta)/4}}{(2\pi)^{1/2}\,\Gamma[(1+\zeta)/4]}\,P\big[(4/\alpha)^{1/4}v\big]
\end{equation}
with
 \begin{equation}
 \label{eq:Jdef}
  P(v)= \int_0^\infty d\kappa\, \kappa^{(\zeta-1)/2} [e^{- v \kappa}-\cos (  v \kappa)-\sin( v \kappa)] 
  \, K_{(\zeta-1)/4}(\kappa^2).  
 \end{equation}
 In the limit $v\to-\infty$, we can replace the Bessel function $K_{(\zeta-1)/4}(\kappa^2)$ by its limiting form $\sqrt{\pi/(2\kappa^2)}\,e^{-\kappa^2}$, ignore the contributions from the parts of the integrand proportional to $\cos(|v|\kappa)$ and $\sin(|v|\kappa)$, and determine the dominant contribution of the remaining integral by expanding the argument of the exponential about the saddle point $\kappa=|v|/2$. This gives
  \begin{eqnarray}
 \label{eq:Om1mininf}
P(v)&\mathop{\approx}_{v\to-\infty}& \int_0^\infty d\kappa\,\kappa^{(\zeta-1)/2}\,\sqrt{\frac{\pi}{2\kappa^2}}e^{\kappa|v|-2\kappa^2} \nonumber\\
&\mathop{\approx}_{v\to-\infty}&\sqrt{\pi/2}\int_{-\infty}^\infty dx\,(|v|/2)^{(\zeta-3)/2}\,e^{v^2/4-x^2}=
2^{1-\zeta/2}\pi\,|v|^{(\zeta-3)/2}\,e^{v^2/4}.
\end{eqnarray}

It may be tempting to argue that the scaling form~\eqref{eq:Gscfs} should be modified by replacing $t$ by its absolute value in the scaling function $\Omega^{(4)}$, writing
\begin{equation}
\label{eq:modGscf}
G(t,r)=r^{-2\Delta}\,\Omega^{(4)}\big(|t|r^{-1/2}\big).
\end{equation}
 In this way, the divergence of a contribution proportional to $\Omega^{(4)}_1(|t|r^{-1/2})$ for $t\to-\infty$ would be avoided. However, whenever the coefficient $b_1^{(4)}$ of $\Omega^{(4)}_1$ does not vanish, such a correlation function fails to satisfy equation (\ref{eq:YG}) in full $(t,\bm{r})$-space $\mathbb{R}^d$. Indeed, application of the operator  $\bm{Y}_{-1}$ to $G(t,r)$ yields
 \begin{equation}
 \label{eq:y-1}
 \bm{Y}_{-1}\,G(t,r)=-\frac{\alpha\, b^{(4)}_1}{r^{(4\Delta+1)/2}}\bm{r}\,\delta'(t) 
 \end{equation}
rather than zero. Note that for reasons discussed in the paragraph following Eq.~\eqref{eq:X1G}, this conclusion does \emph{not} hinge on the definition of fractional derivatives $\partial_t^{N-1}$ when $N$ is not a natural number. Using Henkel's definition of fractional derivatives given in Ref.~\cite[Appendix A]{Hen02}, one can in fact determine $\lim_{N\to4} \bm{Y}_{1-N/2}G(t,r)$ in a straightforward fashion to recover the same result, Eq.~\eqref{eq:y-1}. We are grateful to Malte Henkel (private correspondence) and one referee who both went explicitly through this analysis, confirming our conclusion that the function~\eqref{eq:modGscf} satisfies the inhomogeneous equation~\eqref{eq:y-1} rather than its homogeneous counterpart, Eq.~\eqref{eq:YG} with $N=4$. 

The proposed result~\eqref{eq:Omlin} for the scaling function $\Omega^{(4)}$, when re-interpreted according to Eq.~\eqref{eq:modGscf}, does therefore not satisfy the original equation given in Henkel's work \cite{Hen97,Hen02}. Let us nevertheless accept the predictions for $G(t,r)$ specified by Eqs.~(\ref{eq:Omlin})--(\ref{eq:Om1}) and (\ref{eq:modGscf}) for the moment%
\footnote{One motivation for this acceptance is the previously mentioned remarkably good agreement of the corresponding predictions for the  order-parameter two-point function with the results of Monte Carlo simulations for the three-dimensional ANNNI model \cite{PH01}. }
and work out their consequences  for the Fourier transform $\tilde{G}(k,\bm{p})$. The results will be checked against explicit RG results in $4-\epsilon$ dimensions and shown to be incompatible with the latter in subsequent sections.

Rather than starting directly from these equations, it is more convenient to go back to the partial differential equations  (\ref{eq:X0G}) and (\ref{eq:y-1}). Their Fourier transforms are Eqs.~(\ref{eq:SymP2}) (with $N$ set to $4$),  and
\begin{eqnarray} 
\label{eq:PaLe}
  \big( \bm{p}\, \partial_k -\frac{  \alpha}{2}\, k^{3} \,
  \partial_{\bm{p}} \big) \tilde{G}(k,\bm{p})=  
   \alpha\, b_1^{(4)}\,A\,(1-2\tilde{\Delta})\,k\,\frac{ \bm{p}}{p}\, p^{-2 \tilde{\Delta}}.
 \end{eqnarray}
 Here
 \begin{equation}
  A=\int d^{d-1}r\,r^{-(4\Delta+1)/2}\, e^{-i \bm{e}\cdot\bm{r}},
 \end{equation}
 where $\bm{e}\in  \mathbb{R}^{d-1}$ is a unit vector. Equation~\eqref{eq:SymP2} yields the scaling form\begin{equation}
\label{eq:scP}
  \tilde{G}(k,\bm{p})=p^{-2\tilde{\Delta}}\,g\big(p^2k^{-4}\big).
\end{equation}
Substituting this into Eq.~\eqref{eq:PaLe} gives us a differential equation for the scaling function $g(w)$, namely
\begin{equation}
\label{eq:odP}
 w^{1/2}\,[-(\alpha+4w)\,
 g'(w)+\alpha\,\tilde{\Delta}\, w^{-1}\,g(w)]=
\alpha \,b_1^{(4)} A\,(1-2\, \tilde{\Delta}).
 \end{equation}
This is solved in a straightforward fashion to obtain
\begin{eqnarray}
\label{eq:Solg}
 g(w)&=&{\bigg(1+\frac{\alpha}{4w}\bigg)}^{-\tilde{\Delta}}\,\Bigg[a +
 \frac{\alpha}{4}\,b_1^{(4)} A\,(1-2\, \tilde{\Delta}) \, \int_w^\infty \frac{d
 u}{u^{3/2}}\,{\bigg(1+\frac{\alpha}{4u}\bigg)}^{\tilde{\Delta}-1}\Bigg]\nonumber\\
 &=&{\bigg(1+\frac{\alpha}{4w}\bigg)}^{-\tilde{\Delta}}\,\left[a +
 \frac{\alpha}{2}\,b_1^{(4)} A\,(1-2\, \tilde{\Delta}) \,w^{-1/2}\,_2F_1{\left(\frac{1}{2},1-\tilde{\Delta};\frac{3}{2};-\frac{\alpha}{4w}\right)}\right].\nonumber\\
 \end{eqnarray}
Here  $a$ is a constant which must  depend linearly on the coefficients $b^{(4)}_0$ and
 $b^{(4)}_1$ of Eq.~\eqref{eq:Omlin}: $a=a_0 b^{(4)}_0+a_1
 b^{(4)}_1$. The explicit expressions for the coefficients $a_0$ and
 $ a_1$ are not important for us. 
 
 It is now easy to see that the contribution $\propto b_1^{(4)}$ is unacceptable. The function $g(w)$ has the asymptotic behavior
 \begin{equation}
 g(w)\underset{w\to 0}{\approx} \text{const}\,w^{\tilde{\Delta}}-2A\,b_1^{(4)}\,\sqrt{w}.
 \end{equation}
Upon substituting this into Eq.~\eqref{eq:scP}, we arrive at the limiting form
 \begin{equation}\label{eq:Gkpgrt0exp}
 \tilde{G}(k,p)\underset{p\to 0}{\approx}\text{const}\,k^{-4\tilde{\Delta}}-2A\,b_1^{(4)}\,p^{1-2\tilde{\Delta}}\,k^{-2}.
 \end{equation}
But for nonvanishing momentum $k$ the function $G(k,p)$ must have a Taylor expansion in $p$ of the form
 \begin{equation}\label{eq:pexp}
 \tilde{G}(k>0,p)=G_0(k)+G_2(k)\,p^2+\Or(p^4),
 \end{equation}
 where the behaviors $G_0(k)\sim k^{-2\tilde{\Delta}/\theta}$ and $G_2(k)\sim k^{-(2+2\tilde{\Delta})/\theta}$ are dictated by scaling. To understand the behavior of $G_2(k)$, note that the scaling dimension of the descendant operator $\partial_{\bm{r}}\bm{\phi}(t,\bm{r})$ is $\Delta+1$. This translates into the stated behavior of $G_2(k)$ upon Fourier transformation. Furthermore, a $k$-dependent second-moment correlation length $\xi_\beta(k)$ governing the decay of $G(t,r)$ as $r\to\infty$ can be defined via its square
 \begin{equation}
 \xi^2_\beta(k)=-\partial_{p^2}\ln \tilde{G}(k,p)\big|_{p=0}=\frac{1}{2}\,
\frac{\int d^mt\,e^{-i\bm {k}\cdot\bm{ t} } \int d^{d-m}r\,r^2\,G(t,r)}{\int d^mt\,e^{-i\bm{k}\cdot\bm{t} } \int d^{d-m}r\,G(t,r)},
 \end{equation}
 where $m=1$ in the uniaxial case we are concerned with. The length
$\xi_\beta(k)$ scales as $k^{-1/\theta}$; it has a finite value as long as $k$ does not vanish. Thus, $G_{\bm{k}}(r)\equiv\int d^mt \,e^{-i\bm{k}\cdot\bm{t} }\,\tilde{G}(t,r)$ must decay exponentially on the scale of $\xi_\beta(k)$ in the large-$r$ limit, 
and all even moments $\int d^{d-m}r\,r^{2j} \,G_{\bm{k}}(r) $ with $j=1,2,\dotsc$ should exist. 
This in turn means that $\tilde{G}(k>0,p)$ must be analytic in $p^2$ at $p=0$. Analogous considerations show that $\tilde{G}(k,p>0)$ must be analytic in $k$ at $k=0$ and expandable as 
  \begin{equation}\label{eq:kexp}
 \tilde{G}(k,p>0)=g_0(p)+g_2(p)\,k^2+\Or(k^4),
 \end{equation}
 where $g_0(p)\sim p^{-2\tilde{\Delta}}$ and  $g_2(p)\sim p^{-2\tilde{\Delta}-2\theta}$ as $p\to 0$.
 
 The contribution $\propto b_1^{(4)}$ in Eq.~\eqref{eq:Gkpgrt0exp} yields a term $\sim p^{1-2\tilde{\Delta}}k^{-2}$ that is nonanalytic in $p$ at $p=0$. It is inconsistent with the expansion~\eqref{eq:pexp} and violates the mentioned analyticity in $p$. Consequently, this contribution must not appear, and the coefficient $b_1^{(4)}$ should be zero.
  
 \section{Model, renormalization, and renormalization-group equations} \label{sec:model}
\subsection{$O(n)$ model for critical behavior at $m$-axial Lifshitz points}
According to Ref.~\cite{PH01}, the Monte Carlo results obtained for the three-dimensional ANNNI model at its uniaxial LP could be fitted very well to the scaling functions proposed by Henkel for the case with $N=4$, i.e.\ $\theta=1/2$. From the $\epsilon$-expansion results of Refs.~\cite{DS00a,SD01} it is well known that the anisotropy exponents $\theta(d,n,m)$ of $m$-axial LP differ from $1/2$  at order $\epsilon^2$.  One therefore expects that $\theta\ne 1/2$ also for the three-dimensional ANNNI model. However, the difference $\theta-1/2$ appears to be small. Field-theory estimates based on the $\epsilon$ expansion gave $\theta(3,1,1)\simeq 0.47$ \cite{SD01,Die05}, and Monte Carlo simulation data seem to be consistent with values $0.48\lesssim\theta\lesssim 1/2$ \cite{rem:thetagen,PH01}. 
 
Our aim here is to determine appropriate two-point correlation functions by means of RG improved perturbation theory in  $d<d^*(m)$ dimensions and compare the results with Henkel's predictions. To  this end, we will consider  the model used in the two-loop RG analysis of Refs.~\cite{DS00a,SD01}. Its Hamiltonian is given by
 \begin{equation}
   \label{eq:Hamf}
     {\mathcal{H}}=\int d^dx\bigg[\frac{\mathring{\sigma}}{2}\,(\partial_\alpha^2\bm{\phi})^2+\frac{1}{2}(\partial_\beta\bm{\phi})^2+\frac{\mathring{\rho}}{2}\,(\partial_\alpha\bm{\phi})^2
       +\frac{\mathring{\tau}}{2}\,
 \bm{\phi}^2+\frac{\mathring{u}}{4!}\,|\bm{\phi} |^4
     \bigg],
 \end{equation}
 where $\bm{\phi}(\bm{x})=(\phi_a(\bm{x}),a=1,\ldots,n)$ is an $n$-component order-parameter field. Pairs of $\alpha$ and $\beta$ indices are to be summed from $1$ to $m$ and from $m+1$ to $d$, respectively. Thus, $\partial_\alpha^2$ denotes the Laplacian $\partial_{\bm{z}}^2=\partial_{\bm{z}}\cdot\partial_{\bm{z}}$ in $\bm{z}$-space, while $(\partial_\beta\bm{\phi})^2=(\partial_\beta\phi_a)\partial_\beta\phi_a$ is the square of the gradient $\partial_{\bm{r}}\bm{\phi}$ in $\bm{r}$-space. We shall investigate the two-point cumulants of the order parameter $\bm{\phi}$ and the energy density $\bm{\phi}^2(\bm{x})/2$. They are defined by
 \begin{equation}
\delta_{a b}\,G^{(2,0)}({\bm x}-{\bm x'})= \langle\phi_a ({\bm x})\,\phi_b ({\bm x'})\rangle^{\mathrm{cum}}
 \end{equation}
 and
 \begin{equation}
G^{(0,2)}({\bm x}-{\bm x'})\equiv\bigg\langle\frac{1}{2}\bm{\phi}^2(\bm{x})\,\frac{1}{2}\bm{\phi}^2 ({\bm x'})\bigg\rangle^{\mathrm{cum}}, 
 \end{equation}
respectively.

Despite the fact that $\theta\ne1/2$, we shall be able to perform nontrivial checks of Henkel's predictions. The reason is the following. Building on the work in Refs.~\cite{DS00a,SD01}, we can use the $\epsilon$ expansion about the upper critical dimension $d^*(m)$ to investigate the behaviors of $G^{(0,2)}$ and $G^{(2,0)}$ at the LP. The critical exponents these functions involve (such as $\theta$ and $\Delta$) as well as their scaling functions have expansions in powers of $\epsilon$. Since the anisotropy exponent $\theta$ starts to deviate from $1/2$ not before second order in   $\epsilon$, Henkel's predictions --- if correct --- ought to comply with the $\Or(\epsilon)$ results of the $\epsilon$ expansion. In the case of the order-parameter cumulant $G^{(2,0)}$, the contribution of zeroth order in $\epsilon$ of the scaling function is given by Landau theory; the leading corrections are of order $\epsilon^2$ and encountered at two-loop order. By contrast, the one-loop term of the energy-density cumulant $G^{(0,2)}$ yields a contribution of zeroth order in $\epsilon$ to its scaling function.%
\footnote{An analogous situation is encountered in checks of the LSI predictions for type-II systems \cite{CG02b}. However, in this case the $\Or(\epsilon^2)$ term of the order-parameter response function could be determined explicitly and shown to be in conflict with the LSI prediction. Energy-density response and correlation functions were not computed in this work. By analogy with our case, the scaling functions of these dynamic cumulants will receive nontrivial corrections already at first order in $\epsilon$. The latter may be expected to show deviations from the LSI predictions as well.}
As we shall see, the latter is inconsistent with Henkel's predictions.

\subsection{Renormalization}
To proceed, it will be necessary to recall some background on the RG analysis of the above model with Hamiltonian~\eqref{eq:Hamf}. 
The renormalization of the (dimensionally regularized) $M$-point cumulants (connected correlation functions)
\begin{equation}
 \label{sec:GMdef}
 G_{a_1,\ldots,a_M}^{(M)}(\bm{x}_1,\ldots,\bm{x}_M)=\langle\phi_{a_1}(\bm{x}_1)\cdots\phi_{a_M}(\bm{x}_M)\rangle^{\mathrm{cum}}
\end{equation}
of this theory in $d=d^*(m)-\epsilon\le d^*(m)$ dimensions has been explained in detail in Refs.~\cite{DS00a,SD01,DGR03}. Their UV divergences can be absorbed by means of the reparametrizations
\begin{eqnarray}\label{eq:bulkrep}
  \bm{\phi}=Z_\phi^{1/2}\,\bm{\phi}_{\mathrm{R}}\;,\quad
  \mathring{\sigma}=Z_\sigma\,\sigma\;,
\quad  \mathring{u}\,{\mathring{\sigma}}^{-m/4}\,F_{m,\epsilon}=
   \mu^\epsilon\,Z_u\,u\;,\nonumber\\
\mathring{\tau}-\mathring{\tau}_{\mathrm{LP}}=
\mu^2\,Z_\tau{\big(\tau+A_\tau\,\rho^2\big)}\;,
\quad
\left(\mathring{\rho}-\mathring{\rho}_{\mathrm{LP}}\right)\,
{\mathring{\sigma}}^{-1/2}=\mu\,Z_\rho\,\rho\,,
\end{eqnarray}
where $\mu$ is a momentum scale. Following these references, we choose the factor  $F_{m,\epsilon }$  as
\begin{equation}
  \label{eq:Fmeps}
F_{m,\epsilon}=
\frac{\Gamma{\left(1+{\epsilon/ 2}\right)}
\,\Gamma^2{\left(1-{\epsilon/ 2}\right)}\,
\Gamma{\left({m}/{4}\right)}}{(4\,\pi)^{({8+m-2\,\epsilon})/{4}}\,
\Gamma(2-\epsilon)\,\Gamma{\left({m}/{2}\right)}}\,.
\end{equation}
The LP is located at $(\mathring{\tau},\mathring{\rho})=(\mathring{\tau}_{\mathrm{LP}},\mathring{\rho}_{\mathrm{LP}})$. In a theory regularized by means of a large-momentum cutoff $\Lambda$, the renormalization functions $\mathring{\tau}_{\mathrm{LP}}$ and $\mathring{\rho}_{\mathrm{LP}}$  would diverge $\sim\Lambda^2$ and $\sim \Lambda$, respectively. However, in our perturbative approach based on dimensional regularization, they vanish. Results to order $u^2$ for the renormalization factors $Z_\phi$, $Z_\sigma$, $Z_\rho$, $Z_\tau$ and $Z_u$ can be found in Eqs.~(40)--(50) of Ref.~\cite{SD01}. The function $A_\tau$ is given to $\Or(u)$ in Eq.~(17) of Ref.~\cite{DGR03}. Since we will not move away from the LP, we shall not need it.

Using the reparametrizations~\eqref{eq:bulkrep}, we can define renormalized $M$-point cumulants by
\begin{equation}
 \label{eq:GMren}
G^{(M)}_{a_1,\ldots,a_M;\mathrm{R}}(\bm{x}_1,\ldots;\tau,\rho,\sigma,u,\mu)=Z_\phi^{-M/2}\,G^{(M)}_{a_1,\ldots,a_M}(\bm{x}_1,\ldots;\mathring{\tau},\mathring{\rho},\mathring{\sigma},\mathring{u}).
\end{equation}
When $M=2$, this defines us the renormalized function $G^{(2,0)}_{\mathrm{R}}\equiv G^{(2)}_{\mathrm{R}}$. However, the renormalization of the energy-density cumulant $G^{(0,2)}$ also involves an additive counterterm. A possible way of fixing it is to subtract from $\tilde{G}^{(0,2)}$ its value at a  normalization point ($\mathrm{NP}$). We choose $\mathrm{NP}$ at the LP and a momentum $(\bm{k},\bm{p})=(\bm{0},\mu\,\hat{\bm{p}})$, where $\hat{\bm{p}}$ is an arbitrary $(d-m)$-dimensional unit vector, defining  
\begin{equation}
 \label{eq:G02ren}
\tilde{G}^{(0,2)}_{\mathrm{R}}(\bm{k},\bm{p};\tau,\rho,\sigma,u,\mu)= Z^2_\tau\big[\tilde{G}^{(0,2)}(\bm{k},\bm{p};\mathring{\tau},\mathring{\rho},\mathring{\sigma},\mathring{u})-B(\mathring{u},\mathring{\sigma};\mu)\big]
\end{equation}
with
\begin{equation}
 \label{eq:Csubdef}
B(\mathring{u},\mathring{\sigma};\mu)= \tilde{G}^{(0,2)}\big|_{\mathrm{NP}}\equiv \tilde{G}^{(0,2)}\big|_{k=0,p=\mu;\mathrm{LP}}.
\end{equation}
Hence the renormalized function $\tilde{G}^{(0,2)}_{\mathrm{R}}$ satisfies the normalization condition
\begin{equation}
 \label{eq:ednormcond}
\tilde{G}^{(0,2)}_{\mathrm{R}}\big|_{\mathrm{NP}}\equiv\tilde{G}^{(0,2)}_{\mathrm{R}}(\bm{0},\mu\,\hat{\bm{p}};\tau=0,\rho=0,\sigma,u,\mu)=0.
\end{equation}

\subsection{Renormalization-group equations}

The RG equations one obtains from Eqs.~\eqref{eq:bulkrep} and \eqref{eq:GMren} upon varying $\mu$ are known from Refs.~\cite{DS00a} and \cite{SD01}. Let us introduce the operator
\begin{equation}
\label{eq:Dmu}
\mathcal{D}_\mu=\mu\partial_\mu+\sum_{g=u,\sigma,\rho,\tau}\beta_g\partial_g
\end{equation}
along with the beta functions
\begin{equation}
\label{eq:betagdef}
\beta_g\equiv\mu\partial_\mu|_0g\,,\;\;g=u,\sigma,\rho,\tau,
\end{equation}
where $\partial_\mu|_0$ means a derivative at fixed bare interaction constant $\mathring{u}$ and parameters  $\mathring{\sigma}$, $\mathring{\rho}$, and $\mathring{\tau}$. The $\beta_g$ can be expressed in terms of the exponent functions
\begin{equation}
\label{eq:etas}
\eta_g(u)=\mu\partial_\mu |_0\ln{Z}_g\,,\;g=\phi,u,\sigma,\rho,\tau,
\end{equation}
and the function
\begin{equation}
\label{eq:btau}
b_\tau(u)=A_\tau[\mu\partial_\mu |_0\ln A_\tau+\eta_\tau-2\eta_\rho]
\end{equation}
as
\begin{equation}
\label{eq:betags}
\beta_g=
\begin{cases}
-u[\epsilon+\eta_u(u)],&g=u,\\
-\sigma\eta_\sigma(u),&g=\sigma,\\
-\rho[1+\eta_\rho(u)],&g=\rho,\\
-\tau[2+\eta_\tau(u)]-b_\tau(u)\rho^2,&g=\tau.
\end{cases}
\end{equation}
Two-loop results for the functions $\beta_u$, $\eta_\phi$, $\eta_\sigma$, $\eta_\rho$, and $\eta_\tau$ may be gleaned from  Eqs.~(59), (58), and (40)--(50) of Ref.~\cite{SD01}. The function $b_\tau(u)$ (not needed in the following) is given to first order in $u$  in Eq.~(40) of Ref.~\cite{DR04}.

The RG equations for the functions $G^{(M)}_{\mathrm{R}}$ can be written as
\begin{equation}
\label{eq:RGGM}
\left(\mathcal{D}_\mu+\frac{M}{2} \,\eta_\phi\right)G^{(M)}_{a_1,\ldots,a_N;\mathrm{R}}(\bm{x}_1,\ldots,\bm{x}_N)=0.
\end{equation}
Owing to the additive counterterm, the RG equation for $\tilde{G}^{(0,2)}_{\mathrm{R}}$ becomes inhomogeneous; it reads
\begin{equation}
\label{eq:RGGed}
\left(\mathcal{D}_\mu-2 \,\eta_\tau\right)\tilde{G}^{(0,2)}_{\mathrm{R}}(\bm{k},\bm{p})=\mu^{-\epsilon}\sigma^{-m/4}\mathcal{B}(u),
\end{equation}
where $\mathcal{B}(u)$ is a UV finite function defined by
\begin{equation}
\label{eq:Bcaldef}
\mu^{-\epsilon}\sigma^{-m/4}\mathcal{B}(u)=-Z_\tau^2\,\mu\partial_\mu \big|_0 B(\mathring{u},\mathring{\sigma};\mu).
\end{equation}

The RG Eqs.~\eqref{eq:RGGM} and \eqref{eq:RGGed} can be solved in a standard fashion using characteristics \cite{DS00a,SD01,DGR03,DR04}. For our purposes it will be sufficient to focus on the solutions to the RG equations of $\tilde{G}^{(2,0)}_{\mathrm{R}}$ and $\tilde{G}^{(0,2)}_{\mathrm{R}}$ at the LP. The critical exponents they involve --- namely, the correlation exponent $\eta_{\mathrm{L}2}$, the correlation-length exponent $\nu_{\mathrm{L}2}$, and the anisotropy exponent $\theta$ --- may be expressed in terms of the values $\eta_g^*\equiv\eta_g(u^*)$ of  the exponent functions $\eta_g(u)$ at the infrared-stable zero 
\begin{equation}
\label{eq:ustar}
u^*=\frac{2\epsilon}{3}\,\frac{9}{n+8}+u_2^*\,\epsilon^2+\Or(\epsilon^3)
\end{equation}
 of $\beta_u$ whose coefficient $u^*_2$, albeit not needed here, may be found in Eq.~(60) of Ref.~\cite{SD01}. We have
\begin{equation}
 \label{eq:etaL2nuL2thetadef}
\eta_{\mathrm{L}2}=\eta_\phi^*\,,\quad\nu_{\mathrm{L}2}=(2+\eta_\tau^*)^{-1}\,,\mbox{ and }\theta=(2+\eta_\sigma^*)/4\,.
\end{equation}
 Solving the RG equation for  $\tilde{G}^{(2,0)}_{\mathrm{R}} $   at the LP $\tau=\rho=0$ gives  
 \begin{equation}
 \label{eq:RGsolG20}
\tilde{G}^{(2,0)}_{\mathrm{R}}(\bm{k},\bm{p})\approx\mu^{-2}(E_\phi^*)^{-2}(p/\mu)^{\eta_{\mathrm{L}2}-2}\,\Psi_{m,d}\big[(E_\sigma^*\sigma)^{1/4}\mu^{-1/2} k\,(p/\mu)^{-\theta}\big]
\end{equation}
with
\begin{equation}
 \label{eq:scfPsi}
\Psi_{m,d}(\mathsf{k})=\tilde{G}^{(2,0)}_{\mathrm{R}}(\bm{\mathsf{k}},\hat{\bm{p}};\tau=0,\rho=0,\sigma=1,u^*,\mu=1).
\end{equation}
Here $E_\sigma^*=E_\sigma^*(u)$ is a familiar nonuniversal amplitude. Just as its analog $E_\tau^*(u)$, which we will encounter when solving the RG equation for $G^{(0,2)}_{\mathrm{R}}$, it can be expressed as an  integral along a RG trajectory:
\begin{equation}
E_g^*(u)\equiv E_g(u^*,u),\quad E_g(\bar{u},u)=\exp\left[\int_u^{\bar{u}}du'\frac{\eta_g^*-\eta_g(u')}{\beta_u(u')}\right],\;\;g=\sigma,\tau,\phi.
\end{equation}

The solution of the inhomogeneous RG equation~\eqref{eq:RGGed} for the energy-density correlation function gives
\begin{equation}
 \label{eq:RGsolG02}
\tilde{G}^{(0,2)}_{\mathrm{R}}(\bm{k},\bm{p})\approx\frac{(E_\tau^*)^2\,\mathcal{B}(u^*)\,\nu_{\mathrm{L}2}}{\mu^\epsilon\,(E_\sigma^*\sigma)^{m/4}\,\alpha_{\mathrm{L}}}\bigg\{\left(\frac{p}{\mu}\right)^{-\alpha_L/\nu_{\mathrm{L}2}}\Upsilon_{m,d}\big[(E_\sigma^*\sigma)^{1/4}\mu^{-1/2} k\,(p/\mu)^{-\theta}\big] - 1\bigg\}
\end{equation}
with
\begin{equation}
 \label{eq:scfUps}
\Upsilon_{m,d}(\mathsf{k})=1+\frac{\alpha_{\mathrm{L}}}{\mathcal{B}(u^*)\,\nu_{\mathrm{L}2}}\,\tilde{G}^{(0,2)}_{\mathrm{R}}(\bm{\mathsf{k}},\hat{\bm{p}};\tau=0,\rho=0,\sigma=1,u^*,\mu=1) 
\end{equation}
where 
 \begin{equation}
\alpha_{\mathrm{L}}=2-(d-m+m\, \theta)\nu_{\mathrm{L}2}
\end{equation}
is the specific heat exponent \cite{SD01}.
 
\section{Two-loop calculation of scaling functions}\label{sec:2lres}
\subsection{Calculations and results for general values of $m$}
\label{sec:genm}
We next turn to the calculation of the scaling functions~\eqref{eq:scfPsi} and \eqref{eq:scfUps} by means of RG improved perturbation theory.

The Fourier transform of the free propagator $G_f(\bm{x})$ reduces  at the LP to the simple expression
\begin{equation}
 \label{eq:Gf}
\tilde{G}_f(\bm{k},\bm{p})=\big[p^2+\mathring{\sigma}k^4\big]^{-1}.
\end{equation}
Unfortunately, its Fourier backtransform yields for general values of $m$ and $d$ a rather complicated expression for the scaling function of $G_f(\bm{x})$. One finds \cite{DS00a,SD01}
\begin{equation}
 \label{eq:Gfscf}
G_f(\bm{x})=r^{-2+\epsilon}\,\mathring{\sigma}^{-m/4}\,
    \Phi_{m,d}{\big(\mathring{\sigma}^{-1/4}zr^{-1/2}\big)}
\end{equation}
with
\begin{eqnarray}
   \label{eq:Phi}
   \Phi_{m,d}(\upsilon)&=&\frac{1}{2^{2+m}\,\pi^{(6+m-2\epsilon)/4}}\,
   {\bigg[}
   \frac{\Gamma(1- \epsilon/2)}{\Gamma[(m+2)/4]}\,
 {}_1{F}_{2}\Big(1-\frac{\epsilon}{2};\frac{1}{2},\frac{m+2}{4};
     \frac{\upsilon^4}{64}\Big)
     \nonumber\\ && \qquad\qquad\qquad\strut
 -\frac{\upsilon^2\,\Gamma[(3-\epsilon)/2]}{4\,
    \Gamma(1+m/4)}\,
 {}_1{F}_{2}\Big(\frac{3-\epsilon}{2};\frac{3}{2},1+\frac{m}{4};
         \frac{\upsilon^4}{64}\Big){\bigg]}.
 \end{eqnarray}

Owing to our use of dimensional regularization, the contribution from the one-loop graph \;\raisebox{-0.7ex}{\includegraphics[width=16pt,clip]{./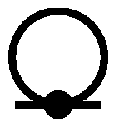}}\; vanishes at the LP. 
Hence the perturbation expansions to two-loop order of the cumulants $\tilde{G}^{(2,0)}(\bm{k}, \bm{p})$ and  $\tilde{G}^{(0,2)}(\bm{k}, \bm{p})$ at the LP become
 \begin{equation}
 \label{eq:Exod}
 \big[\tilde{G}^{(2,0)}(\bm{k}, \bm{p})\big]^{-1}=p^2+\mathring{\sigma}k^4 -
\frac{\mathring{u}^2}{6}\,\frac{n+2}{3}
 J_3(\bm{k}, \bm{p};\mathring{\sigma})+\Or(\mathring{u}^3)
 \end{equation}
 and
  \begin{equation}
 \label{eq:Exen}
 \tilde{G}^{(0,2)}(\bm{k},\bm{p})=\frac{n}{2}\,
 J_2(\bm{k}, \bm{p};\mathring{\sigma})-\frac{{\mathring{u}}\,n(n+2)}{12}\, [J_2(\bm{k}, \bm{p};\mathring{\sigma})]^2+\Or(\mathring{u}^2), 
\end{equation}
 respectively. Here $J_{j=2,3}$  are the integrals
\begin{eqnarray} 
  \label{eq:Jldefx}
 J_j(\bm{k},\bm{p};\mathring{\sigma})&=&\int d^dx\,[G_f(\bm{x})]^j\,e^{-i(\bm{p}\cdot\bm{r}+\bm{k}\cdot\bm{z} )}\nonumber\\&=&
 \begin{cases}\mathring{\sigma}^{-m/4}\,p^{-\epsilon}\,I_2(\mathring{\sigma}^{1/4}kp^{-1/2}),&j=2,\\[\smallskipamount]
 \mathring{\sigma}^{-m/2}\,p^{2-2\epsilon}\,I_3(\mathring{\sigma}^{1/4}kp^{-1/2}),&j=3.
\end{cases}
 \end{eqnarray}
 
Splitting off a factor $F_{m,\epsilon}^{j-1}$ from $I_j(\mathsf{k};m,d)\equiv I_j(\mathsf{k})$, let us write the Laurent expansions of the resulting ratios as 
 \begin{equation}
 \label{eq:I3Kexp}
\frac{I_j(\mathsf{k};m,d)}{F_{m,\epsilon}^{j-1}}=R^{(j)}_{-1}(\mathsf{k};m)\,\epsilon^{-1}+R^{(j)}_{0}(\mathsf{k};m)+R^{(j)}_{1}(\mathsf{k};m)\,\epsilon+\Or(\epsilon^2),\;\;j=2,3.
\end{equation}
In Appendix~\ref{app:I23} we determine the low-order coefficients. As residues we recover the results of Refs.~\cite{DS00a} and \cite{SD01}, namely 
 \begin{equation}
 \label{eq:R2m1}
R^{(2)}_{-1}(\mathsf{k};m)=1
\end{equation}
and
\begin{equation}
 \label{eq:R3m1}
R^{(3)}_{-1}(\mathsf{k};m)=\frac{j_\sigma(m)\,\mathsf{k}^4}{16 m(m+2)}-\frac{j_\phi(m)}{2(8-m)}.
\end{equation}
Following Ref.~\cite{SD01}, we have introduced here the integrals
\begin{equation}
 \label{eq:jsigma}
j_\sigma(m)=B_m\,\int_0^\infty \upsilon^{m+3}\,\Phi_{m,d^*}^3(\upsilon)\,d\upsilon
\end{equation}
and
\begin{equation}
 \label{eq:jphi}
j_\phi(m)=B_m\,\int_0^\infty \upsilon^{m-1}\,\Phi_{m,d^*}^3(\upsilon)\,d\upsilon
\end{equation}
with
\begin{equation}
 \label{eq:Bm}
B_m\equiv
\frac{S_{m-1}S_{d^*-m-1}}{F_{m,0}^2}=
\frac{2^{10+m} \pi ^{6+3 m/4}\, \Gamma(m/2)}{\Gamma(2-m/4) \,\Gamma^2(m/4)},
\end{equation}
where
\begin{equation}
 \label{eq:Smm1}
S_{m}=2\pi^{(m+1)/2}/\Gamma[(m+1)/2]
\end{equation}
is the volume of the unit $m$-sphere $S^{m}$. Unfortunately, analytic results are known for the integrals $j_\phi(m)$ and $j_\sigma(m)$ only for the special values $m=2$ and $6$, as well as for the limiting values
$j_\phi(0+)$ and $j_\sigma(8-)$; see Eqs.~(51)-(55),  (84), and (86) of Ref.~\cite{SD01}. For general choices of $m$, these integrals can be determined by numerical integration. Results can be found in Table~1 of Ref.~\cite{SD01}.

In Appendix~\ref{app:I23} we show that  the regular parts $R^{(2)}_{l\ge 0}(\mathsf{k}=0;m)$ vanish exactly at $\mathsf{k}=0$ [Eq.~\eqref{eq:I2zero}]. Hence 
\begin{equation}
 \label{eq:R20kzero}
R^{(2)}_l(0;m)=\delta_{l,-1}.
\end{equation}
The finite parts $R^{(2)}_0(\mathsf{k};m)$ and $R^{(3)}_0(\mathsf{k};m)$ can be expressed as double integrals involving the generalized functions $r_+^{-1}$ and $r_+^{-3}$, respectively. We have
\begin{equation}
\label{eq:R20res}
R^{(2)}_0(\mathsf{k};m)=\frac{2^{6+m/2} \pi ^{m+4}}{\Gamma(2+m/4)\, \Gamma(m/4)}\int_{-\infty}^\infty dr\,r_+^{-1}\,\varphi^{(2)}_{m,d^*}(r;\mathsf{k})
\end{equation}
and
\begin{eqnarray}
\label{eq:R30res}
R^{(3)}_{0}(\mathsf{k};m)&=&
\frac{1}{2}\,R^{(3)}_{-1}(\mathsf{k};m)\big[2\gamma_E-4 +\psi(2-m/4)-\ln(16\pi^3)\big]
\nonumber\\ &&\strut 
+B_m\int_{-\infty}^\infty dr\,r_+^{-3}\,\varphi^{(3)}_{m,d^*}(r;\mathsf{k}),
\end{eqnarray}
where $\psi(x)=d\ln\Gamma(x)/dx$ is the digamma function \cite{GR80}). Further, $\gamma_E=-\psi(1)$ is the Euler-Mascheroni constant and $\varphi^{(j)}_{m,d^*}(r;\mathsf{k})$ means the function
\begin{equation}
\varphi^{(j)}_{m,d}(r;\mathsf{k})=\,_0F_1\bigg(\frac{d-m}{2},-\frac{r^2}{4}\bigg)\int_0^\infty d\upsilon\,\upsilon^{m-1}\,\Phi_{m,d}^j(\upsilon)\,\,_0F_1\bigg(\frac{m}{2},-\frac{r\mathsf{k}^2\upsilon^2}{4}\bigg)
\end{equation}
with $d$ set to $d^*=4+m/2$. The reader may consult Refs.~\cite{GS64} and \cite[Appendix]{Die86a} for general background on the distributions $r_+^{-s+\epsilon}$ and their Laurent expansions. The definitions of the generalized functions  that are encountered in these expansions are given in Eq.~\eqref{eq:Laurdist} of Appendix~\ref{app:I23}. Suffice it here to recall the simple example of $r_+^{-1}$. This distribution acts on test functions $\varphi(r)$ as
\begin{equation}
 \label{eq:rm1def}
(r_+^{-1},\varphi)\equiv\int_{-\infty}^\infty dr \,r_+^{-1}\,\varphi(r)\equiv\int_0^1 dr\,r^{-1}[\varphi(r)-\varphi(0)]+\int_1^\infty dr\,r^{-1}\,\varphi(r).
\end{equation}

With the aid of the above results the renormalization functions $B(\mathring{u},\mathring{\sigma};\mu)$, $\mathcal{B}(u)$, and the renormalized functions $\tilde{G}^{(2,0)}_{\mathrm{R}}$ and $\tilde{G}^{(0,2)}_{\mathrm{R}}$ can be computed in a straightforward manner. Our results are
\begin{equation}
B(\mathring{u},\mathring{\sigma};\mu)=\mu^{-\epsilon}\mathring{\sigma}^{-m/4} \left[\frac{n}{2}\,\frac{F_{m,\epsilon}}{\epsilon}-\frac{\mathring{u}\,\mathring{\sigma}^{-m/4}}{\mu^\epsilon}\,\frac{n(n+2)}{12}\,\frac{F^2_{m,\epsilon}}{\epsilon^2}\right]+\Or(\mathring{u}^2),
\end{equation}
\begin{equation}
\mathcal{B}(u)=-F_{m,\epsilon}\,\frac{n}{2}+\Or(u),
\end{equation}
 \begin{eqnarray}
 \label{eq:Gamma20R}
1/\tilde{G}^{(2,0)}_{\mathrm{R}}(\bm{k},\bm{p})&=&p^2+\sigma k^4-\frac{n+2}{3}\,
\frac{u^2}{6}\,p^2{\Big[}{R^{(3)}_0}(k_p;m)
\nonumber\\ &&\strut -2\,R_{-1}^{(3)}(k_p;m)\ln(p/\mu)+\Or(\epsilon)\Big]+\Or(u^3),
\end{eqnarray}
and
\begin{eqnarray}
 \label{eq:G02R}
\tilde{G}^{(0,2)}_{\mathrm{R}}(\bm{k},\bm{p})&=&
\frac{n}{2}\,\frac{F_{m,\epsilon}}{\mu^{\epsilon}\,\sigma^{m/4}}
\left\{R_0^{(2)}(k_p;m)-\ln\frac{p}{\mu}+\epsilon R_1^{(2)}(k_p;m)\right.
\nonumber\\
&&\strut
-\frac{1}{2}\Big(\epsilon-u\frac{n+2}{3}\Big)\Big[2 R_0^{(2)}(k_p;m)-\ln\frac{p}{\mu}\Big]
\ln\frac{p}{\mu}\nonumber\\
&&\strut - u\frac{n+2}{6}\big[R_0^{(2)}(k_p;m)\big]^2\bigg\}+\Or(\epsilon^2,u^2,u\epsilon),
 \end{eqnarray}
where 
\begin{equation}
\label{eq:kp}
k_p\equiv\sigma^{1/4} k p^{-1/2}.
\end{equation}

Upon setting $u$ to its fixed-point value~\eqref{eq:ustar}, one can convince oneself that the results are in conformity with the scaling forms~\eqref{eq:RGsolG20} and \eqref{eq:RGsolG02}. The scaling functions $\Psi$ and $\Upsilon$ are found to have the $\epsilon$ expansions
 \begin{equation}
 \label{eq:Psires}
1/\Psi_{m,d^*-\epsilon}(\mathsf{k})=1+\mathsf{k}^4-2\epsilon^2\,\frac{n+2}{(n+8)^2}\,R^{(3)}_0(\mathsf{k};m)+\Or(\epsilon^3)
\end{equation}
and
 \begin{eqnarray}
 \label{eq:Yres}
\Upsilon_{m,d^*-\epsilon}(\mathsf{k})&=&1+\frac{\alpha_{\mathrm{L}}}{\nu_{\mathrm{L}2}}
\bigg\{R_0^{(2)}(\mathsf{k};m)
+\epsilon\,R_1^{(2)}(\mathsf{k};m)
-\epsilon\,\frac{n+2}{n+8}\,\big[R_0^{(2)}(\mathsf{k};m)\big]^2\Big\}
+\Or(\epsilon^3)\nonumber\\ &=&\left[1+R_0^{(2)}(\mathsf{k};m)\,\epsilon+R_1^{(2)}(\mathsf{k};m)\,\epsilon^2\right]^{\alpha_{\mathrm{L}}/(\nu_{\mathrm{L}2}\epsilon)}+\Or(\epsilon^3)
\end{eqnarray}
In deriving the first and second form of Eq.~\eqref{eq:Yres}, we made use of the fact that 
\begin{equation}\label{eq:alphanueps}
\frac{\alpha_{\mathrm{L}}}{\nu_{\mathrm{L}2}}=\frac{4-n}{n+8}\,\epsilon+\Or(\epsilon^2).
\end{equation}
The second form at this stage is just a convenient rewriting of the $\epsilon$~expansion from which we shall benefit in Section~\ref{sec:endensextrapol} when extrapolating the scaling function $\Upsilon_{m=1,d^*-\epsilon}$ to $d=3$ dimensions.

A cautionary remark is in order here.  Expansions of scaling functions in powers of $\epsilon$ such as those given in Eqs.~\eqref{eq:Psires} and the first line of Eq.~\eqref{eq:Yres}, as well as those derived below, are not directly suitable for extrapolations to $d=3$. They must be supplied with appropriate exponentiation hypotheses in order to capture the correct limiting behavior for $\mathsf{k}\to\infty$. In the case of the functions $\Psi_{m,d}(\mathsf{k})$ and $\Upsilon_{m,d}(\mathsf{k})$, this asymptotic large-$\mathsf{k}$ behavior is dictated by the requirement that the $p$-dependencies of $\tilde{G}^{(2,0)}_R(\bm{k},\bm{p})$ and $\tilde{G}^{(0,2)}_R(\bm{k},\bm{p})-\tilde{G}^{(0,2)}_R(0,\mu\hat{\bm{p}})$ drop out. A convenient possibility to achieve the exponentiation of the asymptotic power laws in the limits $\mathsf{k}\to\infty$ and $\mathsf{k}\to0$ is to choose the scale parameter $\ell\equiv \ell_1$ such that the dimensionless inverse free susceptibility $(p^2+\sigma k^4)/\mu^2$ becomes $1$ at the chosen scale $\ell_1$. Let us split off a factor $p/\mu$,  writing $\ell_1=\ell_{\mathsf{k}}p/\mu$. Then $\ell_{\mathsf{k}}$ must be a solution to
\begin{equation}\label{eq:lchoice}
\ell_{\mathsf{k}}^{-2}+\ell_{\mathsf{k}}^{-4\theta}\,\mathsf{k}^4=1.
\end{equation}
Instead of Eq.~\eqref{eq:scfPsi}, we then have
\begin{equation}\label{eq:Psilk}
\Psi_{m,d}(\mathsf{k})=\ell_{\mathsf{k}}^{\eta_{\mathrm{L}2}-2}\,\tilde{G}^{(2,0)}_R(\ell_{\mathsf{k}}^{-\theta}\mathsf{k},\hat{\bm{p}}\ell_{\mathsf{k}}^{-1};u^*,1,1)
\end{equation}
and a corresponding modification of Eq.~\eqref{eq:scfUps}, where 
\begin{equation}\label{eq:lkas}
\ell_{\mathsf{k}}\approx
\begin{cases}
1,&\text{for }\mathsf{k}\to0,\\
\mathsf{k}^{1/\theta},&\text{for }\mathsf{k}\to\infty.
\end{cases}
\end{equation} We shall return to the issue of the extrapolation of the scaling functions in Sec.~\ref{sec:endensextrapol} when extrapolating the $\epsilon$-expansion of $\Upsilon_{m=1,9/2-\epsilon}(\mathsf{k})$ to $d=3$.

\subsection{The special cases $d=m+3$, $d=m+3=5-2\epsilon$, and  $(m,d)=(2,5-\epsilon)$}\label{sec:corrfctdm3}

For general values of $m$, the functions $R^{(j)}_l$ appearing in the $\epsilon$ expansions~\eqref{eq:Psires} and \eqref {eq:Yres} of the scaling functions would have to be determined by numerical means. However, on the line  $d=m+3$,  the functions $\Phi_{m,d}(v)$ reduce to simple Gaussians,
\begin{equation}
   \label{eq:frm2}
   \Phi_{m,m+3}(v)=(4\pi)^{-(m+2)/2}\,e^{-v^2/4}.
 \end{equation}
This makes it possible to determine the functions $R^{(j)}_l(\mathsf{k};2)$ in closed analytical form. Details of the calculations are described in Appendix~\ref{app:I23}. To state the results and for subsequent use it is convenient to introduce the functions
\begin{equation}
\label{eq:Q0def}
Q_0(s)=\Im[(s-i) \ln(s-i)]=-s\,\vartheta(s)-\frac{1}{2}\ln(1+s^2)
 \end{equation}
with
\begin{equation}\label{eq:defvartheta}
\vartheta(s)=\arctan(1/s)
\end{equation}
and the dilogarithm $\mathrm{Li}_2(z)$, a special one of the polylogarithm functions,
defined by analytic continuation of their Taylor series (see Refs.~\cite{Lewin} and \cite[Sec.~2.6]{AAR}).
\begin{equation}
\mathrm{Li}_k(z)= \sum_{j=1}^\infty\frac{z^j}{j^k}\quad \text{for }|z|<1
\quad \text{and } k=2,3,\dotsc
\end{equation}
to the complex plane, with a branch cut along the positive real $z$-axis from $z=1$ to $z=\infty$.
The analytical continuation of $\mathrm{Li}_2(z)$ is provided by its
integral representation \cite{Lewin,AAR,Max03}
\begin{equation}
\mathrm{Li}_2(z)=-\int_0^zdt\,\frac{\ln(1-t)}{t}=-\int_0^1 dt\,\frac{\ln(1-zt)}{t}\,.
\end{equation}
We shall also need the Clausen integral (see Refs.~\cite{Lewin}, \cite{Max03}, and \cite[Sec.~27.8]{AS72})
\begin{equation}
\mathrm{Cl}_2(z)\equiv\sum_{j=1}^\infty \frac{\sin(jz)}{j^2}=
\Im [\mathrm{Li}_2(e^{iz})]=-\int_0^zdt\,\ln\Big[2\sin\frac{t}{2}\Big]\,.
\end{equation}
The above special functions are intensively used in papers dealing with
calculations of Feynman integrals appearing in usual $\phi^4$ theory and QFT
\cite{DevDuke84,DavDel98,Dav00,DavKal01,Coffey08}. 

In terms of these quantities our results read
\begin{equation}
\label{eq:R20m2}
R^{(2)} _ 0 (\mathsf{k}; 2)=Q_0(\mathsf{k}^2/2)\,,
\end{equation}
\begin{eqnarray}
\label{eq:R21m2}
R_1^{(2)}(\mathsf k;2)&=&\frac{\pi^2}{6}-\frac{1}{2}
-\frac{\mathsf k^2}{4}\vartheta(\mathsf{k}^2/2)-\vartheta^2(\mathsf{k}^2/2)+\frac{\vartheta(\mathsf{k}^2/2)-\vartheta(\mathsf{k}^2)}{\mathsf k^{2}}\nonumber\\&&\strut 
-\frac{\mathsf{k}^2}{4}\,[\vartheta(\mathsf{k}^2/2)-\vartheta(\mathsf{k}^2)]\ln{\bigg(\frac{4+\mathsf k^4}{\mathsf{k}^4}\bigg)} +
\frac{\mathsf k^2}{2}\vartheta(\mathsf{k}^2/2)\ln\bigg(\frac{4+\mathsf k^4}{8}\bigg)\nonumber\\
\nonumber
&&\strut
+\frac{1}{2}\ln\bigg(\frac{4+4\mathsf k^4}{4+\mathsf k^4}\bigg)+
\frac{1}{8}\ln^2(4+\mathsf k^4)+\Re\bigg[\mathrm{Li}_2\Big(\frac{i}{2i+\mathsf k^2}\Big)\bigg]
\\\nonumber
&&\strut
-\frac{1}{4}\mathrm{Li}_2(-\mathsf{k}^4/4)
+\frac{1}{4}\mathrm{Li}_2(-\mathsf{k}^4)-\frac{\mathsf{k}^2}{2}\,\mathrm{Cl}_2{\big[2\vartheta(\mathsf{k}^2/2)\big]}\\
&&\strut
+\frac{\mathsf k^2}{4}\,\mathrm{Cl}_2\big[2\vartheta(\mathsf{k}^2)\big]
+\frac{\mathsf k^2}{4}\,\mathrm{Cl}_2\big[2\vartheta(\mathsf{k}^2/2)-2\vartheta(\mathsf{k}^2)\big],
\end{eqnarray}
and
\begin{equation}
\label{eq:R30m2}
R^{(3)} _ 0 (\mathsf{k}; 2)=\frac{4}{18} \bigg(\frac{11}{6}-\gamma_E\bigg)
\Im \bigg[\bigg(\frac{\mathsf{k}^2}{3}+i\bigg)^3\bigg]-\frac{2}{9}\,
\Im\bigg[\bigg(\frac{\mathsf{k}^2}{3}+i\bigg)^3
\ln\bigg(\frac{\mathsf{k}^2}{3}+i\bigg)\bigg].
\end{equation}

Furthermore, it is possible to compute the required Feynman integrals on the line $d=m+3$ in closed form (see Appendix~\ref{app:I23}) for general values of $m$. When $d=d^*(m)-\epsilon$, the constraint $d=m+3$ implies that $d=5-2\epsilon$ and $m=2-2\epsilon$. Using the results for the integrals $I_j(\mathsf{k};m,m+3)$ given in Eqs.~\eqref{eq:I2deqmp3} and \eqref{eq:I3deqmp3}, one can compute the $\epsilon$ expansions of the scaling functions $\Psi_{2-2\epsilon,5-2\epsilon}$ and $\Upsilon_{2-2\epsilon,5-2\epsilon}$. Our results are
\begin{eqnarray}
\label{eq:Psidm3exp}
1/\Psi_{2-2\epsilon,5-2\epsilon}(\mathsf{k})&=&1+\mathsf{k}^4+2\epsilon^2\,\frac{n+2}{(n+8)^2}\Bigg\{\frac{1}{27}\bigg(5-\ln\frac{64}{27}\bigg)
\left(1-\frac{\mathsf{k}^4}{3}\right)\nonumber\\&&\strut+\frac{2}{9}\,\Im\bigg[\bigg(\frac{\mathsf{k}^2}{3}+i\bigg)^3\ln\bigg(\frac{\mathsf{k}^2}{3}+i\bigg)\bigg]\Bigg\}+\Or(\epsilon^3)
\end{eqnarray}
and
\begin{equation}
\label{eq:Ydm3exp}
\Upsilon_{2-2\epsilon,5-2\epsilon}(\mathsf{k})=1+\epsilon\frac{4-n}{n+8}
\Big\{Q_0(\mathsf{k}^2/2)
+\epsilon\, Q_1(\mathsf{k}^2/2)
-\epsilon\frac{n+2}{n+8}\,\big[Q_0(\mathsf{k}^2/2)\big]^2\Big\}
+\Or(\epsilon^3),
\end{equation}
where $Q_1(s)$ is defined by
\begin{eqnarray}
\label{eq:Q1def}
Q_1(s)&\equiv&\frac{\pi^2}{8}+\frac{1}{2}\,\Im[(s+i) \ln^2( s+i)]\nonumber\\
&=&\frac{\pi^2}{8}+\frac{1}{8}\ln^2(1+s^2)+
\frac{s}{2}\vartheta(s)\ln(1+s^2)-\frac{1}{2}\vartheta^2(s).
\end{eqnarray}

In Fig.~\ref{fig:Qs}, %
\begin{figure}[htbp]
\begin{center}
\includegraphics[width=90mm]{./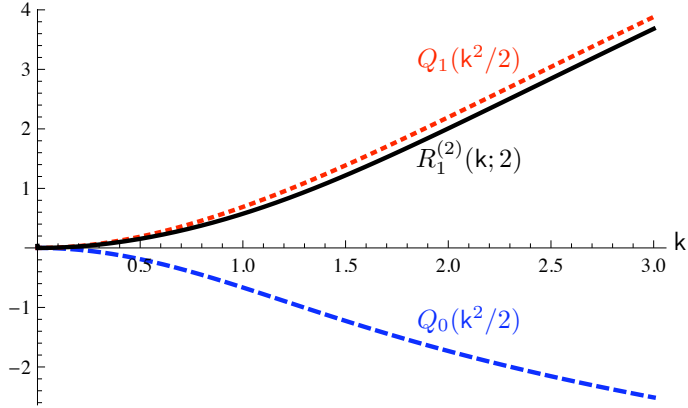}
\caption{The functions $Q_0(\mathsf{k}^2/2)=R_0^{(2)}(\mathsf{k};2)$ (blue, dashed), $R_1^{(2)}(\mathsf{k};2)$ (black, full line), and $Q_1(\mathsf{k}^2/2)$ (red,dotted).}
\label{fig:Qs}
\end{center}
\end{figure}
the functions  $Q_0(\mathsf{k}^2/2)$ [Eq.~\eqref{eq:Q0def}], $R_1^{(2)}(\mathsf{k};2)$ [Eq.~(\ref{eq:R21m2})], and $Q_1(\mathsf{k}^2/2)$ [Eq.~\eqref{eq:Q1def}] are plotted as blue dashed, black full, and red dotted curves, respectively. Note that the  $\Or(\epsilon)$ terms of the scaling functions $\Upsilon_{2,5-\epsilon}(\mathsf{k})$ [Eq.~\eqref{eq:Yres}] and $\Upsilon_{2-2\epsilon,5-2\epsilon}(\mathsf{k})$ [Eq.~\eqref{eq:Ydm3exp}] agree and are given by $\epsilon \,Q_0(\mathsf{k}^2/2)\,(4-n)/(n+8)$. The function $Q_1(\mathsf{k}^2/2)$ is the analog of $R^{(2)}_1(\mathsf{k};2)$ for the case $d-m=3$. Their difference is small.

\subsection{The energy-density cumulant in the uniaxial case $m=1$}
    \label{sec:endens} 
 
Since the predictions of the LSI theory were used to fit the Monte Carlo data for the ANNNI model, the case of a uniaxial Lifshitz point $m=1$ is of particular interest. Using our results for general values of $m$ described in Section~\ref{sec:genm}, we could determine the scaling functions for $m=1$ by numerical integration. However, in order to compare with the predictions of the LSI theory it is preferable to have as much precise mathematical knowledge available as possible. It turns out that more detailed valuable analytical results can be obtained for the scaling function $\Upsilon$ of the energy-density cumulant. To do this we will start from the  momentum-space representation of the Feynman integral $I_2(\mathsf{k};1,d)$, derive a contour integral representation of the latter, and relate it to solutions of a Fuchsian third order differential equation.

The scaling properties of the integral $J_2(\bm{k},\bm{p};\sigma)$ defined in Eq.~\eqref{eq:Jldefx} imply that  the function $I_2(\mathsf{k};m=1,d)$ can be written as
\begin{equation}
\label{eq:I2}
I_2(\mathsf{k};1,d)=\mathsf{k}^{-2\epsilon} J(\mathsf{k}^{-2})
\end{equation}
with
   \begin{equation}
\label{eq:JPdef} 
J(P)=\int_{-\infty}^\infty\frac{dk}{2\pi}\int \frac{d^{d-1} p}{(2 \pi )^{d-1}}\, \frac{1}{(p^2+ k^4)
[({\bm p} - \bm{P})^2+(k-1)^4]}.
\end{equation}
 
The integral on the right-hand side of this equation  converges for $5/2<d<9/2$. It defines a function of $P$ that is regular near the origin $P=0$. Hence it has a Taylor expansion of the form
 \begin{equation}
 \label{eq:JPsm}
 J(P)=\sum_{j=0}^\infty C_j P^{2 j}. 
 \end{equation}
 On the other hand, the function $P^\epsilon J(P)$ is regular at  $P = \infty$.  Therefore, $J(P)$ can be expanded as 
 \begin{equation}
 \label{eq:JPlg}
   J (P)=P^{-\epsilon}\sum_{j=0}^\infty B_j P^{- j}
 \end{equation}
 for large $P$.

Much more information can be gained from the following contour-integral representation of $J(P)$ proved in Appendix~\ref{app:Cont}:
  \begin{equation}
 \label{eq:JcalJ}
  J (P)= \frac{S_{d-3}}{(2\pi)^{d-2}} \,\frac{2^{4-2d}} {\cos(\pi\, d)}\,\mathcal{J}(P^2)
 \end{equation}
with
\begin{equation}
 \label{eq:calJw}
\mathcal{J}(w)= -i\Big[ \mathcal{J}_1 (w)+\big(1-e^{-i\pi d}\big)\,\mathcal{J}_3(w)\Big],
\end{equation}
where 
 \begin{equation}
 \label{eq:Jcal1}
 \mathcal{J}_1 (w) = \int_{t_-(w)}^{t_+(w)} \frac{dt}{\sqrt{t}}(t-1)^{(d-4)/2} \rho (t,w)^{(d-4)/2}
 \end{equation}
 and
 \begin{equation}
  \label{eq:Jcal3}
 {\mathcal J}_3(w) =e^{i\pi d/2}\int_{0}^{1} \frac{dt}{\sqrt{t}}(1-t)^{(d-4)/2} \rho (t,w)^{(d-4)/2}. 
 \end{equation}
 Here $\rho(t,w)$ denotes the function
  \begin{equation}
  \label{eq:rho}
 \rho(t,w)= (w t-1)^2+4w
 \end{equation}
 and 
 \begin{equation}
 \label{eq:tpm}
   t_{\mp}(w)=w^{-1}\mp 2 i \,w^{-1/2}
   \end{equation}
 are its zeros.

The integration paths of the integrals~\eqref{eq:Jcal1} and \eqref{eq:Jcal3} are illustrated in Fig.~\ref{fig:B0int}.
\begin{figure}[htbp]
\begin{center}
\includegraphics[width=0.45\textwidth]{./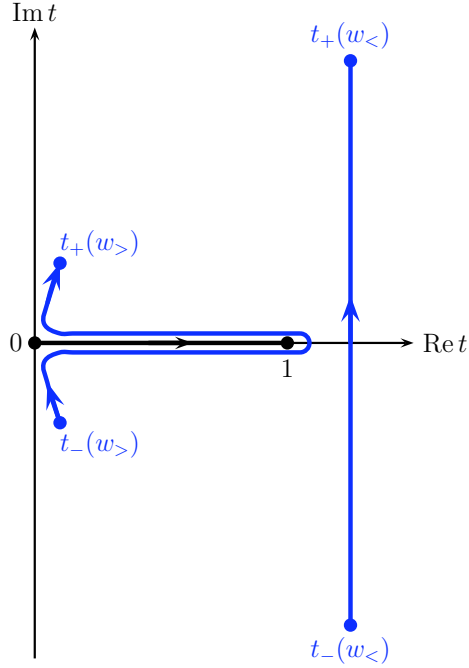}
\caption{Integration paths for the integrals $\mathcal{J}_1(w)$ (black), $\mathcal{J}_3(w_<)$ (blue), and $\mathcal{J}(w_>)$ (blue) with $0<w_<<1$ and $1<w_><\infty$, respectively.}
\label{fig:B0int}
\end{center}
\end{figure}
For values of $w$ with $0<w<1$, the path for $\mathcal{J}_1(w)$ is parallel to the imaginary axis. When $1<w<\infty$,  the roots~\eqref{eq:tpm} have real parts $\Re \,t_\pm=1/w<1$. To define $\mathcal{J}_1(w)$ in this case by proper analytic continuation, a path must be chosen that goes around the branch cut $(0,1)$ of its integrand. 
Deforming this path such that the subpaths along the $\Im\, t<0$ and $\Im \, t>0$ rims of the branch cut pass through $\pm i0$, respectively, one sees that the sum of integrals along these subpaths cancels the contribution to $\mathcal{J}(w)$ from $\mathcal{J}_3(w)$. Hence we can rewrite $\mathcal{J}(w)$ as
\begin{equation}
\label{eq:Jcalrewritten}
\mathcal{J}(w)=-i\left(\int_{t_-(w)}^{-i0}+\int_{i0}^{t_+(w)}\right)\frac{dt}{\sqrt{t}}\,[(t-1)\,\rho(t,w)]^{(d-4)/2} \quad\text{when }0<w<\infty.
\end{equation}

The integrals~\eqref{eq:Jcal1} and \eqref{eq:Jcal3}  as well as \eqref{eq:Jcalrewritten} converge for $d>2$. The UV singularities of $J(P)$ therefore originate from the zeros of the factor  $\cos (\pi d)=\sin(\pi \epsilon)$ in the denominator of the right-hand side of Eq.~\eqref{eq:JcalJ}. The integrals~\eqref{eq:Jcal1} and \eqref{eq:Jcal3} are of the Euler type. They belong to the same class as  Euler's hypergeometric integral 
\begin{equation} \label{eq:f21int}
_{2}F_1(a,b;c;z)=\frac{\Gamma(c)}{\Gamma(b)\,\Gamma(c-b)}\int_0^1\frac{dw}{w^{1-b}\,(1-w)^{1-c+b}\,(1-wz)^a}
\end{equation}
defining the hypergeometric function $_{2}F_1$ [see e.g.\ Eq.~(15.3.1) of Ref.~\cite{AS72}]. There are several ways to show that the so-defined function is a solution to the hypergeometric differential equation \cite{Mas97}. We pursue a similar strategy here. Following the procedure described in  Masaaki's book  \cite[pp. 88--89]{Mas97}, one can derive an ordinary differential equation of third order and Fuchsian type \cite{Fuc1866,Fuc1868} that is solved by the Euler integrals~(\ref{eq:Jcal1}) and (\ref{eq:Jcal3}). Such differential equations are frequently exploited in studies of Feynman integrals and their singularities, see for example \cite{Bou07}.

The functions ${\mathcal J}(w)$, $\mathcal{J}_1(w)$, and $\mathcal{J}_3(w)$ all obey the same differential equation. Let us introduce the operator
\begin{equation}
\label{eq:Ddef}
D=w\frac{d}{dw},
\end{equation}
along with the parameter
\begin{equation}
\label{eq:lambdadef}
\lambda=(d-4)/2
\end{equation}
and the coefficients
\begin{equation}
\label{eq:a0def}
 a_0 =-\frac{\lambda\,(1+6 \lambda)(8\lambda-9)}{6(1+4 w)}- \frac{\lambda\,(3+4\lambda)(1+6\lambda)}{6(1+ w)},
\end{equation}
\begin{equation}
\label{eq:a1def}
 a_1 =\frac{392 \lambda-39-304 \lambda^2}{48(1+4 w)}- \frac{\lambda\,(52\lambda-5)}{6(1+ w)}- \frac{(4\lambda-3)(4\lambda-1)}{16}, 
\end{equation}
and 
\begin{equation}
\label{eq:a2def}
 a_2=\frac{3-2\lambda}{2(1+4 w)}- \frac{4 \lambda}{1+ w}+2-2\lambda.
\end{equation}
Then this differential equation can be written as
 \begin{eqnarray}
 \label{eq:dif}
 \lefteqn{(2-2 \lambda+D)(1-2 \lambda+D)(D-2 \lambda) \mathcal{J}(w)}&&\nonumber\\ &=& 
 a_0 \,\mathcal{J}(w)+a_1 (D-2 \lambda) \mathcal{J}(w)+
 a_2\,(1-2 \lambda+D)(D-2 \lambda) \mathcal{J}(w).
 \end{eqnarray}

Inspection of the coefficients of its terms proportional to $d^k\mathcal{J}/dw^k$ with $k=3,2,1,0$ reveals that it is a Fuchsian differential equation with regular singular points at $w=0$, $w=-1/4$, $w=-1$, and $w=\infty$. It has three linearly independent solutions, two of which are regular at the origin. The pole of the coefficients $a_0$ and $a_1$ that is closest to the origin is located at $w=-1/4$. Hence the Taylor expansions of solutions ${\mathcal J}(w)$ that are regular at the origin, 
 \begin{equation}
  \label{eq:Taylor}
  \mathcal{J}(w) =\sum_{j=0}^\infty \, A_jw^j, 
 \end{equation}
are guaranteed to converge inside the disc $|w|<1/4$. 
Substituting this expansion into Eq.~\eqref{eq:dif} leads to the recursion relations 
 \begin{eqnarray}
   \label{eq:rec}
A_{j+2} &=&\frac{(4 \lambda-1-4 j)(3+4 j-4 \lambda)(j-2\lambda)}{2(1+j)(2+j)(5+2 j+2 \lambda)} \,A_j
 \nonumber\\ &&\strut 
 - \frac{35+54\, j+20\,j^2-26\lambda-12j\lambda-8\lambda^2}{2(2+j)(5+2 j+2 \lambda)}A_{j+1}. 
 \end{eqnarray}

The low-order coefficients $A_0$ and $A_1$ can be computed in a straightforward fashion from Eqs.~\eqref{eq:calJw}--\eqref{eq:Jcal3}. The  change of variables $t=w^{-1}+2iw^{-1/2}v\to v$ transforms  $ \mathcal{J}_1 (w)$ into an integral from $v=-1$ to $v=+1$. Expanding the integrand in powers of $\sqrt{w}$ to order $w$ and integrating term by term then gives
\begin{equation}
 \mathcal{J}_1 (w)=i\,2^{1+2\lambda}\,N_\lambda\left[1-\frac{3-2\lambda+8\lambda^2}{2(3+2\lambda)}\,w\right]+ \Or(w^2)
\end{equation}
with
\begin{equation}\label{eq:Nlambda}
N_\lambda=\pi^{1/2}\,\Gamma(\lambda+1)/\Gamma(\lambda+3/2).
\end{equation}
The expansion
\begin{equation}
 \mathcal{J}_3 (w)=e^{i\pi\lambda}\,N_\lambda\left[1+\frac{2\lambda(5+4\lambda)}{3+2\lambda}\,w\right]+ \Or(w^2)
\end{equation}
can be obtained in a similar fashion. Writing the Taylor expansion of $\mathcal{J}(w)$ as 
 \begin{equation}\label{eq:TEX}
 \mathcal{J}(w) = N_\lambda\,
 \sum_{j=0}^\infty \mathcal{A}_j w^j,
 \end{equation}
we can combine these results with Eq.~\eqref{eq:calJw} to conclude that
\begin{equation}\label{eq:ANUL}
\mathcal{A}_0= 2[4^\lambda+\sin(\pi\lambda)]
\end{equation}
and
\begin{equation}\label{eq:AONE}
  \mathcal{A}_1= 4^{\lambda}\,\frac{2\lambda-8\lambda^2-3}{3+2\lambda}+
 4\lambda \,\frac{5+4\lambda}{3+2\lambda}\,\sin(\pi\lambda).
 \end{equation}
 
Substituting expression~\eqref{eq:ANUL} for $\mathcal{A}_0$ along with  Eqs.~\eqref{eq:Nlambda} and \eqref{eq:lambdadef} into Eq.~\eqref{eq:JcalJ} yields
\begin{equation}
C_0=J(0)=2^{4-3 d} \pi^{(3-d)/2}\,\frac{ 2^d+16
   \sin(\pi  d/2)}{\Gamma[(d-1)/2]\, \cos (\pi  d)}
   \end{equation}
for the coefficient $C_0$ appearing in Eq.~\eqref{eq:JPsm}. The result agrees with the value of $I(0,1)$ given by Eq.~(B.23) of Ref.~\cite{SPD05} when $m=1$.

The coefficients $\mathcal{A}_j$ with $j\geq 2$  can be determined, on the one hand, from $\mathcal{A}_0$ and $\mathcal{A}_1$ with the aid of the recursion relations~\eqref{eq:rec}. On the other hand, a general explicit expression for
them can be derived from Eq.~(26) of  Ref.~\cite{Sh07}, where the result of
the inner $p$ integration of the double integral~\eqref{eq:JPdef} for $J(P)$ is given
in terms of the Appell \cite{Appell26} function $F_1$. The parameter $\varepsilon$  used in Ref.~\cite{Sh07} is defined as $ \varepsilon=4-D$, where $D$ is the dimension of the $\bm{p}$-integral. Hence, we must set $\varepsilon =4-(d-1)=5-d=1-2\lambda$, $p_x=P$ and $m_{1,2}=(k\mp 1/2)^2$ in the relevant Eq.~(26) of this reference.  Substituting the standard representation of the Appell function $F_1$ as a double series, given in Eq.~(20) of Ref.~\cite{Sh07},  into Eq.~\eqref{eq:JPdef} one can integrate term by term over $k$ to determine the series expansion~\eqref{eq:JPsm} of the integral $J(P)$. The series expansion~\eqref{eq:TEX} of $\mathcal{J}(P^2)$ then follows via Eq.~\eqref{eq:JcalJ}. Our results for the coefficients are
\begin{eqnarray}\label{eq:AGEN}
\mathcal A_j&=&\frac{2\sin(\pi\lambda)}{\lambda\sqrt\pi}\,
\frac{\Gamma(2+2\lambda)}{\Gamma(1/2+2\lambda)}\,
\frac{(1/2-2\lambda)_{2j}}{(1-\lambda)_j(3/2)_j}\;(-1)^j
\nonumber\\&&\strut\times \,_3F_2(-\lambda,1/2,1/2-2\lambda+2j;
1-\lambda+j,3/2+j;1),
\end{eqnarray}
where $(c)_j\equiv\Gamma(c+j)/\Gamma(c)$ denotes the Pochhammer symbol. For $j=0$ and $j=1$, the last expression reproduces Eqs.~\eqref{eq:ANUL} and (\ref{eq:AONE}), respectively. Using {\sc Mathematica} \cite{Mathematica7}, one can check that these coefficients, Eq.~\eqref{eq:AGEN}, satisfy the recursion relations~\eqref{eq:rec}.

From Eq.~\eqref{eq:AGEN} one can obtain the $\epsilon$~expansion of the coefficients $\mathcal A_j$
to $\Or( \epsilon)$ in a simple manner. Recalling that $ \epsilon=1/2-2\lambda$ and
taking into account that $( \epsilon)_{2k}= \epsilon(2k-1)!+\Or( \epsilon^2)$
as $ \epsilon\to 0$, one finds
\begin{equation}\label{eq:AGON}
\mathcal A_0=3\sqrt 2-\sqrt 2(\pi/2+2\ln 2)\, \epsilon+\Or( \epsilon^2)
\end{equation}
and
\begin{equation}
\mathcal A_j= \epsilon\,\frac{3\sqrt 2 (2 j-1)!}{(3/4)_j(3/2)_j}\;(-1)^j
\,{_3F_2}(-1/4,1/2,2j;3/4+j,3/2+j;1)+\Or( \epsilon^2),\quad j\ge 1.
\end{equation}
The $\epsilon$~expansions of the first two coefficients $\mathcal A_j$ with $j\ge 1$ are given by
\begin{equation}\label{eq:A1}
\mathcal A_1=-\frac{3\sqrt 2}{7}(4+\pi-2\ln 2) \epsilon+\Or( \epsilon^2)
\end{equation}
and
\begin{equation}\label{eq:A2}
\mathcal A_2=\frac{3\sqrt 2}{11}\left[9+2(\pi-2\ln 2)\right] \epsilon
+\Or( \epsilon^2)\,.
\end{equation}
Higher-order terms of their $ \epsilon$ expansions can be obtained by means of the methods
developed in Refs.~\cite{Wei04} and \cite{DavKal04}.

Next, let us consider the asymptotic behavior of $\mathcal{J}(w)$ for  $w\to\infty$. As is clear from Eqs.~\eqref{eq:I2} and \eqref{eq:JcalJ}, the limiting form of  $\mathcal{J}(w\to \infty)$ is needed to determine the value of the function $I_2(\mathsf{k};1,d)$ at $\mathsf{k}=0$ and check its consistency with the previously obtained result for $I_2(0;m,d)$ \cite{SD01} recalled in Eq.~\eqref{eq:I2zero}. The required information can be derived by analytic means directly from the integral representation given in Eqs.~\eqref{eq:JcalJ}--\eqref{eq:Jcal3}. To see this, note that Eq.~\eqref{eq:JPlg} translates into a large-$w$ expansion of the form
\begin{equation}\label{eq:Jcallargewexp}
  {\mathcal J}(w)=  w^{\lambda-1/4}\sum_{j=0}^\infty {\mathcal B}_j \,w^{-j/2}.
\end{equation}
To determine the asymptotic term $\propto \mathcal{B}_0$, we substitute the approximation $t_\pm(w)\approx \pm 2i w^{-1/2}$ into the integral~\eqref{eq:Jcalrewritten}, obtaining
\begin{eqnarray}
\mathcal{J}(w)&\mathop{\approx}_{w\to\infty}&-iw^{2\lambda}\left(e^{i\pi\lambda}\int_0^{2iw^{-1/2}}+e^{-i\pi\lambda}\int^0_{-2iw^{-1/2}}\right)dt\,t^{-1/2}\,\bigg(t^2
-\frac{4}{w}\bigg)^\lambda \nonumber\\ &=&w^{\lambda-1/4}\,2^{2\lambda+3/2}\cos[\pi(\lambda-1/4)]\int_0^1dx\,x^{-1/2}(1-x^2)^\lambda.
\end{eqnarray}
The remaining integral can be performed. We thus arrive at the result
 \begin{equation}\label{eq:Bcal0}
  \mathcal{B}_0=2^{1/2+2\lambda} \cos[\pi(\lambda-1/4)]
 \frac{\Gamma(1/4)\,\Gamma(1+\lambda)}{\Gamma(\lambda+5/4)}.
 \end{equation}
 This can be combined with Eqs.~\eqref{eq:JcalJ} and \eqref{eq:I2} to compute $I_2(0;1,d)$. The result agrees indeed with Eq.~\eqref{eq:I2zero}.
 
 \subsection{Extrapolation of the $m=1$ energy-density scaling function to $d=3$}
 \label{sec:endensextrapol}
 
Before turning to a more detailed discussion of the above results, we will first use them to obtain an extrapolation of the energy-density scaling function $\Upsilon_{1,d}(\mathsf{k})$ to $d=3$ dimensions. Starting from the second form of Eq.~\eqref{eq:Yres}, we combine Eqs.~\eqref{eq:I2}, \eqref{eq:JcalJ}, \eqref{eq:Jcallargewexp}, and \eqref{eq:Bcal0} to conclude that its term in square brackets can be written as
\begin{equation}\label{eq:kJcalexp}
\mathsf{k}^{-2\epsilon}\,\mathcal{J}(\mathsf{k}^{-4})/\mathcal{B}_0=1+R_0^{(2)}(\mathsf{k};1)\,\epsilon+R_1^{(2)}(\mathsf{k};1)\,\epsilon^2+\Or(\epsilon^3).
\end{equation}
Substitution  of  the left-hand side into Eq.~\eqref{eq:Yres} yields the asymptotic large-$\mathsf{k}$ behavior $\mathsf{k}^{-2 \alpha_{\mathrm{L}}/\nu_{\mathrm{L}2}}$. The exponent may be recognized as the expansion of $-\alpha_{\mathrm{L}}/(\nu_{\mathrm{L}2}\theta)$  to $\Or(\epsilon)$. Hence, the $\epsilon$-expansion result~\eqref{eq:Yres} is consistent with the expected $k$-dependency of $\tilde{G}^{(0,2)}_R$ to this order.

In order to cast our $\epsilon$-expansion result in a form that  is well suited for extrapolating it to $d=3$, we follow the strategy which led to the crossover scaling form~\eqref{eq:Psilk} of $G^{(2,0)}_R$. We exploit the behavior of $\tilde{G}^{(0,2)}_R(\bm{\mathsf{k}},\hat{\bm{p}};0,0,u^*,1,1)$ under scale transformation, choosing the  scale parameter $\ell_{\mathsf{k}}$ again as the solution to Eq.~\eqref{eq:lchoice}. The analog of Eq.~\eqref{eq:Psilk} leads to the form 
\begin{equation}
 \label{eq:Yextrap}
D_1\Upsilon_{1,3}(D_2\mathsf{k})=\left[\frac{\mathcal{J}\big(\mathsf{k}^{-4}\ell_{\mathsf{k}}^{4\theta-2}\big)}{\mathcal{B}_0\,\mathsf{k}^{2\epsilon}\ell_{\mathsf{k}}^{\epsilon(1-2 \theta)}}\right]^{\alpha_{\mathrm{L}}/(\nu_{\mathrm{L}2}\epsilon)},
\end{equation}
where $D_1$ and $D_2$ are two nonuniversal metric factors we introduced  to adjust the amplitude of $\Upsilon_{1,3}$ and the scale of the variable $\mathsf{k}$. We have this freedom since we know from Eq.~\eqref{eq:RGsolG02} that the scaling form of $\tilde{G}^{(0,2)}_R$ involves two such factors (related to $E_\tau^*$ and $E_\sigma^*$). 
We will fix them via the normalization conditions
\begin{equation}\label{eq:normY1}
\Upsilon_{1,3}(0)=1
\end{equation}
and 
\begin{equation}\label{eq:normY2}
\lim_{\mathsf{k}\to\infty}\mathsf{k}^{\alpha_{\mathrm{L}}/(\nu_{\mathrm{L}2}\theta)} \Upsilon_{1,3}(\mathsf{k})=1.
\end{equation}
For $\mathcal{B}_0$, we substitute its exact ($d=3$)-value $\mathcal{B}_0|_{\lambda=-1/2}$. Further, we set $\epsilon=3/2$. Taking into  account that $\mathcal{J}(0)=N_{\lambda}\mathcal{A}_0$, $\mathcal{J}(w)\approx \mathcal{B}_0w^{-\epsilon/2}$ as $w\to\infty$, and the limiting behaviors~\eqref{eq:lkas} of $\ell_{\mathsf{k}}$, one finds from the normalization conditions:
\begin{eqnarray}
D_1&=&1,\nonumber\\ 
D_2&=&\left[N_\lambda\mathcal{A}_0/\mathcal{B}_0\right]^{-2\theta/3}\big|_{\lambda=-1/2}=\left[2\sqrt{\pi}\,\Gamma(3/4)/\Gamma(1/4)\right]^{-2\theta/3}.
\end{eqnarray}

For the critical exponents Eq.~\eqref{eq:Yextrap} involves, we use the $d=3$~estimates $\alpha_{\mathrm{L}}/\nu_{\mathrm{L},2}\simeq 0.21$ and $\theta\simeq 0.47$  of Ref.~\cite{SD01}. The function $\mathcal{J}(w)$ can be computed numerically from the integral representation 
\begin{equation}\label{eq:Jcalintrepfinal}
\mathcal{J}(w)=2\,\, \Im \int_{i0}^{t_+(w)}dt\,t^{-1/2}(t-1)^{-1/2}\,\rho(t,w)^{-1/2} \quad\text{for }0<w<\infty,
\end{equation}
to which Eqs.~\eqref{eq:calJw}--\eqref{eq:tpm} simplify when $d=3$.  Likewise, we use numerical means to solve Eq.~\eqref{eq:lchoice} for $\ell_{\mathsf{k}}$. The resulting scaling function $\Upsilon_{1,3}(\mathsf{k})$ one obtains in this fashion from Eq.~\eqref{eq:Yextrap} is depicted in Fig.~\ref{fig:Yextrap} and compared with the scaling function  $\Upsilon_{m=1,d=3}^{\text{free}}(\mathsf{k})$ of a free massless field theory with action
\begin{equation}\label{eq:Hfree}
\mathcal{H}_{m,d}^{\text{free}}[\mathcal{E}]=\frac{1}{2}\int \frac{d^mk}{(2\pi)^m}\int \frac{d^{d-m}p}{(2\pi)^{d-m}} \,(p^2+k^4)^{\tilde{\Delta}}\,\mathcal{E}_{\bm{k},\bm{p}}\mathcal{E}_{-\bm{k},-\bm{p}}\,.
\end{equation}
\begin{figure}[htbp]
\begin{center}
\includegraphics[width=0.6\columnwidth]{./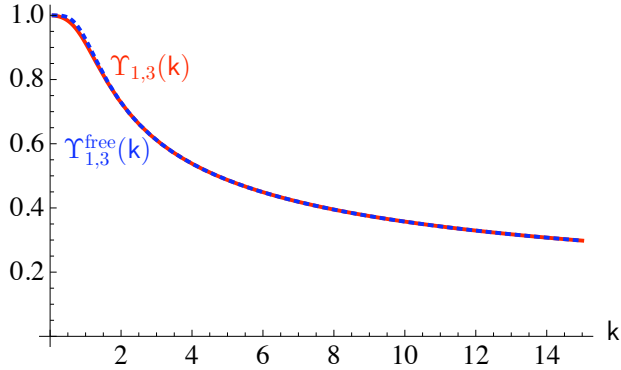}
\caption{Comparison of the extrapolated scaling function  $\Upsilon_{1,d=3}(\mathsf{k})$ (red full line) defined in Eq.~\eqref{eq:Yextrap} with the LSI prediction $\Upsilon^{\text{free}}_{1,3}(\mathsf{k})$ (blue dotted line) given in Eq.~\eqref{eq:YLSI}. For further explanations, see main text.}
\label{fig:Yextrap}
\end{center}
\end{figure}
The two-point cumulant one obtains from this action, 
\begin{eqnarray}\label{eq:YLSI}
\langle\mathcal{E}_{k',\bm{p}'}\mathcal{E}_{k,\bm{p}}\rangle^{\text{cum}} &=&p^{-2\tilde{\Delta}}\,\Upsilon_{m,d}^{\text{free}}(k p^{-\theta_4})\,(2\pi)^d\,\delta(\bm{k}+\bm{k}')\,\delta(\bm{p}+\bm{p}')\,,\nonumber\\
\Upsilon_{m,d}^{\text{free}}(\mathsf{k})&=&(1+\mathsf{k}^4)^{-\tilde{\Delta}}\,,
\end{eqnarray}
agrees with what the LSI prediction for $N=4$ ($\theta_N=1/2$) and $n=m=1$ given in Eqs.~\eqref{eq:modGscf} and \eqref{eq:Solg} yields  upon normalization according to Eqs.~\eqref{eq:normY1} and \eqref{eq:normY2} if  the unacceptable contribution $\propto b^{(4)}_1$ is dropped. The exponent $2\tilde{\Delta}$ corresponds to $\alpha_{\mathrm{L}}/\nu_{\mathrm{L}2}$. We therefore used the above-mentioned $(d=3,m=1)$-estimate $\alpha_{\mathrm{L}}/\nu_{\mathrm{L}2}\simeq 0.21$. As one sees from Fig.~\ref{fig:Yextrap}, the difference between our extrapolation $\Upsilon_{1,3}$ and $\Upsilon_{1,3}^{\text{free}}$ is fairly small ---  the two functions differ by at most $ 2\%$.

Finally, let us turn to a comparison of our  $\epsilon$-expansion results for the scaling function $\Upsilon_{1,9/2-\epsilon}^{\text{free}}$ with Henkel's prediction. Consider first the situation when $b_1^{(4)}=0$. In this case the LSI prediction is given by $\Upsilon_{1,9/2-\epsilon}^{\text{free}}$. To obtain the $\epsilon$~expansion of this function the result~\eqref{eq:alphanueps} with $n=1$  must be substituted for $2\tilde{\Delta}$. The 
$\Or(\epsilon)$ contribution of this function originates exclusively from the $\Or(\epsilon)$ term of  this exponent. Hence the  $\epsilon$~expansion of this scaling function clearly differs at order $\epsilon$ from our result given by Eqs.~\eqref{eq:Yres}, \eqref{eq:kJcalexp}, and \eqref{eq:Jcalintrepfinal}. Next, consider the LSI prediction with $b_1^{(4)}\ne 0$. Since the contribution of order $\epsilon^0$ is given by the Gaussian result $\Upsilon_{1,9/2}^{\text{free}}$, the coefficient $b_1^{(4)}$ would have to be of order $\epsilon$. The additional $\Or(\epsilon)$ term of the LSI scaling function resulting from the contribution $\propto b_1^{(4)}$ is incompatible with our $\epsilon$-expansion result. In fact, this incompatibility is not only quantitative but qualitative: The LSI scaling function $g(w)$ given in Eq.~\eqref{eq:Solg} has a single nontrivial singular point located at $w=1/4$. By contrast, the function $\mathcal{J}(w)$, which our $\Or(\epsilon)$ contribution involves, was found to have additional singularities, namely branching points.

Let us also show that our result for the energy-density correlation function $\tilde{G}^{(0,2)}_{\mathrm{R}}(\bm{k},\bm{p})$ given in  Eq.~\eqref{eq:RGsolG02}, unlike Henkel's $b_1^{(4)}\ne0$ LSI result, has expansions in $p^2$ when $k>0$ and in $k^2$ when $p>0$ of the forms~\eqref{eq:pexp} and \eqref{eq:kexp}, respectively. To this end, we return to Eq.~\eqref{eq:RGsolG02} and set $\mu=\sigma=E^*_\tau=E^*_\sigma=1$ for notational convenience. Our result~\eqref{eq:RGsolG02} then becomes
\begin{equation}
 \label{4.24}
\tilde{G}^{(0,2)}_{\mathrm{R}}(\bm{k},\bm{p}) \approx 
\mathsf{C}_1\,p^{-\alpha_L/\nu_{L2}}
\Upsilon_{m,d}(\mathsf{k}) +\mathsf{C}_2
\end{equation}
with $\mathsf{k}=k/p^\theta$, where $\mathsf{C}_1$ and $\mathsf{C}_2$ are constants. Further, the function $\Upsilon_{m,d}(\mathsf{k})$  can be written as [cf.\ Eq.~\eqref{eq:Yextrap}]
\begin{equation} \label{eq:Up}
\Upsilon_{m,d}(\mathsf{k})=\frac{1}{D_1}
\left[\frac{\mathcal{J}\big( \tilde{\mathsf{k} }^{-4}\ell_{\tilde{\mathsf{k}}}^{4\theta-2}\big)}
{\mathcal{B}_0\,\tilde{\mathsf{k}}^{2\epsilon} \ell_{\tilde{\mathsf{k}}}^{\epsilon(1-2 \theta)}}
\right]^{\alpha_{\mathrm{L}}/(\nu_{\mathrm{L}2}\epsilon)},\quad \tilde{\mathsf{k}}\equiv\mathsf{k}/D_2.
\end{equation}

In the limit $p\to 0$ at fixed $k>0$,  the momentum $\tilde{\mathsf{k}} \to\infty$ and 
$\ell_{\tilde{\mathsf{k}} }$ varies as ${\tilde{k}}^{1/\theta}$ according to Eq.~\eqref{eq:lkas}. Therefore  Eq.~\eqref{eq:Up} reduces to
\begin{equation} \label{Up1}
\Upsilon_{m,d}(\mathsf{k})\mathop{\approx}\limits_{\tilde{\mathsf{k}}\to \infty} \frac{1}{D_1}
\left[\frac{\mathcal{J}\big(\mathsf{ \tilde{k} }^{-2/\theta} \big)}
{\mathcal{B}_0\,\mathsf{\tilde{k}}^{\epsilon/\theta}  }
\right]^{\alpha_{\mathrm{L}}/(\nu_{\mathrm{L}2}\epsilon)}.
\end{equation}
Substituting this into our result for the energy-density cumulant $\tilde{G}^{(0,2)}_{\mathrm{R}}(\bm{k},\bm{p})$ 
yields
\begin{eqnarray} \tilde{G}^{(0,2)}_{\mathrm{R}}(\bm{k},\bm{p}) &\mathop{\approx}\limits_{p\to 0 }& \frac{\mathsf{C}_1}{D_1}
\left[\frac{\mathcal{J}\big(p^2\,{ {k} }^{-2/\theta} D_2^{2/\theta}\big)}
{\mathcal{B}_0\, {{k}}^{\epsilon/\theta} D_2^{-\epsilon/\theta} }
\right]^{\alpha_{\mathrm{L}}/(\nu_{\mathrm{L}2}\epsilon)} +\mathsf{C}_2. \label{GE} 
\end{eqnarray}
As we know from Eq.~\eqref{eq:Taylor}, $\mathcal{J}(w)$ has a Taylor expansion at $w=0$. Using this we see that the right-hand side of (\ref{GE}) is analytic in $p^2$ at $p\to0$ and fixed $k$, and hence  complies with the expansion~\eqref{eq:pexp}.

In the limit $k \to 0$ at fixed $p>0$,  we have $\tilde{\mathsf{k}} \to 0$ and  $\ell_{ \mathsf{\tilde{k}} }\approx 1$ from Eq.~\eqref{eq:lkas}. This implies
\begin{equation}
 \label{GE1}
\tilde{G}^{(0,2)}_{\mathrm{R}}(\bm{k},\bm{p})  \mathop{\approx}\limits_{k\to 0} \frac{\mathsf{C}_1}{D_1}
\left[\frac{\mathcal{J}\big(\tilde{\mathsf{k} }^{-4} \big)}
{\mathcal{B}_0\,p^{\epsilon}\,\tilde{\mathsf{k}}^{2\epsilon}  }
\right]^{\alpha_{\mathrm{L}}/(\nu_{\mathrm{L}2}\epsilon)} +\mathsf{C}_2.
\end{equation}
We can now insert the large-$w$ expansion~\eqref{eq:Jcallargewexp} for $\mathcal{J}(w)$ to conclude that our result for $\tilde{G}^{(0,2)}_{\mathrm{R}}(\bm{k},\bm{p})$  has indeed an expansion of the form~\eqref{eq:kexp}.

 \section{Summary and discussion} 
   \label{sec:concl}

In this paper we reconsidered Henkel's LSI theory for type-I systems and performed careful checks of its predictions. A major motive for our work was the apparently very good agreement of the LSI predictions with Monte Carlo results reported in Ref.~\cite{PH01}. Our paper has two qualitatively distinct parts. The first part dealt with the consequences of the conjectured invariance equations~\eqref{eq:X0def}--\eqref{eq:X1G}. Accepting these equations as given, we reanalyzed their solutions. As we have shown in Sec.~\ref{sec:rv} and Appendix~\ref{app:Om} for the cases $N=4$ and $N>4$, these equations generally have less physically acceptable solutions than anticipated by Henkel \cite{Hen99,Hen97,Hen02}.  Specifically, in the case $N=4$ that concerns  the comparison with Monte Carlo simulation for the three-dimensional ANNNI model \cite{PH01}, the contribution from the second linearly independent function,  $\Omega^{(4)}_1$, must be discarded for the following reasons.  If the scaling variable is taken to  involve the coordinate difference $t_1-t_2$ rather than its absolute value,  $\Omega^{(4)}_1(v)$ diverges exponentially  as $v\to-\infty$ and hence is unacceptable. We therefore made the replacement  $v\to |v|$ considering the function $\Omega^{(4)}_1(|v|)$. Rather than being a solution to Henkel's original homogeneous equation, our Eq.~\eqref{eq:YG}, this function turned out to be a solution to the inhomogeneous Eq.~\eqref{eq:y-1}, involving an inhomogeneity proportional to the derivative of the $\delta$ distribution. More importantly, we found that the contribution $\propto b_1^{(4)}$ entails a violation of general analyticity requirements (as discussed at the end of Sec.~\ref{sec:rv}).  Hence it is unacceptable. Its omission [by setting the coefficient $b^{(4)}_1=0$ in Eq.~\eqref{eq:Solg}], on the other hand, implies that the scaling function of the momentum-space energy-density cumulant of the LSI theory reduces to that of a free theory with the action~\eqref{eq:Hfree}, namely the function $\Upsilon_{m,d}^{\text{free}}$ given in Eq.~\eqref{eq:YLSI}.

In the second part of the paper we used RG-improved perturbation theory in $4+m/2-\epsilon$ dimensions to determine scaling functions of the order-parameter and energy-density cumulants in momentum space to two-loop order. The results are given in Sec.~\ref{sec:2lres}. For the special choice $d=m+3$, closed analytical expressions could be obtained for the scaling functions' series expansions  to second order in $\epsilon$. For the case of primary interest, the uniaxial case $m=1$, we managed to derive a countour-integral representation for the two-loop term of the momentum-space energy-density cumulant.

We found that the predictions of the  LSI theory generally are inconsistent with RG-improved perturbation theory in $d=d^*(m)-\epsilon$ dimension.  Only at the level of Landau theory for the order-parameter cumulant and the one-loop approximation for the energy-density cumulant, where the LSI theory yields scaling functions of massless effective free-field theories, did we find it to be in conformity with our systematic expansions for proper choices of the exponents $\tilde{\Delta}$ and $\theta$. However, as soon as we went beyond these orders to include nontrivial corrections to the scaling functions, the results did not comply with the LSI theory. In the uniaxial scalar $(m=n=1)$ 
case of the momentum-space energy-density cumulant we investigated in great detail, our $\epsilon$-expansion results for the scaling function turned out to be inconsistent with the LSI predictions irrespective of whether a contribution from the function $\Omega^{(4)}_1$ is taken into account $(b_1^{(4)}\ne0)$ or not $(b_1^{(4)}=0)$.

There are other observations concerning our two-loop results, which complement the evidence against the viability of the LSI theory provided by our $\epsilon$-expansion results. In Appendix~\ref{ap:mon}, we studied  the behavior of the function $\mathcal{J}(w)$ that the two-loop contribution to the energy-density cumulant involves in the complex $w$-plane. Unlike the LSI scaling function~\eqref{eq:Solg}, which has a single nontrivial singular point (located at $w=-\alpha/4$), the function $\mathcal{J}(w)$ was found to have additional singularities, namely branching points. The interested reader may find a detailed explanation of the branching behavior of this and the related functions $\mathcal{J}_1(w)$ and $\mathcal{J}_3(w)$ in that appendix. Their behavior in the complex $w$-plane differs qualitatively from that of the LSI function. We admit that these functions merely appear in RG-improved perturbation theory. However, given their qualitatively different behavior in the complex plane, it seems highly unlikely to us that proper resummations of the perturbation series might yield results in conformity with the LSI predictions, even if we did not know about the incompatibility with the $\epsilon$-expansion results. 

Finally, let us comment on the apparently excellent agreement of the LSI predictions with Monte Carlo simulation for the ANNNI model found in Ref.~\cite{PH01}. There are two problems with the LSI predictions used in the comparison: (i) they were based on the value $\theta=1/2$, which may be a good approximation but  differs from the RG estimate $\theta\simeq 0.47$ of Ref.~\cite{SD01} [and pretends that all corrections  of order $\Or(\epsilon^2)$ and higher to this classical value sum to zero when $\epsilon=3/2$]; (ii) a contribution proportional to the second linearly independent function, which we found to be problematic,  was taken into account. Using a value of  $\theta$ different from, but close to, $1/2$ in the scaling plot of the Monte Carlo results and dropping the non-free-field contribution to the LSI prediction will make the agreement presumably less striking, though it is not unlikely to remain reasonably good. If so, the situation would be reminiscent of the relatively good quality of the Ornstein-Zernike approximation  for the order-parameter two-point cumulant at a bulk critical point in $d=3$ dimensions whose reasons are twofold: the values of the correlation exponents $\eta(d=3,n)$ are close to the classical one $\eta_{\text{MF}}=0$ and corrections to the zero-loop result for the scaling function are small. One important ingredient for the eventual good agreement of the LSI prediction with the Monte Carlo data is the small deviation of $\theta$ from its classical value $1/2$.  It is conceivable that corrections to order-parameter scaling functions of the free LP theory are also small.  In fact, our extrapolation for the scaling function $\Upsilon_{1,3}$ of the energy-density cumulant presented in Fig.~\ref{fig:Yextrap} exhibits small deviations from the free-field theory LSI prediction $\Upsilon_{1,3}^{\text{free}}$, which in turn agrees with the one-loop approximation for the scaling function $\Upsilon_{1,3}$. 

In summary, we can draw two important conclusions: First, LSI theory is definitely not valid in a mathematical precise sense in the checked nontrivial case of critical behavior at Lifshitz points.  To our knowledge, the only cases in which its predictions are safely known to be exact are those  trivial ones in which it reproduces the results of free massless field theories. Hence, its predictive power and viability appears to be rather limited. Second, the seemingly very good agreement with Monte Carlo data reported in Ref.~\cite{PH01} is probably due to the fact that corrections to the Ornstein-Zernike theory happen to be small in this case of a uniaxial LP in $d=3$ dimensions. This need not be so  in other cases. The good agreement may therefore be deceptive. 

 \ack
We gratefully acknowledge partial support by Deutsche Forschungsgemeinschaft under Grant No.\ Di-378/3 at the early stages of this work. SR also gratefully acknowledges partial support by the Belorussian Republican Foundation for Fundamental Research.  One of us  (HWD) would like to thank Lothar Sch\"afer for a discussion and Helge Gr\"utjen for assistance in producing some of the figures. SR and MASh thank H.~W.\ Diehl and Fakult\"at f\"ur Physik for their hospitality at the Universit\"at Duisburg-Essen.
 \appendix
 \section{Calculation of the integrals $I_2(\mathsf{k})$ and $I_3(\mathsf{k})$}\label{app:I23}
 In this appendix we derive various results for the integrals $I_j(\mathsf{k})\equiv I_j(\mathsf{k};m,d)$ defined by Eq.~\eqref{eq:Jldefx}, including their Laurent expansions. Combining Eqs.~\eqref{eq:Jldefx} and \eqref{eq:Gfscf} yields
 \begin{equation}\label{eq:intrepIj}
I_j(\mathsf{k})=J_j(\bm{\mathsf{k}},\hat{\bm{p}};1)=\int d^mz\int d^{d-m}r\,r^{-j(2-\epsilon)}\,\Phi^j(zr^{-1/2})\,e^{i(\hat{\bm{p}}\cdot\bm{r}+\bm{\mathsf{k}}\cdot\bm{z})}.
\end{equation}
We first perform the angular integrations in the subspaces $\mathbb{R}^m$ and $\mathbb{R}^{d-m}$. Let
\begin{equation}
\label{eq:Angavdev}
X(m;\mathsf{k})\equiv \frac{1}{S_{m-1}}\int_{S^{m-1}}dA\, e^{i\bm{\mathsf{k}}\cdot\hat{\bm{z}}}
\end{equation} 
be the average of the function $\exp (i\bm{\mathsf{k}}\cdot \hat{\bm{z}})$ with $\hat{\bm{z}}\equiv\bm{z}/z$ over $S^{m-1}$. The function $X(m;\mathsf{k})$ may be found in the form of a Taylor series in Eq.~(A.17) of Ref.~\cite{DSZ03}. This series can be summed to obtain the closed-form expression
\begin{equation}
X(m;\mathsf{k})=\,_0F_1(m/2,-\mathsf{k}^2/4)=(\mathsf{k}/2)^{1-m/2}\,J_{m/2-1}(\mathsf{k})\,\Gamma(m/2),\;\;m>0.
\end{equation}

Using this result and making a change of variable $z\to v=zr^{1/2}$ in one of the radial integrations, the integrals $I_j$ can be written as
\begin{equation}
\label{eq:Ijexpr}
\frac{I_j(\mathsf{k};m,d)}{S_{m-1}\,S_{d-m-1}}
=\int_0^\infty dr\,r^{3-2j+(j-1)\epsilon}\,\varphi^{(j)}_{m,d}(r;\mathsf{k})
\end{equation}
with
\begin{equation}
\varphi^{(j)}_{m,d}(r;\mathsf{k})=\,_0F_1\bigg(\frac{d-m}{2},-\frac{r^2}{4}\bigg)\int_0^\infty dv\,v^{m-1}\,\Phi_{m,d}^j(v)\,\,_0F_1\bigg(\frac{m}{2},-\frac{r\mathsf{k}^2v^2}{4}\bigg),
\end{equation}
where $\epsilon=4+m/2-d$, as usual.

Let us first determine the values of $I_j$ at $\mathsf{k}=0$. When $\mathsf{k}=0$, the double integrals over $r$ and $v$ factorize. The $r$-integrals 
required  for $I_2(0;m,d)$ and $I_3(0;m,d)$ are analytically computable; the results are
\begin{equation}
 \label{eq:int0F1_2}
\int_0^\infty dr\,r^{\epsilon-1}\,_0F_1[(d-m)/2,-r^2/4]=\frac{2^{\epsilon -1} \Gamma(\epsilon/2)\, \Gamma(2-m/4-\epsilon/2)}{\Gamma(2-m/4-\epsilon)}
\end{equation}
and
\begin{equation}
 \label{eq:int0F1_3}
\int_0^\infty dr\,r^{2\epsilon-3}\,_0F_1[(d-m)/2,-r^2/4]=\frac{2^{2\epsilon -3} \,\Gamma(\epsilon-1)\, \Gamma(2-m/4-\epsilon/2)}{\Gamma(3-m/4-3\epsilon/2)},
\end{equation}
respectively.
The associated $v$-integrals are the special cases $J_{0,j}(m,d)$ of the integrals 
\begin{equation}
\label{eq:Jpjdef}
J_{l,j}(m,d)\equiv\int_0^\infty dv\,v^{m-1+l}\,\Phi_{m,d}^j(v),\;\;j=2,3;\;l=0,\ldots,4,
\end{equation}
previously used in Ref.~\cite{DS00a}. The first one, $J_{0,2}$, is known from Eq.~(82) of this reference for general values of $(m,d)$:
\begin{equation}
 \label{eq:JDS02}
J_{0,2}(m,d)=\frac{2^{-2-\epsilon}\,\Gamma^2(1-\epsilon/2)\,\Gamma(2-m/4-\epsilon)\,\Gamma(m/4)}{(2\pi)^d\,\Gamma(2-\epsilon)}.
\end{equation}
The result can be substituted along with Eqs.~\eqref{eq:int0F1_2} and \eqref{eq:Smm1} into expression~\eqref{eq:Ijexpr} for $I_2(\mathsf{k};m,d)$ to obtain
\begin{equation}
 \label{eq:I2zero}
I_2(0;m,d)=\frac{F_{m,\epsilon}}{\epsilon},
\end{equation}
in conformity with Eq.~(38) of Ref.~\cite{SD01}.

In a similar manner one finds 
\begin{equation}
\label{eq:I3zero}
I_3(0;m,d)=J_{0,3}(m,d)\,\frac{2^{7+m-2 d}  \,\pi^{d/2}\,\Gamma(3-d+m/2)}{\Gamma(3d/2-m-3)\,\Gamma(m/2)}.
\end{equation}
Unlike $J_{0,2}(m,d)$, the integral $J_{0,3}(m,d)$ is not known in closed analytical form for general values of $(m,d)$. However, it can be computed for given values of $(m,d)$ by numerical means \cite{DS00a,SD01}.

We now turn to the calculation of the Laurent expansion of the integrals $I_j$ in $\epsilon$ for general values of $m$. The right-hand side of Eq.~\eqref{eq:Ijexpr} may be read as the image  $\big(r_+^{(j-1)\epsilon+3-2j},\varphi^{(j)}_{m,d}\big)$ of the $r$-dependent test function $\varphi^{(j)}_{m,d}(r;\mathsf{k})$ under the map provided by the generalized function $r_+^{(j-1)\epsilon+3-2j}$ \cite{GS64}. These distributions have the Laurent expansions 
\begin{equation}
\label{eq:Laurdist}
r_+^{(j-1)\epsilon+3-2j}=\frac{1}{\epsilon}\,\frac{\delta^{(2j-4)}(r)}{(j-1)(2j-4)!}+r_+^{3-2j}+\epsilon\, r_+^{3-2j}\ln {r_+}+\Or(\epsilon^2), 
\end{equation}
where $\delta^{(l)}(r)$ is the $l$-th derivative of the $\delta$-distribution and the other distributions are defined by (see e.g.~Refs.~\cite{GS64} and \cite[Appendix]{Die86a})
\begin{eqnarray}
 \label{eq:rplusjdef}
(r_+^{-s}\ln^lr_+,\varphi)&=\int_0^\infty dr \,r^{-s}(\ln r)^l\bigg[ &\varphi(0)-\sum_{q=0}^{s-2}\frac{r^q}{q!}\frac{d^q\varphi}{d r^q}(0)\nonumber\\ &&\strut-\theta(1-r)\frac{r^{s-1}}{(s-1)!}\,\frac{d^{s-1}\varphi}{d r^{s-1}}(0)\bigg],
\end{eqnarray}
where $\theta(x)$  is the Heaviside step function.

The action of  $\delta^{(2j-4)}$ on $\varphi^{(j)}_{m,d}(r;\mathsf{k})$ can be computed in a straightforward manner. One obtains
\begin{equation}
 \label{eq:delvarphi2}
\big(\delta,\varphi^{(2)}_{m,d}(;\mathsf{k})\big)=\varphi_{m,d}^{(2)}(0;\mathsf{k})=J_{0,2}(m,d)
\end{equation}
and
\begin{equation}
 \label{eq:del2varphi3}
\big(\delta'',\varphi^{(3)}_{m,d}(;\mathsf{k})\big)=\varphi_{m,d}^{(3)\prime\prime}(0;\mathsf{k})=\frac{J_{4,3}(m,d)}{4m(m+2)}\,\mathsf{k}^4-\frac{J_{0,3}(m,d)}{d-m}.
\end{equation}

Combining Eqs.~\eqref{eq:Ijexpr}, \eqref{eq:int0F1_2}, \eqref{eq:Laurdist}, and \eqref{eq:delvarphi2} then gives
\begin{equation}\label{eq:I2Fexp}
\frac{I_2(\mathsf{k};m,d)}{F_{m,\epsilon}}=\frac{C_{-1}(m,\epsilon)}{\epsilon}+
\frac{S_{m-1}S_{d^*-m-1}}{F_{m,0}}\,\big(r_+^{-1},\varphi^{(2)}_{m,d^*}(;\mathsf{k})\big)
+\Or(\epsilon)
\end{equation}
with
\begin{eqnarray}
C_{-1}(m,\epsilon)&=&\frac{2^{-\epsilon}}{\Gamma(1+\epsilon/2)}
\frac{\Gamma(2-m/4-\epsilon)}{\Gamma(2-m/4-\epsilon/2)}
\\ &=&1+\frac{1}{2}\left[\gamma_E-2\ln 2-\psi(2-m/4)\right]\epsilon+\Or(\epsilon^2).
\end{eqnarray}
In order to be consistent with Eq.~\eqref{eq:I2zero}, the $\Or(\epsilon^0)$ term of the first expression  on the right-hand side of Eq.~\eqref{eq:I2Fexp} must cancel the $\mathsf{k}$-independent contribution of the second one. One easily checks that this is indeed the case and yields Eq.~\eqref{eq:R20m2} for $R^{(2)} _ 0 (\mathsf{k}; 2)$, a function that vanishes at $\mathsf{k}=0$.

Since the two-loop contribution to $G^{(2,0)}$ is quadratic in $\mathring{u}$, we split off a factor $F_{m,\epsilon}^2$ in $I_3$. We thus arrive at the Laurent expansion
\begin{eqnarray}\label{eq:I3Laurexp}
\frac{I_3(\mathsf{k};m,d)}{F_{m,\epsilon}^2}&=&\left[\frac{j_\sigma(m)\,\mathsf{k}^4}{16m (m+2)}-\frac{j_\phi(m)}{2(8-m)}\right]\bigg\{\frac{1}{\epsilon} +\gamma_E-2\nonumber\\ && \strut +\frac{\psi(2-m/4)-\ln(16\pi^3)}{2}\bigg\}+\big(r_+^{-3},\varphi^{(3)}_{m,d^*}(;\mathsf{k})\big)+\Or(\epsilon).
\end{eqnarray}

The terms in Eqs.~\eqref{eq:I2Fexp} and \eqref{eq:I3Laurexp} involving $r_+^{-1}$ and $r_+^{-3}$, respectively, cannot straightforwardly be evaluated in closed form for general values of $m$. However, we know from Eq.~\eqref{eq:frm2} that  the scaling functions $\Phi_{m,d}$ reduce to simple Gaussians on the line $d=m+3$. This enables us to determine the functions $\varphi^{(j)}_{m,d^*}$  in closed form for $m=2$. One obtains
\begin{equation}
 \label{eq:varphi225}
\varphi^{(2)}_{2,5}(r;\mathsf{k})=\frac{\sin r}{256\pi^4\,r}\,e^{-\mathsf{k}^2r/2}
\end{equation}
and
\begin{equation}
 \label{eq:varphi325}
\varphi^{(3)}_{2,5}(r;\mathsf{k})=\frac{\sin r}{6144\pi^6\,r}\,e^{-\mathsf{k}^2r/3}.
\end{equation}
To determine $\big(r_+^{3-2j},\varphi^{(j)}_{2,5}(;\mathsf{k})\big)$, we compute $(r_+^{3-2j+(j-1)\epsilon},\varphi^{(j)}_{2,5}(;\mathsf{k}))$, obtaining
 \begin{equation}
\int_0^\infty dr\, r^{\epsilon-1}\varphi^{(2)}_{2,5}(r;\mathsf{k})=-\frac{1}{256\pi^4}\Gamma(\epsilon-1)\,\Im[(\mathsf{k}^2/2+i)^{1-\epsilon}]
\end{equation}
and
 \begin{equation}
\int_0^\infty dr\, r^{2\epsilon-3}\varphi^{(3)}_{2,5}(r;\mathsf{k})=-\frac{1}{6144 \pi^6}\Gamma(2\epsilon-3)\,\Im[(\mathsf{k}^2/3+i)^{1-\epsilon}].
\end{equation}
The results can be Laurent expanded in $\epsilon$. The expansion coefficient of the $\epsilon^0$ terms then give us the required quantities $(r_+^{3-2j},\varphi^{(j)}_{2,5}(;\mathsf{k}))$. This leads to the results for $R^{(2)}_0(\mathsf{k};2)$ and $R^{(3)}_0(\mathsf{k};2)$ given in Eqs.~\eqref{eq:R20m2} and \eqref{eq:R30m2}, respectively.

To compute $R^{(2)}_1(\mathsf{k};2)$, we subtract from $I_2(\mathsf{k};m,d)$ its value at $\mathsf{k}=0$, defining
\begin{equation}
\hat{I}_2(\mathsf{k};m,d)\equiv I_2(\mathsf{k};m,d)-I_2(0;m,d).
\end{equation}
The subtraction eliminates the pole term. The desired function follows from the Taylor expansion of  $\hat{I}_2(\mathsf{k};m,d)$  to $\Or(\epsilon)$; we have
\begin{equation}
R^{(2)}_1(\mathsf{k};m)=\frac{\hat{I}_2^{(1)}(\mathsf{k};m)}{F_{m,0}}-
\big[1-\gamma_E/2+\ln(2\sqrt\pi)\big]R^{(2)}_0(\mathsf k;m),
\end{equation}
where $R^{(2)}_0(\mathsf{k};m)=\hat{I}_2(\mathsf{k};m,d^*)/F_{m,0}$
and $\hat{I}_2^{(1)}(\mathsf{k};m)\equiv\partial_\epsilon
\hat{I}_2(\mathsf{k};m,d^*-\epsilon)\big|_{\epsilon=0}$
is the coefficient of the $\Or(\epsilon)$ term  of $\hat{I}_2$,
while the coefficient in square brackets results from the $O(\epsilon)$
term of $F_{m,\epsilon}$.

We now specialize to the case $m=2$ and start from
\begin{equation}\label{eq:MITKA}
\hat{I}_2(\mathsf{k};2,5-\epsilon)=(2\pi)^{(3-\epsilon)/2}
\int_0^\infty dr\,r^{3(\epsilon-1)/2}J_{\frac{1-\epsilon}{2}}(r)
\int d^2 v\,\Phi_{2,5-\epsilon}^2(v)
\big(\e^{i\bm{\mathsf{k}}\cdot\bm v\sqrt{r}}-1\big).
\end{equation}
For the Bessel function we substitute its $\epsilon$~expansion.
It is convenient to use the integral representation of the $\Or(\epsilon)$
term's coefficient $-\partial_\nu J_\nu(\left.r)\right|_{\nu=1/2}/2$ one  obtains from
(see e.g. \cite[entry 2.3.5.3]{PBM1})
\begin{equation}
J_\nu(r)=\frac{(r/2)^\nu}{\Gamma(\nu+1/2)\,\sqrt\pi}
\int_{-1}^1 dt\,e^{i r t}(1-t^2)^{\nu-1/2}\,,\quad\quad\Re\,\nu>-1/2,
\end{equation}
by differentiating both sides and interchanging the integration and differentiation on the right-hand side. This gives
\begin{eqnarray}\label{eq:JEX}
J_{\frac{1-\epsilon}{2}}(r)&=&
\sqrt{\frac{2}{\pi r}}\bigg[1-\frac{\epsilon}{2}\Big(\gamma_E+\ln\frac{r}{2}\Big)\bigg]
\sin r\nonumber\\&&\strut
-\epsilon\,\sqrt{\frac{r}{2\pi}}
\int_0^1 dz\,\cos (rz) \ln(1-z^2)+\Or(\epsilon^2).
\end{eqnarray}
The last integral can be done analytically yielding the known result in terms of
sine and cosine integrals
\cite[p. 74]{MOS66} quoted in Eq.~(49) of \cite{ApelKrav85}.

The square of the scaling function $\Phi_{2,5-\epsilon}(v)$ is treated in an analogous fashion. We replace $\Phi_{2,5-\epsilon}(v)$ by its expansion to $\Or(\epsilon)$ and use an appropriate integral representation for the term linear in $\epsilon$. A convenient starting point to find the latter is the integral represention of the scaling function
\begin{equation}\label{eq:intrepPhi}
\Phi_{m,4+m/2-\epsilon}(v)=\frac{2^{-m-1}\pi^{(2\epsilon-6-m)/4}}{\Gamma[(m-2+2\epsilon)/4]}\int_0^1dt\,
t^{1-\epsilon}(1-t^2)^{(m-6+2\epsilon)/4}\,e^{-t v^2/4}.
\end{equation}
One way to prove it is to perform the integration with the aid of {\sc Mathematica} \cite{Mathematica7} to obtain the explicit result~\eqref{eq:Phi} when $\Re(m + 2 \epsilon) > 2$ and  $\Re\,\epsilon < 2$ (which can then be analytically continued). Alternatively, one can Taylor expand the exponential in Eq.~\eqref{eq:intrepPhi} and integrate term by term. The result is the Taylor series of $\Phi_{m,d}$ given by Eqs.~(10) and (11) of Ref.~\cite{SD01}.

Setting $m=2$ and computing the Laurent expansion of  the integral in Eq.~\eqref{eq:intrepPhi} to order $\epsilon^0$, one can show that the scaling function satisfies
\begin{equation}\label{eq:phim2Laurexp}
\Phi_{2,5-\epsilon}(v)=\frac{\pi^{-2+\epsilon/2}}{16\,\Gamma(1+\epsilon/2)}\left[e^{-v^2/4}-\epsilon\,\frac{ v^2}{8}\int_0^1 dt\,e^{-tv^2/4}\ln(1-t^2)+\Or(\epsilon^2)\right].
\end{equation}
To prove this, we rewrite the power $(1-t^2)^{-1+\epsilon/2}$ of the integral's integrand as $-\partial_t (1-t^2)^{\epsilon/2}/\epsilon$ and integrate by parts. The contribution from the boundary term can be rewritten as the limit $a\to 0+$ of
\[\frac{a^{-\epsilon}}{\epsilon}=-\frac{1}{\epsilon}-\int_a^1dt\,t^{-1-\epsilon}.
\]
The integral on the right-hand side can be combined with one of the two integrals produced via integration by parts to obtain
\begin{equation}\label{eq:A30}
-\int_0^1dt\, t^{-1-\epsilon}[(1-t^2)^{\epsilon/2}e^{-tv^2/4}-1]=
\gamma_E+\text{E}_1(v^2/4)-\ln(v^2/4)+\Or(\epsilon).
\end{equation}
In the remaining integral  $(-v^2/4\epsilon)\int_0^1dt\,e^{-tv^2/4}(1-t^2)^{\epsilon/2}\,t^{-\epsilon}$ we expand the $\epsilon$-dependent powers to $\Or(\epsilon)$ and perform the two integrals that do not involve $\ln(1-t^2)$. Adding up all  contributions and multiplying by the prefactor then gives the stated result~\eqref{eq:phim2Laurexp}. The integral remaining in Eq.~\eqref{eq:phim2Laurexp} is analytically calculable and can be expressed in terms of
the exponential integral functions $\text{Ei}$ and $\text{E}_1$ \cite{AS72}.
However, we found it more convenient to work with the integral
representation~\eqref{eq:phim2Laurexp} for $\Phi_{2,5-\epsilon}$, rather than with the analytic expressions in terms of special functions. Likewise, we prefer to use the integral representation  of $J_{(1-\epsilon)/2}$ given in the first two lines of Eqs.~\eqref{eq:JEX}.

Upon substituting them into Eq.~\eqref{eq:MITKA}, the required Gaussian integrations over $v$ can be done in a straightforward fashion, giving
\begin{eqnarray}\nonumber\label{eq:TRE}
\frac{\hat I_2(\mathsf{k};2,5-\epsilon)}{F_{2,0}}&=&Q_0(\mathsf{k}^2/2){\left[1+\frac{\epsilon}{2}\,(\gamma_E+\ln\pi)\right]}
\nonumber\\ &&\strut +
\epsilon\int_0^\infty dr\, \frac{\sin r}{r^2}
\Big(\e^{-r\mathsf k^2/2}-1\Big)\ln r
\nonumber\\&&\strut
-\frac{\epsilon}{2}\,\int_0^1 dz \ln(1-z^2)
\int_0^\infty dr \,\frac{\cos (rz)}{r}\big(\e^{-r\mathsf k^2/2}-1\big)
\nonumber \\ &&\strut
+2\epsilon \int_0^1 dz\, \frac{\ln(1{-}z^2)}{(1+z)^2}
\int_0^\infty dr\, \frac{\sin r}{r^2}
\bigg[1
\nonumber\\ &&\strut\qquad\qquad\qquad\quad
 -\Big(1-\frac{r\,\mathsf{k}^2}{1+z}\Big)
\exp\left(-\frac{r\,\mathsf{k}^2}{1+z}\right)\bigg]
+O(\epsilon^2)
\end{eqnarray}
where $Q_0(s)$ is the function defined in Eq.~\eqref{eq:Q0def}.
Performing the remaining $r$ integrations, one arrives at
\begin{equation}\label{eq:FOR}
R^{(2)}_1(\mathsf k;2)=Q_1(\mathsf{k}^2/2)+\frac{1}{4}W_1(\mathsf{k}^2/2)+
W_2(\mathsf{k}^2/2)-Q_0(\mathsf{k}^2/2)\ln 2,
\end{equation}
where $Q_1(s)$ is the function~\eqref{eq:Q1def}, while
$W_1(s)$ and $W_2(s)$ denote the integrals
\begin{equation}
W_1(s)=\int_0^1 dz\, \ln(1-z^2)\ln\Big(1+\frac{s^2}{z^2}\Big)
\end{equation}
and
\begin{equation}
W_2(s)=\int_0^1 dz \,\frac{\ln(1-z^2)}{(1+z)^2}
{\left\{4\frac{2s}{1+z}\arctan{\Big(\frac{1+z}{2s}\Big)}
+\ln{\bigg[1+\frac{4s^2}{(1+z)^2}\bigg]}\right\}}.
\end{equation}

To compute $W_1$, we express $\ln(1-z^2)$ as a  sum of logarithms  $\ln(1\pm z)$. The integral can then be evaluated using {\sc Mathematica} \cite{Mathematica7}. The result involves the dilogarithm function $\mbox{Li}_2[2i/(i+s)]$. The real part of this expression can be written as $\mbox{Li}_2(2\cos\varphi,\varphi)$  in the notation of Ref.~\cite{Lewin}, with $\varphi=\arg[2i/(i+s)]=\arctan s$.
According to its Eq.~(5.17), it is given by
$\mbox{Li}_2(2\cos\varphi,\varphi)=\arctan^2(1/s)\equiv\vartheta^2(s)$
where $\vartheta(s)$ is the function introduced in Eq.~\eqref{eq:defvartheta}.
To rewrite the imaginary part of the expression, we use the
inversion formula for the dilogarithm (see e.g.\ Refs.~\cite[Eq.~(1.10)]{Lewin} and
\cite[p.~652]{PBM3}),
\begin{equation}
\mbox{Li}_2(z)=\frac{\pi^2}{3}-\mbox{Li}_2(1/z)-\frac{1}{2}\ln^2 z
+\pi i\ln z\,, \quad\quad\quad |\arg(-z)|<\pi\,.
\end{equation}
This gives
\begin{eqnarray}
\Im{\bigg[\mbox{Li}_2\Big(\frac{2i}{i+s}\Big)\bigg]}&=&
\frac{1}{2}\ln{\Big(\frac{1+s^2}{4}\Big)}[\arctan(s)-\pi]
+\Im{\bigg[\mbox{Li}_2\Big(\frac{1}{2}+i\frac{s}{2}\Big)\bigg]}\nonumber\\
&=&
-\vartheta(s)\ln\frac{1+s^2}{4}+\mbox{Cl}_2(2\arctan s)
-\frac{1}{2}\mbox{Cl}_2(4\arctan s),\nonumber\\
\end{eqnarray}
where Eq.~(5.5) of Ref.~\cite{Lewin} was used to arrive at the second expression.
Application of the duplication formula \cite[Eq.~(4.17)]{Lewin} to the two Clausen
integrals $\mbox{Cl}_2(\theta)$ along with the relation
$\pi/2-\arctan s=\arctan(1/s)$ finally yields
\begin{equation}\label{eq:WOD}
W_1(s)=\frac{\pi^2}{2}+4(1-\ln 2)\,Q_0(s)-2 \vartheta^2(s)
 +2 s\, \vartheta(s)\ln[(1+s^2)/4]-
2 s\, \mbox{Cl}_2[2\vartheta(s)].
\end{equation}

The calculation of the integral $W_2$  proceeds
along similar lines but is more cumbersome and lengthier.
Without entering into details, we just record our final result:
\begin{eqnarray}\label{WDVA}
W_2(s)&=&-\frac{\pi^2}{12}-\frac{1}{2}-\ln\sqrt 2\,\ln\frac{1+s^2}{2}+\ln\sqrt{1+4 s^2}
+\Big(\frac{1}{2}-\ln 4\Big)\,s\,\vartheta(s)\\
&&\strut +\left[\vartheta(s)-\vartheta(2s)\right]\Big[\frac{1}{2s}-
s\,\ln\frac{\sqrt{1+s^2}}{s}\Big]+
\Re\left[\mbox{Li}_2\Big(\frac{i}{2i+2s}\Big)\right]+\frac{1}{4}\mbox{Li}_2(-4s^2)
\nonumber\\&&\strut
-\frac{1}{4}\mbox{Li}_2(-s^2)\nonumber
+\frac{s}{2}\left\{\mbox{Cl}_2[2\vartheta(2s)]-\mbox{Cl}_2[2\vartheta(s)]
+\mbox{Cl}_2[2\vartheta(s)-2\vartheta(2s)]\right\}.
\end{eqnarray}
From the above equations, the result for $R^{(2)}_2(\mathsf{k};2)$ given in Eq.~\eqref{eq:R21m2} follows in a straightforward manner.

It is also possible to compute the integrals $I_2$ and $I_3$ on the line $d=m+3$ for general values of $m$ with $0\le m\le 2$. One finds
\begin{equation}
 \label{eq:I2deqmp3}
\frac{I_2(\mathsf{k};m,m+3)}{F_{m,1-m/2}}=\frac{1}{1-m/2}\,\frac{\left(\mathsf{k}^2-2i\right)^{m/2}-\left(\mathsf{k}^2+2i\right)^{m/2}}{(1-i)^m-(1+i)^m}
\end{equation}
and
\begin{equation}
 \label{eq:I3deqmp3}
I_3(\mathsf{k};m,m+3)=\frac{\Gamma(-m-1)}{i 2^{5+2m}\,3^{1+3m/2}\pi^{2+m}}\,\left[\left(\mathsf{k}^2-3i\right)^{m+1}-\left(\mathsf{k}^2+3i\right)^{m+1}\right].
\end{equation}
Using these results and setting $m=d-3=2-2\epsilon$, one can determine the scaling functions $\Psi_{2-2\epsilon,5-2\epsilon}$ and $\Upsilon_{2-2\epsilon,5-2\epsilon}$ to $\Or(\epsilon^2)$ in a straightforward manner. The results are given in Eqs.~\eqref{eq:Psidm3exp}--\eqref{eq:Q1def}.

 \section{Integral representations for the scaling function $\Omega(v)$\label{app:Om}}
The differential equation (\ref{eq:Hv}) plays a central role in Henkel's theory. He expressed its  general solution (\ref{eq:hyp}) in terms of the generalized hypergeometric functions ${}_2F_{N-1}$. However, it is quite difficult to analyze the behavior of the function $\Omega^{(N)}(v)$  for $v\to\infty$ in  this representation. Here we derive an alternative integral representation for the general solution of Eq.~\eqref{eq:Hv}, which is more convenient for this purpose.

 The main result of this appendix is the following theorem.
 \begin{theorem}
  Let $N>2$ be an integer and
 $\zeta>1$. Then the general solution of the differential equation
 \begin{equation} \label{eq:Hu}
  L\,f(v)\equiv\left(\frac{d^{N-1}}{dv^{N-1}}-v^2 \frac{d}{dv}-\zeta\,v\right)f(v)=0,
 \end{equation}
 is given by a linear combination of the $(N-1)$ functions $f_l(v)$ defined by
 \begin{equation}
   \label{eq:fll}
  f_l(v)=\sigma_l\, g(\sigma_l \,v)-g(v)
  \end{equation}
  with 
  \begin{equation}\label{eq:gu}
 g(v)=\int_0^\infty dk\, k^{\nu N/2}\,{\exp}{\Big[(N/2)^{2/N} v k\Big]}
 \, K_\nu(k^{N/2}),\;\;
 l=1,\ldots,N-1,
 \end{equation}
 \begin{equation*}
 \sigma_{l}=\exp (2\pi i l/N),
 \end{equation*}
 and
 \begin{equation*}
\nu=\frac{\zeta-1}{N}.
 \end{equation*}
 \end{theorem}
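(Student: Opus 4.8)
The plan is to exploit that $L$ is a linear ODE of order $N-1$ whose leading coefficient is the constant $1$; hence its solution space is $(N-1)$-dimensional and consists of entire functions. It therefore suffices to show (i) that each $f_l$, $l=1,\dots,N-1$, solves $Lf=0$, and (ii) that these $N-1$ functions are linearly independent, after which they automatically form a basis. Throughout I abbreviate $c=(N/2)^{2/N}$, so that $c^N=N^2/4$, and $\phi(k)=k^{\nu N/2}K_\nu(k^{N/2})$, so that $g(v)=\int_0^\infty \phi(k)\,e^{cvk}\,dk$; note that $\zeta=N\nu+1$.

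First I would compute $Lg$. Since each $\partial_v$ acting under the integral produces a factor $ck$, while $v\,e^{cvk}=c^{-1}\partial_k e^{cvk}$ and $v^2 e^{cvk}=c^{-2}\partial_k^2 e^{cvk}$, the terms $-v^2\partial_v-\zeta v$ can be rewritten as $k$-derivatives of the exponential and moved onto $\phi$ by integrating by parts in $k$. The crux is that the resulting bulk integrand is proportional to $c^{N}k^{N-1}\phi-k\phi''+(\zeta-2)\phi'$, and that this vanishes identically: changing variables to $z=k^{N/2}$ turns $\phi$ into $\psi(z)=z^\nu K_\nu(z)$, which satisfies $z\psi''+(1-2\nu)\psi'-z\psi=0$ (a rewriting of the modified Bessel equation), and with $\zeta=N\nu+1$ and $c^N=N^2/4$ the bulk term collapses to a multiple of this equation, hence to zero. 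What survives are boundary terms at $k=0$; those at $k=\infty$ vanish because $K_\nu(k^{N/2})\sim e^{-k^{N/2}}$ decays super-exponentially for $N>2$, beating $e^{cvk}$. Since $\nu=(\zeta-1)/N>0$, one has $\phi(0)=\tfrac12\Gamma(\nu)2^\nu\neq 0$ and $k\phi'(k)\to0$, so the boundary contributions combine into the $v$-independent constant $Lg(v)=C:=(\zeta-1)\phi(0)/c$.

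Next I would use the scaling identity that if $\sigma^N=1$ then $L[\sigma g(\sigma v)]=(Lg)(\sigma v)$, because the factor $\sigma^N$ produced by applying $d^{N-1}/dv^{N-1}$ to $\sigma g(\sigma v)$ exactly matches the chain-rule factors of the lower-order terms. Taking $\sigma=\sigma_l=e^{2\pi i l/N}$ and using that $Lg$ is the constant $C$ gives $L[\sigma_l g(\sigma_l v)]=C$, whence $Lf_l=L[\sigma_l g(\sigma_l v)]-Lg=C-C=0$. Thus every $f_l$ solves the homogeneous equation, and $l=0$ is excluded only because it yields $f_0\equiv0$.

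Finally, for linear independence I would expand the entire function $g$ as $g(v)=\sum_{n\ge0}g_nv^n$, where the coefficients $g_n=\frac{c^n}{n!}\int_0^\infty k^{n+\nu N/2}K_\nu(k^{N/2})\,dk$ are products of strictly positive Gamma factors and hence nonzero. Since $f_l(v)=\sigma_l g(\sigma_l v)-g(v)=\sum_{n\ge0}g_n(\sigma_l^{\,n+1}-1)v^n$, a relation $\sum_{l=1}^{N-1}c_lf_l\equiv0$ forces $\sum_{l=1}^{N-1}c_l\sigma_l^{\,m}$ to be independent of $m$ for all $m\ge1$; by the $N$-periodicity of $m\mapsto\sigma_l^m$ and the invertibility of the discrete Fourier transform over the distinct $N$th roots of unity, this implies $c_1=\dots=c_{N-1}=0$. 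The $N-1$ solutions $f_l$ are therefore independent and span the $(N-1)$-dimensional solution space, which is the assertion. I expect the main obstacle to be the rigorous bookkeeping in the first step—justifying differentiation under the integral sign, establishing the vanishing of the bulk integrand via the Bessel equation, and evaluating the $k=0$ boundary terms precisely (where the hypothesis $\zeta>1$ is exactly what guarantees $\nu>0$ and $k\phi'\to0$); the linear-independence argument is comparatively routine.
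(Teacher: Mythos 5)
Your proposal is correct and follows essentially the same route as the paper: the paper's entire proof consists of showing, via integration by parts and the modified-Bessel (Macdonald) differential equation, that $L\big[\sigma_l\,g(\sigma_l v)\big]$ equals the $l$- and $v$-independent constant $(2/N)^{2/N-1}\,2^\nu\,\Gamma(\nu+1)$ --- which is exactly your $C=(\zeta-1)\phi(0)/c$ --- so that the differences $f_l$ are annihilated by $L$. The one place you go beyond the paper is the spanning step: the paper concludes the theorem directly from the constancy of $L\big[\sigma_l\,g(\sigma_l v)\big]$ and leaves the linear independence of the $f_l$ implicit, whereas your argument via the strictly positive Taylor coefficients of $g$ and the invertibility of the discrete Fourier transform over the $N$th roots of unity supplies a correct proof of that remaining step.
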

 
 \begin{proof}
Upon integrating by parts and using the Bessel differential equation for the Macdonald function, one can show that
  \begin{equation}
    \label{eq:Lg}
 L\,\big[\sigma_l \,g(\sigma_l\,v)\big]=(2/N)^{2/N-1}\,2^\nu \,\Gamma(\nu+1).
  \end{equation}
 This proves the theorem since the right-hand side in
 (\ref{eq:Lg}) is independent of $l$.
 \end{proof}

Several remarks are in order here. Note, first, that the  function $g(v)$ defined by the integral in Eq.~\eqref{eq:gu} is holomorphic in the complex $v$-plane. Second, if  $f(v)$ is a solution to Eq.~\eqref{eq:Hu}, then the function
 $\Omega(v)=f(v\, \alpha^{-1/N})$, with $\alpha\neq0$ an arbitrary complex number, solves Eq.~\eqref{eq:Hv}. Third, since the functions $f_{l}(v\, \alpha^{-1/N})$ and $v^p\,\mathcal{F}_p$ provide  two sets of $N-1$ linearly independent solutions of Eq.~\eqref{eq:Hv}, they must be related by a $v$-independent  matrix. This yields, on the one hand, integral representations for the generalized hypergeometric functions~\eqref{eq:Fp} and, on the other hand, expresses the integrals $f_{l}$ as linear combinations of the functions  $v^p\, {}_2F_{N-1}$. For given integer $N=3,4,\ldots$, the coefficients of these linear combinations can be determined explicitly by integrating Eq.~\eqref{eq:gu} using {\sc Mathematica} \cite{Mathematica7}. Taking into account that
 \begin{equation}
   \label{eq:Mc}
   K_\nu(z)\mathop{=}_{z\to\infty}\sqrt{\frac{\pi}{2 z}} \, e^{-z}[1+ O(1/z)]\quad\mbox{for }
 |\arg z|<3\pi/2, 
 \end{equation}
 one can find from (\ref{eq:gu}) the asymptotic behavior of the function $g(v)$ as $v\to\infty$ via the saddle point method. Its  limiting form depends on $\arg v$  and $N$. It is governed either by a saddle point of the integrand of the integral in Eq.~\eqref{eq:gu} or else by the integrand's behavior in the vicinity of the origin $k=0$. Since the complete analysis of the asymptotic large-$|v|$ behavior of the function $g(v)$ for generic $\arg v$ and $N$ 
is rather involved, we will focus our attention on those special cases that are relevant for the $v\to\infty$ asymptotics of the  function $\Omega(v)$.

Upon substituting the limiting large-$z$ form~\eqref{eq:Mc} for the Bessel function of the  integrand of the integral~\eqref{eq:gu}, the integrand becomes $F(k)\exp[\mathcal{R}(v,k)]$ with $F(k)=\sqrt{\pi/2}\,k^{N(2\nu-1)/4}$ and 
\begin{equation}
    \label{eq:Fu}
   {\mathcal R}(v,k) =\left(N/2\right)^{2/N}vk-k^{N/2}
  \end{equation}
for large positive $k$. For given $v>0$,  ${\mathcal R}(v,k)$ takes its maximum on the integration path $0<k<+\infty$ at the $k$-value
 \begin{equation}
   \label{eq:k0}
 k_s(v) =\left(2 /N\right)^{2/N}\,v^{2/(N-2)}.
 \end{equation}
 To obtain the asymptotic behavior of the function $g(v)$ as $v\to\infty$, we can therefore expand $\mathcal{R}(v,k)$ about $k_s(v)$ to second order, replace $F(k)$ by $F(k_s)$, and extend the lower and upper integration limits of the resulting Gaussian integral to $\pm\infty$. We thus arrive at the asymptotic behavior 
  \begin{eqnarray}
 \label{eq:asg}
   g(v)\mathop{=}_{v\to\infty}\pi \sqrt{\frac{N}{N-2}}\left( \frac{2}{N}\right)^{\frac{1+\zeta}{N}}
 v^{\frac{1-N+\zeta}{N-2}}  
   \exp{\bigg(\frac{N-2}{N}\,v^{\frac{N}{N-2}}\bigg)}{\left[1+ \Or(v^{-1})\right]}.
 \end{eqnarray}
 
To compare this with Henkel's results, we set $\alpha_1=1$ in equations (4.23) and  (4.24) of Ref.~\cite{Hen02}. The comparison with  Eq.~\eqref{eq:asg} shows that the leading exponential large-$v$ divergence given in Eq.~(4.24) of Ref.~\cite{Hen02} originates from $g(v)$. Henkel's condition (4.25) \cite{Hen02}, which we reproduce in Eq.~\eqref{eq:linb} simply means that the function $g(v)$ is required not to contribute to $\Omega(v)$ so that the leading contribution to $\Omega(v)$ as $v\to \infty$ results  from the term $\sigma_1 g(\sigma_1 v)$ in Eq.~\eqref{eq:fll}.

Consider next the large $v$-asymptotics   of $ g(\sigma_1 v)$ for integer values $N\ge 4$. To this end, we must study the behavior of the function~\eqref{eq:gu} as $v\to \infty$ at fixed $\arg  v=2\pi/N$. The asymptotic form can be derived along lines similar to those followed to obtain Eq.~\eqref{eq:asg}, provided the angle $\arg  v$ is small enough. 
For such $\arg  v>0$, the saddle point~\eqref{eq:k0} is located in the upper complex $k$-plane slightly above the real axis. The integration path must be deformed into the steepest-decent curve crossing the complex  saddle point $k_s(\sigma_1|v|)$. The resulting contribution may be gleaned from Eq.~\eqref{eq:asg} by substituting $v\to \sigma_1 v$. It reads
\begin{eqnarray}
   \label{eq:asg1}
   g(\sigma_1v) &\mathop{\approx}_{v\to \infty}&\pi \sqrt{\frac{N}{N-2}}\,\Big(\frac{2}{N}\Big)^{(1+\zeta)/N}\,
 (\sigma_1v)^{\frac{1-N+\zeta}{N-2}}\,\nonumber\\
 &&\times  \exp\left\{\frac{N-2}{N}v^{N/(N-2)}\,e^{i\, 2\pi/(N-2)}\right\}
\left(1+ O(v^{-1})\right). \nonumber\\
 \end{eqnarray}
and gives the asymptotic large-$v$ behavior  when $N\ge 6$. For $N=6$, the limiting form simplifies to \begin{eqnarray}
   \label{eq:asg3}
   g(\sigma_1v)\mathop{\approx}_{v\to\infty} \pi\sqrt{3/2}\,\, 3^{-(1+\zeta)/6} [v \exp(\pi i/3)]^{(\zeta-5)/4} \exp(2  i v^{3/2}/3).
\end{eqnarray}
Thus, $g(\sigma_1v)$ decays indeed as $v^{(\zeta-5)/4}$ in this case. By contrast, when $N\ge 7$, the right-hand side of Eq.~\eqref{eq:asg1} diverges exponentially as $v\to+\infty$. In order to prevent such subleading divergences in the scaling functions $\Omega(v)$, further constraints must be imposed in addition to Eq.~\eqref{eq:linb} on the  coefficients $b_p^{(N)}$ in Eq.~\eqref{eq:hyp}. Otherwise the algebraic decay assumed by Henkel does not apply.

Turning to the case $4\le N<6$, we note that $\cos [2 \pi /(N-2)]<0$ for $N<6$. Therefore, the right-hand side of Eq.~\eqref{eq:asg1} decays exponentially and does not describe the asymptotic behavior as $v\to\infty$.  A straightforward  analysis reveals  that the leading contribution to the integral~\eqref{eq:gu} results from vicinity of the end point $k=0$ of the integration path, giving
\begin{eqnarray}
   \label{eq:asg2}
   g(\sigma_1\,v)\mathop{\approx}_{v\to \infty}-2^{\nu N/2-1}\left(\frac{2}{N}\right)^{N/2}\frac{\Gamma(\nu)}{ \sigma_1\, v}+O(v^{-2})
 \end{eqnarray}
for  $4\le N< 6$. In this case,  all divergences of $\Omega(v)$ are contained in $g(v)$, which is in agreement with Henkel's numerical check \cite{Hen02}. 
 
 \section{Contour integral representation for one-loop function \label{app:Cont}}
 In this Appendix we show that the one-loop Feynman integral $J(P)$ introduced in Eq.~\eqref{eq:JPdef} has the contour-integral representation given in Eqs.~\eqref{eq:JcalJ}--\eqref{eq:Jcal3}. Choosing Cartesian coordinates $p_1,\ldots,p_{d-1}$ such that $\hat{\vec{P}}\equiv \vec{P}/P$ points along the $p_1$-axis, we decompose $\bm{p}$ as  $\bm{p}=p_1\hat{\vec{P}}+\bm{p}_\perp$. The resulting integrand of $\mathcal{J}(P)$ is a rational function of $p_1$ having four simple poles located at $\pm i \sqrt{\bm{p}_\perp^2+k^4}$ and $P\pm i \sqrt{\bm{p}_\perp^2+(k-1)^4}$. To perform the integration over $p_1$, we close the contour in the upper half plane. The result then becomes a sum of $2\pi i$ times the residues at the poles in the upper half plane $\Im\,p_1>0$. In the contribution from the residue with $\Re\ p_1=P$, we make the  change of variables $k-1\to k$. We thus see that $\mathcal{J}(P)$ can be written as
 \begin{equation}
 \label{eq:JcalI}
 \mathcal{J}(P^2)= \Re[\mathcal{I}(P^2)]
\end{equation}
with
\begin{eqnarray}
\mathcal{I}(P^2)&=&
 \frac{1}{(2\pi)^{d-1}}\int_{-\infty}^\infty d k \int d^{d-2} {\bm
 p}_\perp \frac{1}{({\bm p}_\perp^2+k^4)^{1/2}} \nonumber \\
 &&\times  \frac{1}{(k-1)^4-k^4+P^2-2 i P\, (\bm{p}_\perp^2+k^4)^{1/2}}.
 \end{eqnarray}
 
The next step is to split the  integration over $k$ into two parts, one from   $-\infty$ to $0$, and a second one from 
 $0$ to $\infty$. This leads to
 \begin{equation}
 \label{eq:calI}
 \mathcal{I}(P^2)=\mathcal{I}_+(P^2)+\mathcal{I}_- (P^2)
 \end{equation}
 with
 \begin{eqnarray}
  \mathcal{I}_\pm (P^2)&=&\int_{0}^\infty \frac{d k}{2\pi}
 \int \frac{d^{d-2}p_\perp}{(2\pi)^{d-2}}\,\frac{1}{({\bm p}_\perp^2+k^4)^{1/2}}\nonumber\\ &&\times 
  \frac{1}{(k\pm 1)^4-k^4+P^2-2 i P ({\bm
 p}_\perp^2+k^4)^{1/2}},
 \end{eqnarray}
 where the positive values of both square roots are chosen. In the integral over $\bm{p}_\perp$ we perform the angular integrations and make the change of variables $p_\perp\to\theta$ with $p_\perp=k^2\sinh\theta$. This gives
 \begin{equation}
  \label{eq:calIpm}
 \mathcal{I}_\pm (P^2)=\frac{S_{d-3}}{(2\pi)^{d-1}}\int_{0}^\infty d
 \theta\, \sinh^{d-3}\theta \int_0^\infty d k
 \frac{k^{2d-6}}{g_\pm(k,\theta)}
 \end{equation}
 with
 \begin{equation}
   \label{eq:gpm}
 g_\pm (k,\theta)=(k\pm1)^4-k^4+P^2-2 i P k^2\cosh \theta.
 \end{equation}
 
Let us first consider the integral ${\mathcal I}_- (P^2)$. To compute its inner integral in Eq.~\eqref{eq:calIpm} by means of residue calculus, we combine the integrals along paths infinitesimally above and below the real axis to conclude that
 \begin{equation}
    \label{eq:kintC}
 \int_0^\infty d k
 \frac{k^{2d-6}}{g_-(k,\theta)}=\frac{1}{1-e^{4\pi i d}}\int_{\mathcal{C}} d k
 \frac{k^{2d-6}}{g_-(k,\theta)},
 \end{equation}
 where $\mathcal{C}$ is the integration path shown in Fig.~\ref{fig:intPathC}. \begin{figure}[htbp]
   \centering
 \includegraphics[width=0.6\columnwidth]{./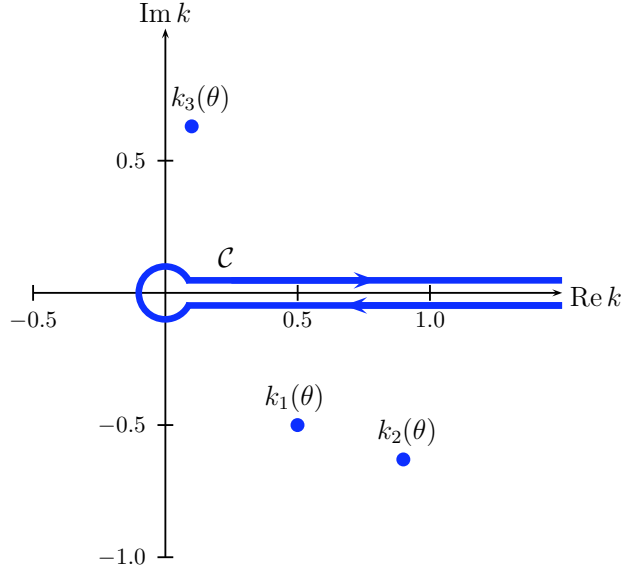}
   \caption{Integration path $\mathcal{C}$ of the integral on the right-hand side of Eq.~\eqref{eq:kintC} and poles $k_j(\theta)$ of the integrand. The integration contour $\mathcal{C}$ can be closed by a circle of radius $R\to\infty$.}
   \label{fig:intPathC}
 \end{figure}
The integrand has three simple poles away from the real axis, located at  the zeros $k_j(\theta)$ of the cubic equation
 \begin{eqnarray}
 \label{eq:gmmin}
\left.g_-(k,\theta)\right|_{k=k_j(\theta)} &=&\left[1-4 k^3-2 i k^2 P \cosh (\theta
   )+6 k^2-4 k+P^2\right]_{k=k_j(\theta)}\nonumber\\ &=&0,\quad j=1,2,3. 
 \end{eqnarray}
 We choose them in such a way that
 \begin{eqnarray}\label{eq:k1to3}
 &k_1(0)=\frac{1}{2}-\frac{i P}{2}, \quad &\pi<\arg k_{1}(\theta)<2\pi,\nonumber\\
 &k_2(0)=\frac{1}{2}+\frac{1}{2}(-1-2 i P)^{1/2},\quad&\pi<\arg k_{2}(\theta)<2\pi,  \\
 &k_3(0)=\frac{1}{2}-\frac{1}{2}(-1-2 i P)^{1/2} ,\quad&0<\arg k_{3}(\theta)<\pi,\nonumber
 \end{eqnarray}
 for $0\le\theta<\infty$ and real $P>0$. 
 
Since the integration contour $\mathcal{C}$ can be closed by a circle of radius $R\to\infty$, we can apply the residue theorem.  Upon exploiting  Eqs.~\eqref{eq:gpm} and \eqref{eq:gmmin}, we find for the residues
  \begin{eqnarray}
 \mathop{\text{Res}}_{k=k_j(\theta)}\bigg[\frac{1}{g_-(k,\theta)}\bigg]&=&\left[\frac{\partial g_-(k,\theta)}{\partial k}\right]_{k=k_j(\theta)}^{-1}=-k_j'(\theta)
 \left[\frac{\partial g_-(k,\theta)}{\partial
 \theta}\right]_{k=k_j(\theta)}^{-1}\nonumber\\ &=& \frac{k_j'(\theta)}{ 2 i P
 [k_j(\theta)]^2 \sinh\theta}.
 \end{eqnarray}
It follows that
 \begin{equation}
 \int_0^\infty d k
 \frac{k^{2d-6}}{g_-(k,\theta)}=\frac{\pi}{(1-e^{4\pi i d})\, P }\sum_{j=1}^3
 \frac{[k_j(\theta)]^{2d-8}}{\sinh \theta}\,k_j'(\theta)\,,
\end{equation}
which inserted into Eq.~\eqref{eq:calIpm}, then yields
 \begin{equation} \label{eq:I-}
 \mathcal{I}_-(P^2)=\frac{S_{d-3}}{2\,P\,(2\pi)^{d-2}}\,\frac{1}{1-e^{4\pi i d
 }}  \sum_{j=1}^3 \int_0^\infty d\theta \, 
 k'_j(\theta)\,\big[k_j^2(\theta)\sinh \theta\big]^{d-4}.
 \end{equation}

 The integral ${\mathcal I}_+(P)$ can be dealt with in a similar manner. The poles of the $k$-integral are now given by the zeros $k_j(\theta)$, $j=4,5,6$, of the function $g_+(k,\theta)$, namely
  \begin{equation}
   \label{eq:k4to6}
 k_{4}(\theta)=k_{1}(\theta)\,e^{-i\pi},\quad k_{5}(\theta)=k_{2}(\theta)\,e^{-i\pi},\quad k_{6}(\theta)=k_{3}(\theta)\,e^{i\pi}.
 \end{equation}
Note that the chosen phases in Eqs.~\eqref{eq:k1to3} and \eqref{eq:k4to6} guarantee that  $0<\arg k_j(\theta)<2\pi$ for all $j=1,...,6$ when $0<\theta<\infty$ and $P>0$. The analog of Eq.~\eqref{eq:I-} becomes
 \begin{equation}\label{eq:I+}
 {\mathcal I}_+(P^2)=-\frac{S_{d-3}}{2\,P\,(2\pi)^{d-2}}\,
 \frac{e^{-2 i \pi d}}{1-e^{4\pi i d }}  \sum_{j=1}^3 e^{2i\pi d\,\delta_{j3}} \int_0^\infty d\theta \,
 k'_j(\theta)\big[k_j^2(\theta)\sinh \theta\big]^{d-4}.
 \end{equation}

 It is convenient to express the integrals over $\theta$ on the right-hand side of this equation in terms of the complex integration variables $\varsigma$:
 \begin{equation}\label{eq:defk}
 \varsigma=\varphi_j(\theta)\equiv k_j(\theta)-1/2,\;\;j=1,2,3.
 \end{equation}
 These maps $\varphi_j:(0,\mathbb{R})\to \mathbb{C}$ parametrize paths $\mathcal{C}_j$ in the complex $k$-plane. The total integral~\eqref{eq:calI} therefore becomes an integral along the curve $\mathcal{C}_3-\mathcal{C}_1-\mathcal{C}_2$,
 \begin{equation}\label{eq:IP}
 \mathcal{I}(P^2)=\frac{S_{d-3}}{4\,P\,(2\pi)^{d-2}\cos (\pi d) } \,
  \int_{\mathcal{C}_3-\mathcal{C}_1-\mathcal{C}_2} d \varsigma \, [p_\perp (\varsigma)]^{d-4},
 \end{equation}
 where
 \begin{equation}\label{eq:pperp}
 p_\perp (\varsigma)=\sqrt{-\frac{4 \varsigma^2+P^2}{16P^2}\,(4P^2+1+8 \varsigma^2+16 \varsigma^4)} 
 \end{equation}
 with $\arg p_\perp
 (0)=\pi/2$.
 As is illustrated in Fig.~\ref{fig:Path2}, curves $\mathcal{C}_1$ and $\mathcal{C}_3$  start at $\varsigma_1\equiv k_1(0)-1/2=-i/2$ and $\varsigma_3\equiv k_3(\infty)-1/2$, respectively, and  terminate both at $\varsigma=-1/2$.  Curve $\mathcal{C}_2$ starts from $\varsigma\equiv k_2(0)-1/2$ and runs towards $1-\infty i$.
 \begin{figure}[htbp]
   \centering
 \includegraphics[width=.6\columnwidth]{./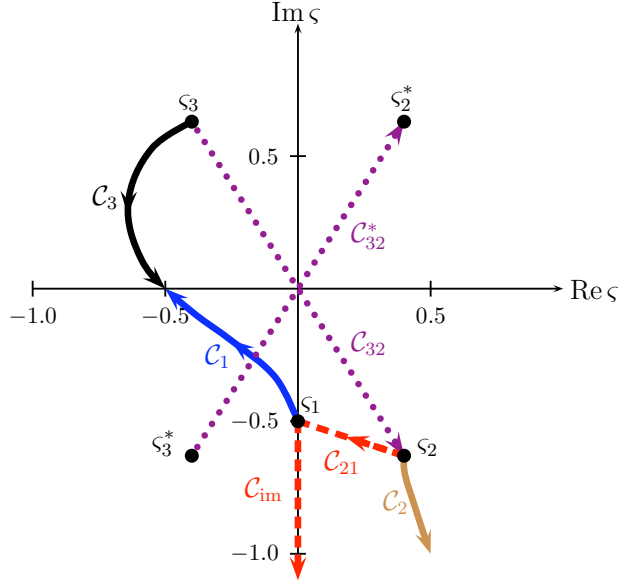}
   \caption{Integration paths used in Eqs.~\eqref{eq:IP} and \eqref{eq:IP32}. The location of the points $\varsigma_1$, $\varsigma_2,\dotsc, \varsigma_3^*$ displayed in the figure corresponds to the choice $P=1$. For further explanation, see main text.}
   \label{fig:Path2}
 \end{figure}
 Let us deform the path $C_2$ into the union of paths $\mathcal{C}_{21}+\mathcal{C}_{\text{im}}$ also displayed in Fig.~\ref{fig:Path2}. Since $\arg p_\perp(\varsigma)=0$ on $\mathcal{C}_{\text{im}}$, the contribution from $\mathcal{C}_{\text{im}}$ to the integral $\int_{\mathcal{-C}_2}d\varsigma$ in Eq.~\eqref{eq:IP} is purely imaginary. Thus it does not contribute to the real part of $ \mathcal{I}(P^2)$ and hence not to $\mathcal{J}(P^2)$ and $J(P)$ [Eqs.~\eqref{eq:JcalJ} and \eqref{eq:JcalI}]. The union of the remaining paths $\mathcal{C}_3-\mathcal{C}_1-\mathcal{C}_{21}$ can be deformed into a single path $C_{32}$ drawn violet and dotted in Fig.~\ref{fig:Path2}. Complex conjugation of the integral $\int_{\mathcal{C}_{32}}$ gives an integral along the complex conjugate path $\mathcal{C}_{{32}}^*$, which starts at the complex conjugate $\varsigma_3^*$ of $\varsigma_3$ and terminates at $\varsigma_2^*$. We thus arrive at the result
\begin{equation}
 \label{eq:IP32}
 J(P)=\frac{S_{d-3}}{8P(2\pi)^{d-2}\cos (\pi d) } \bigg\{ \int_{\mathcal{C}_{32}} d \varsigma \, [p_\perp (\varsigma)]^{d-4}+ e^{-i\pi d}\,\int_{\mathcal{C}^*_{32}} d \varsigma \, [p_\perp (\varsigma)]^{d-4}\bigg\}, 
 \end{equation}
where again $ \arg p_\perp (0)=\pi/2$. Finally, we transform from $\varsigma$ to the integration variable $t=-4 \varsigma^2/w$ with $w=P^2$.

The integral representation~\eqref{eq:IP32} is equivalent to the one given by Eqs.~\eqref{eq:JcalJ}--\eqref{eq:Jcal3}.  We shall prove this for  values of $P$ ($\simeq 1$) that are sufficiently small so that the absolute value $r\equiv |\varsigma_2|=|\varsigma_3|$ satisfies $r>|\varsigma_1|$.  The generalization to the half-axis $0<P<\infty$ then follows by analytic continuation in $P$.

Let us deform the integration path $\mathcal{C}_{32}$  of Fig.~\ref{fig:Path2}  into the full violet curve depicted in Fig.~\ref{fig:Paths3}, and  $\mathcal{C}_{32}^*$ likewise  into the dotted brown curve. 
\begin{figure}[htbp]
\begin{center}
\includegraphics[width=0.5\textwidth]{./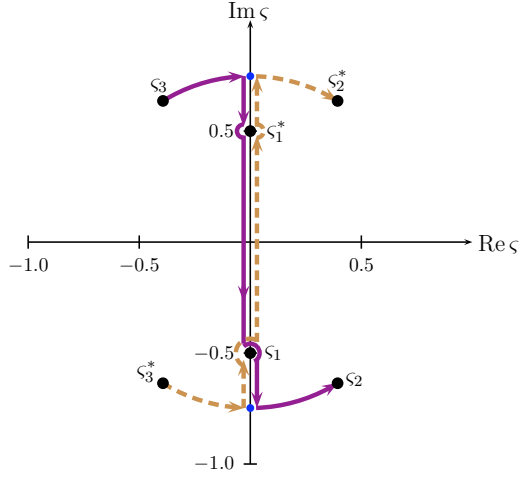}
\caption{Deformation of the integration paths $\mathcal{C}_{32}$ and $\mathcal{C}_{32}^*$ into the full violet and dashed brown curves.}
\label{fig:Paths3}
\end{center}
\end{figure}
Our choice of the phase of $p_\perp(\varsigma)$ such that $\arg p_\perp(ir-0)=\pi/2$ implies that the phase relations $\arg p_\perp(ir-0)=\arg p_\perp(-ir+0)=0$ and  $\arg p_\perp(ir-0)=\arg p_\perp(-ir+0)=0$ hold on the paths $\mathcal{C}_{32}$ and $\mathcal{C}_{32}^*$, respectively. As a consequence, the sum of the contributions from the two integrals between $ir\mp0$ and $\varsigma_1^*\mp 0$   cancel. Likewise, the integrals along the paths between  $\varsigma_1\mp0$ and $-ir\mp 0$ add up to zero.
The contributions from the remaining portions of the paths in Fig.~~\ref{fig:Path2} add up to
\begin{eqnarray}
 \label{eq:IP32a}
 J(P)&=& \frac{S_{d-3}}{8P(2\pi)^{d-2}\cos (\pi d) } \bigg\{ \int_{\varsigma_3}^{\varsigma_2^*} d \varsigma \, 
[p_\perp (\varsigma)]^{d-4} +
\int_{\varsigma_3^*}^{\varsigma_2} d \varsigma \, 
[p_\perp (\varsigma)]^{d-4} \nonumber\\ &&\strut+ 
(1-e^{-i\pi d})\,\int_{\varsigma_1^*}^{\varsigma_1} d \varsigma \, [p_\perp (\varsigma)]^{d-4}\bigg\}, 
 \end{eqnarray} 
where phases of $p_\perp (\varsigma)$ in the three integrals are fixed by the conditions \begin{equation} \label{eq:arg2}
\arg p_\perp (i r)=\arg p_\perp (-i r)=0, \quad
\arg p_\perp (0)=\pi/2. 
\end{equation}
Further, the principal value of the power $ [p_\perp(\varsigma)]^{d-4} $ is to be taken, i.e.\
$\arg [p_\perp(\varsigma)]^{d-4}=(d-4)\arg p_\perp(\varsigma)$. Since the integrands of these integrals are even, Eq.~\eqref{eq:IP32a} can be rewritten as
\begin{equation}
 \label{eq:IP32b}
 J(P)= \frac{  S_{d-3}}{4 P(2\pi)^{d-2}\cos (\pi d) } \bigg\{ \int_{\varsigma_3^*}^{\varsigma_2} d \varsigma \, [p_\perp (\varsigma)]^{d-4}+ (1-e^{-i\pi d})\,\int_{0}^{\varsigma_1} d \varsigma \, [p_\perp (\varsigma)]^{d-4}\bigg\}.
 \end{equation} 
Both integration paths lie in the lower half-plane $\Im\,\varsigma<0$, and the phase
conditions \eqref{eq:arg2} must be taken into account. Changing to  the integration variable $t=-4 \varsigma^2/w$ with $w=P^2$ finally yields the representation given in  Eqs.~\eqref{eq:JcalJ}--\eqref{eq:Jcal3}.
 
 \section{Branching of complex integrals ${\mathcal J}_k (w)$ \label{ap:mon}}

The functions $\mathcal{J}_1(w)$, $\mathcal{J}_3(w)$, and $\mathcal{J}(w)$ we considered in Sec.~\ref{sec:endens} for $0<w<\infty$ can be analytically continued to the complex $w$-plane. These analytic continuations become multivalued functions with four branch points at $w_1=-1$, $w_2= -1/4$, $w_3= 0$, and $w_4= \infty$. The branching of these functions, which we are now going to study, is  described by the monodromy group.

To this end, consider a real value $w_0$ of $w$ with $-1/4<w_0<0$  and $\arg\,w_0=\pi$. For such
 $w_0$, all branch points of the integrands of the integrals in terms of which the functions $\mathcal{J}_1(w)$, $\mathcal{J}_3$, and $\mathcal{J}(w)$ were expressed in Eqs.~\eqref{eq:calJw}--\eqref{eq:Jcal3} are real and given by 
 \begin{equation}
  t_1(w_0)\equiv t_-(w_0)\,,\;\;t_2(w_0)\equiv t_+(w_0)\,,\;\;t_3\equiv 0\,,\;\;t_4\equiv 1\,, 
\end{equation}
 where $t_\mp(w)$ are the zeros~\eqref{eq:tpm} of the function $\rho(t,w)$ introduced in Eq.~\eqref{eq:rho}. They 
 satisfy $t_1(w_0)<t_2(w_0)<t_3<t_4$.

Let $ {\mathcal J}_k (w_-)$, with $k=1,2,3$, denote the integrals
\begin{equation}
  \label{eq:defJks}
 {\mathcal J}_k (w_-)= w_-^{2 \lambda}  \int_{t_k}^{t_{k+1}} dt \,
  [t-t_1 (w_-)]^\lambda \,[t-t_2 (w_-)]^\lambda 
   \,(t-t_3)^{-1/2} \,(t-t_4)^\lambda.
\end{equation}
For $k=1$ and $3$, these definitions comply with Eqs.~\eqref{eq:Jcal1} and \eqref{eq:Jcal3}.
 The integrand $t^{-1/2}\prod_{k=1,2,4}(t-t_k)^\lambda$  is a multivalued function. Following Ref.~\cite{Mas97}, we fix its phase by setting 
 \begin{equation} \label{eq:arg}
  \arg (t-t_j) = \begin{cases}
   0 & \text{if }j \leq k,\\[\medskipamount]
 -\pi & \text{if }k+1 \leq j.
  \end{cases} 
 \end{equation}
This  guarantees that the integrand is an analytical function in the \emph{lower} half-plane $\Im\,t <0$.

The functions $\mathcal{J}_k(w_-)$ can be analytically continued from the interval $(-1/4,0)$ into the complex 
$w$-plane punctured at the branch points $w_i=-1,-1/4,0$. At these branch points, some of the endpoints 
$t_k(w)$ of the integration paths in Eq.~\eqref{eq:defJks} merge or become infinite. Namely, $t_2(w)\to t_3$ as $w\to-1/4$,  $t_{1,2}(w)\to \infty $ as $w\to 0$, and $t_1(w)\to t_3$ as $w\to-1$. To study the branching of the integrals ${\mathcal J}_k(w)$ at the points $w_1$, $w_2$, and $w_3$, let us  change $w$ continuously by moving along loops $\gamma_i$ that emanate from $w_-$ and terminate there, passing counter-clockwise around one of the branch points $w_i$ of the functions $\mathcal{J}_k(w)$, as is  illustrated in Fig.~\ref{fig:wloops}.
\begin{figure}[htbp]
 \begin{center}
 \includegraphics[width=.8\columnwidth]{./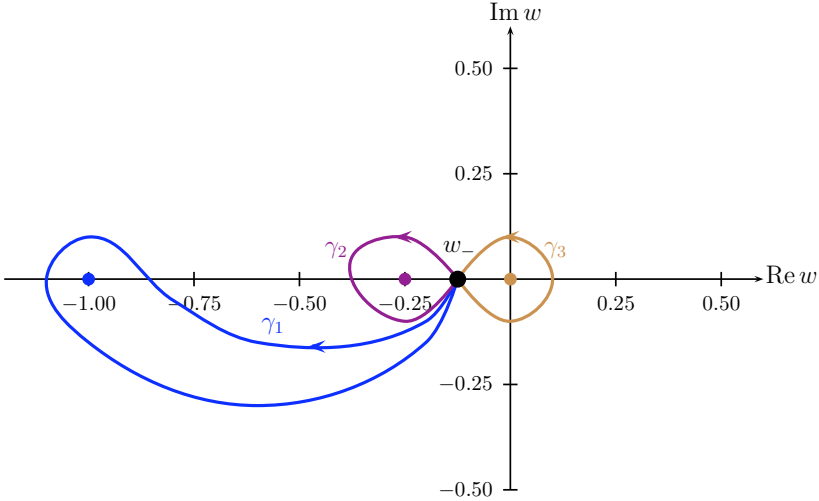}
 \end{center}
  \caption{The branching of the integrals ${\mathcal J}_k (w_0)$, $k=1,2,3$, is characterized completely by the monodromy transformations  $\hat{A}(\gamma_i)$ pertaining to the loops $\gamma_1$, $\gamma_2$, and $\gamma_3$.}
  \label{fig:wloops}
\end{figure}

As $w$ is changed continuously,  the  functions $\mathcal{J}_k (w)$ also change continuously. However, because of the nontrivial monodromy, they do not normally return to the starting values $\mathcal{J}(w_-)$ if the loop is traversed a single time. Let $\hat{A} (\gamma_i)\,\mathcal {J}_k (w_-)$  denote the end value one reaches from $\mathcal {J}_k (w_-)$ by going once around the loop $\gamma_i$. Although  $\hat{A} (\gamma_i)\,\mathcal {J}_k (w_-)$  generally differs from $\mathcal {J}_k (w_-)$, it must be a linear combination of the three linearly independent solutions of the differential equation~\eqref{eq:dif} it solves. One finds
\begin{equation}
  \hat{A} (\gamma_1) \begin{pmatrix}\mathcal{J}_1(w_-)\\ \mathcal{J}_2(w_-)\\ \mathcal{J}_3(w_-)\end{pmatrix} = \begin{pmatrix}1&-b_{\lambda}&b_{\lambda}\\ 0&1+b_{\lambda}&-b_{\lambda}\\ 0&-b_{\lambda}c_{\lambda}&1+b_{\lambda}c_{\lambda}
  \end{pmatrix}
  \begin{pmatrix}\mathcal{J}_1(w_-)\\ \mathcal{J}_2(w_-)\\ \mathcal{J}_3(w_-)\end{pmatrix},
  \end{equation}
  \begin{equation}
  \hat{A} (\gamma_2) \begin{pmatrix}\mathcal{J}_1(w_-)\\ \mathcal{J}_2(w_-)\\ \mathcal{J}_3(w_-)\end{pmatrix} = \begin{pmatrix}1&2&0\\ -1&-c_{\lambda}^{-1}&1-c_{\lambda}^{-1}\\ 0&0&1
  \end{pmatrix}
  \begin{pmatrix}\mathcal{J}_1(w_-)\\ \mathcal{J}_2(w_-)\\ \mathcal{J}_3(w_-)\end{pmatrix},
  \end{equation}
  and
   \begin{equation}
  \hat{A} (\gamma_3) \begin{pmatrix}\mathcal{J}_1(w_-)\\ \mathcal{J}_2(w_-)\\ \mathcal{J}_3(w_-)\end{pmatrix} = \begin{pmatrix}1&0&0\\ 0&-c_{\lambda}&0\\ 0&c_{\lambda}-1&1
  \end{pmatrix}
  \begin{pmatrix}\mathcal{J}_1(w_-)\\ \mathcal{J}_2(w_-)\\ \mathcal{J}_3(w_-),\end{pmatrix}
  \end{equation}
with
\begin{equation}
c_\lambda\equiv e^{2\pi i \lambda }\quad\text{and}\quad b_\lambda\equiv(c_\lambda-1)(c_\lambda^2+1).
\end{equation}

These equations describe the action of the monodromy group on the three basic integrals  ${\mathcal J}_k$, $k=1,2,3$. As an immediate consequence, we obtain for the monodromy group action on the integral
 ${\mathcal J}(w_-)$ the result
 \begin{eqnarray} \label{eq:monc}
  \hat{A} (\gamma_1) {\mathcal J}(w_-) &=& {\mathcal J}(w_-)+i \,b_\lambda\, c_\lambda \,[{\mathcal J}_2(w_-) -{\mathcal J}_3(w_-)],\nonumber\\
  \hat{A} (\gamma_2) {\mathcal J}(w_-) &=& {\mathcal J}(w_-) -i \big[c_\lambda+c_\lambda^{-1}\big]{\mathcal J}_2(w_-),\nonumber  \\
  \hat{A} (\gamma_3) {\mathcal J}(w_-) &=& {\mathcal J}(w_-). 
 \end{eqnarray}
Directly at the upper critical dimension $d^*(1)=9/2$, one has $\lambda=(9/2-4)/2=1/4$, $c_\lambda=i$, and $b_\lambda=0$, as a consequence of which $\hat A (\gamma_k)\mathcal{J}(w_-) =\mathcal{J}(w_-) $ for $k=1,2,3$.


 \end{document}